\documentclass[11pt, a4paper]{article}

\usepackage[utf8]{inputenc}
\usepackage[T1]{fontenc}
\usepackage{lmodern}

\usepackage[margin=1in]{geometry}
\usepackage{setspace}
\usepackage{titlesec}
\titleformat*{\section}{\large\bfseries\boldmath}
\titleformat*{\subsection}{\normalsize\bfseries\boldmath}
\titleformat*{\paragraph}{\normalsize\bfseries\boldmath}
\usepackage{amsmath, amssymb, amsthm, amsfonts, mathtools, enumitem} 
\allowdisplaybreaks

\usepackage[colorinlistoftodos, textsize=footnotesize]{todonotes}
\usepackage[authoryear,round]{natbib}

\definecolor{darkblue}{rgb}{0.0, 0.0, 0.55}
\usepackage[pagebackref=true, colorlinks=true, allcolors=darkblue]{hyperref}
\renewcommand*{\backref}[1]{}
\renewcommand*{\backrefalt}[4]{%
    \ifcase #1%
    \or #2%
    \else #2%
    \fi%
}

\makeatletter
\newtheoremstyle{smallcaps}
  {}% Space above
  {}% Space below
  {\itshape}% Body font (Italic)
  {}% Indent amount
  {\scshape}% Theorem head font (Small Caps)
  {.}% Punctuation after theorem head
  { }% Space after theorem head
  {}% Theorem head spec

\renewenvironment{proof}[1][\proofname]{\par
  \pushQED{\qed}%
  \normalfont \topsep6\p@\@plus6\p@\relax
  \trivlist
  \item[\hskip\labelsep
        \scshape
    #1\@addpunct{.}]\ignorespaces
}{%
  \popQED\endtrivlist\@endpefalse
}
\makeatother
\theoremstyle{smallcaps}
\newtheorem{theorem}{Theorem}[section]
\newtheorem{lemma}[theorem]{Lemma}

\newtheorem{claim}[theorem]{Claim}

\newcommand{\exsub}[2]{\mathbb{E}_{#1}\left[#2\right]}
\newcommand{\exsubpar}[2]{\mathbb{E}_{#1}[#2]}

\newcommand{\eps}{\epsilon}
\newcommand{\bbR}{\mathbb{R}}
\newcommand{\bbZ}{\mathbb{Z}}
\newcommand{\bbN}{\mathbb{N}}
\newcommand{\opt}{\mathrm{OPT}}
\newcommand{\myround}{\mathcal{R}}
\newcommand{\mylcm}{\mathrm{LCM}}

\newcommand{\mysecond}{\mathrm{sec}}
\newcommand{\myprim}{\mathrm{prim}}
\newcommand{\mydisc}{\mathrm{disc}}

\title{\boldmath \textbf{Resource-Constrained Joint Replenishment \\
via Power-of-$m^{1/k}$ Policies}}
\author{
    Danny Segev\thanks{School of Mathematical Sciences and Coller School of Management, Tel Aviv University, Tel Aviv 69978, Israel. Email: \texttt{segevdanny@tauex.tau.ac.il}.}
}
\date{}

\begin{document}

\maketitle

\begin{abstract}
\noindent The continuous-time joint replenishment problem has long served as a foundational inventory management model. Even though its unconstrained setting has seen recent algorithmic advances, the incorporation of resource constraints into this domain precludes the application of newly discovered synchronization techniques. Such constraints arise in a broad spectrum of practical environments where resource consumption is bounded as an aggregate rate over time. Prominent examples include inbound logistics facing volumetric limits on shipping containers, in-plant manufacturing restricted by material handling bandwidth, and administrative policies capping cumulative purchasing budgets. However, for nearly four decades, the prevailing approximation guarantee for resource-constrained joint replenishment has remained $\frac{ 1 }{ \ln 2 } \approx 1.4427$, achieved via classical power-of-$2$ policies.

\medskip \noindent In this paper, we circumvent these structural policy restrictions by devising generalized rounding frameworks, demonstrating that a well-known convex relaxation is much tighter than previously established. In particular, we expand our analytical scope to encompass fractional base expansion factors, randomized shifting, and staggered interleaved grids. Through this multifaceted methodology, we present a sequence of gradually improving performance guarantees. First, by proposing a best-of-two framework that exploits structural asymmetries between deterministic power-of-$m^{1/k}$ policies, we surpass the classical barrier to obtain a $1.3776$-approximation. Second, by injecting a random shift into the logarithmic grid domain and formulating a factor-revealing linear program to optimize a dual-policy approach, we attain a $1.2512$-approximation. Finally, by superimposing a secondary offset grid to subdivide rounding intervals and suppress holding cost inflation, we utilize interleaved policies to arrive at our ultimate approximation ratio of $\frac{5}{6\ln 2} \approx 1.2023$, which is proven to be best-possible for the class of interleaved power-of-$m^{1/k}$ policies.
\end{abstract}

\vspace{1em}
{\small \noindent \textbf{Keywords:} Joint replenishment, resource constraints, approximation algorithms, randomized rounding, interleaved policies.}

\thispagestyle{empty}

% --- Table of Contents & List of Todos ---
\newpage
\thispagestyle{empty}
\tableofcontents
% \newpage
% \thispagestyle{empty}
% \listoftodos[List of Comments]

% --- Start of Content ---
\newpage
\setcounter{page}{1}
\section{Introduction}

Since its formal inception in the mid-1960s, with indirect explorations surfacing even earlier, the continuous-time joint replenishment problem has operated as a foundational inventory management model. Conceptually, the core challenge we are facing centers on the simultaneous lot-sizing of multiple commodities over a given planning horizon, aiming to identify a replenishment policy that minimizes long-run average operating costs. Beyond its theoretical elegance, this framework functions as a major platform for both advancing a wide range of supply chain methodologies and developing highly practical industrial solutions. Given the massive body of work along these lines, spanning rigorous methods, heuristic approaches, experimental studies, real-life applications, and software solutions, providing an exhaustive overview of relevant literature extends beyond the scope of this paper. Instead, readers are directed to extensive surveys \citep{AksoyE88, GoyalS89, MuckstadtR93, KhoujaG08, BastosMNMC17} and dedicated book chapters \citep{SilverP85, Zipkin00, MuckstadtS10, SimchiLeviCB14} for a comprehensive understanding of these developments. 

From a technical perspective, the primary algorithmic obstacle we confront in rigorous treatment of the joint replenishment problem involves efficiently synchronizing multiple economic order quantity models. Viewed in isolation, separately optimizing each commodity requires finding the precise equilibrium point between its own marginal ordering and inventory holding costs. However, once embedded within an integrated system, these commodities become intricately linked through a shared joint ordering cost, triggered every time an order is placed, irrespective of the exact subset of commodities included. Consequently, identifying an optimal policy requires a fine-grained balance, coordinating replenishment cycles to group commodities and maximize the benefit of shared setup expenses, without simultaneously inflating on-hand inventory levels. This delicate interplay between individual and shared operational costs presents fundamental analytical questions, driving continuous research into structural characterizations of rigorously analyzable lot-sizing strategies and their algorithmic implications for several decades.

\paragraph{Classical work.} To address these synchronization challenges, a series of seminal papers introduced the concept of ``power-of-$2$ policies'' \citep{Roundy85, Roundy86, JacksonMM85, MaxwellM85, MuckstadtR87}. This approach operates by rounding optimal solutions of natural convex relaxations into highly structured policies, restricting every replenishment interval to be a power of $2$ multiplied by a shared base period. When the latter period is exogenously predetermined, this mechanism guarantees a long-run average cost that falls within factor $\sqrt{9/8} \approx 1.0606$ of optimal. Furthermore, allowing the base period itself to be optimized tightens this performance bound to a factor of $\frac{1}{\sqrt{2}\ln 2} \approx 1.0201$. For decades, these findings have stood as celebrated breakthroughs in inventory management, with their elegant derivations and broad applicability extensively studied in subsequent research; see, e.g., \citet{FedergruenQZ92}, \citet{MuckstadtR93}, and \citet{TeoB01}. Consequently, a long-standing open question asked whether entirely different coordination mechanisms can outperform power-of-$2$ policies and establish even stronger approximation guarantees.

\paragraph{Recent advances.} Answering this persistent question, our recent work \citep{Segev23JRP} has fully characterized the computational landscape of continuous-time joint replenishment. Specifically, we introduced a new synchronization framework, termed $\Psi$-pairwise alignment, leading to the first polynomial-time approximation scheme, exactly matching earlier intractability results \citep{CohenHillelY18, SchulzT22}. This algorithmic method guarantees that, for any given $\eps > 0$, we can compute a replenishment policy whose long-run average cost falls within factor $1+\eps$ of optimal, requiring a running time of $O(2^{\tilde{O}(1/\eps^3)} \cdot n^{O(1)})$. Building upon this framework, our subsequent investigations \citep{Segev25improved} streamlined the analysis of $\Psi$-pairwise alignment, arriving at an improved running time of $O(2^{\tilde{O}(1/\eps)} \cdot n^{O(1)})$, while simultaneously resolving additional long-standing conjectures. Most notably, we proved that finely tuned evenly spaced policies attain an approximation guarantee of $1.0192$, strictly outperforming their power-of-$2$ counterparts, and proposed a black-box reduction that migrates variable-base performance bounds to fixed-base settings, yielding a $(\frac{1}{\sqrt{2}\ln 2} + \eps)$-approximation.

\paragraph{Resource-constrained joint replenishment.} Despite this recent progress, our alignment techniques exhibit basic limitations when transitioning from idealized models to more complex logistical realities. In particular, their structural reliance on so-called ``approximate representatives'' is very fragile when confronted with constrained generalizations of the joint replenishment problem. Most notably, it is entirely unclear whether $\Psi$-pairwise alignment retains any relevancy with the introduction of even rudimentary capacity restrictions. To bridge these critical gaps, our present work investigates the resource-constrained joint replenishment problem, whose core operational mechanism is similarly structured around the synchronization of multiple economic order quantity models. However, as formally explained in Section~\ref{subsec:model_definition}, the entire system must additionally adhere to strict upper bounds on the frequency, volume, or shared expense of replenishment events. For an elaborate discussion on the theory, practice, and rich history of these constraints, one could consult the classical work of \citet{Dobson87}, \citet{JacksonMM88}, and \citet{Roundy89}, as well as foundational investigations into capacitated lot-sizing and multi-item storage systems by \citet{Anily91}, \citet{GallegoQS96}, and \citet{GayonMRS17}. In what follows, we highlight a number of real-life environments where such capacity requirements naturally arise, especially when resource consumption is bounded as an aggregate rate over time.

In global procurement and maritime logistics, for example, commodities sourced from common overseas suppliers are frequently consolidated into standard shipping containers. Here, strict volumetric and weight limits force decision-makers to synchronize replenishment cycles to maximize container utilization without exceeding aggregate fleet capacity \citep{PorrasD08}. For in-plant logistics and Just-In-Time manufacturing, material handling equipment such as automated guided vehicles or forklifts operates under bounded bandwidth, restricting the frequency at which component batches can be transferred from storage to assembly lines \citep{MoonC06, Hoque06}. Similarly, inbound receiving operations face inspection limitations, where the total labor hours required to process incoming batches cannot exceed the available workforce capacity over a given period \citep[e.g.,][]{Dobson87, JacksonMM88}. In capacitated production-distribution networks, bottleneck machines possess a finite number of setup hours allocated per year \citep[see][]{BitranY82, Roundy89, GayonMRS17}. Finally, spatial and financial limitations play a crucial role when corporate policies restrict the aggregate storage volume or the purchasing budget dedicated to replenishment activities across a fiscal cycle \citep{Anily91, GallegoQS96}. 

To thoroughly examine the computational consequences of these capacity limitations and to properly contextualize our main contributions, we proceed by providing a complete mathematical formulation of the resource-constrained joint replenishment problem in its continuous-time setting. Subsequently, we will review cornerstone algorithmic techniques in this context, consisting of classic approximation guarantees alongside recent proof-of-concept findings. 

\subsection{Model formulation} \label{subsec:model_definition}

\paragraph{The economic order quantity model.} For a gradual exposition, let us begin by introducing the economic order quantity (EOQ) model, forming the basic building block of joint replenishment settings. Here, we are asked to identify the optimal time interval $T$ between successive orders of a single commodity, with the objective of minimizing its long-run average cost over the continuous planning horizon $[0,\infty)$. Specifically, this commodity is assumed to be characterized by a stationary demand rate of $d$, to be fully met upon occurrence, meaning that we do not allow for lost sales or back orders. In this setting, periodic policies simply fix a uniform ordering frequency across the planning horizon. Namely, orders will be placed at each of the points $0, T, 2T, 3T, \ldots$, making the time interval $T$ our sole decision variable to be optimized. To understand the intrinsic tradeoff we are facing, each of these orders incurs a fixed cost of $K > 0$ regardless of its quantity, motivating infrequent orders by itself. However, this trend is counterbalanced by a linear holding cost of $h > 0$, incurred per time unit for each inventory unit in stock, implying that we wish to avoid high inventory levels via frequent orders. As such, the fundamental question is how to determine the time interval $T$, aiming to minimize long-run average ordering and holding costs. 

Based on the preceding discussion, it is straightforward to verify that optimal policies will be placing orders only when their on-hand inventory level is completely exhausted, or put differently, zero-inventory ordering (ZIO) policies are optimal. Therefore, we can succinctly represent the objective function of interest as $C(T) = \frac{ K }{ T } + HT$, with the convention that $H = \frac{ hd }{ 2 }$. Elementary calculus arguments show that this cost function is strictly convex and that its unique minimizer is $T^* = \sqrt{ K / H }$, corresponding to the well-known square-root formula. A host of additional properties in this context can be discovered by consulting authoritative book chapters, such as those of \citet[Sec.~3]{Zipkin00}, \citet[Sec.~2]{MuckstadtS10}, or \citet[Sec.~7.1]{SimchiLeviCB14}, and we will mention particularly useful ones throughout our analysis.

% \begin{claim} \label{clm:EOQ_properties}
% The cost function $C : (0,\infty) \to \bbR_+$ satisfies the following properties:
% \begin{enumerate}
%     \item $C$ is strictly convex.

%     \item The unique minimizer of $C$ is $T^* = \sqrt{ K / H }$.

%     \item $C( \theta T^* ) = \frac{ 1 }{ 2 } \cdot (\theta + \frac{ 1 }{ \theta } ) \cdot C( T^* )$, for every $\theta > 0$.

%     \item $\min \{ C( \alpha ), C( \beta ) \} \leq \frac{ 1 }{ 2 } \cdot ( \sqrt{ \frac{ \beta }{ \alpha } } + \sqrt{ \frac{ \alpha }{ \beta } } ) \cdot C(T)$, for every $\alpha \leq \beta$ and $T \in [\alpha, \beta]$.
% \end{enumerate}
% \end{claim}

\paragraph{The joint replenishment problem.} With these foundations in place, the essence of joint replenishment can be captured by posing the next question: How should we coordinate multiple economic order quantity models, when different commodities are tied together via joint ordering costs? Specifically, we would like to synchronize the lot-sizing of $n$ distinct commodities, where each commodity $i \in [n]$ is coupled with its own EOQ model, parameterized by ordering and holding costs $K_i$ and $H_i$, respectively. As previously explained, setting a time interval of $T_i$ between successive
orders of this commodity would lead to a marginal operational cost of $C_i(T_i) = \frac{ K_i }{ T_i } + H_i T_i$. However, the complicating feature is that we concurrently face joint ordering costs, paying $K_0$ whenever an order is placed, regardless of its particular subset of commodities.

Given these parameters, any joint replenishment policy can be represented through a vector $T = (T_1, \ldots, T_n)$, where $T_i$ stands for the time interval between successive
orders of each commodity $i \in [n]$. For such policies, the first component of our objective function, $\sum_{i \in [n]} C_i( T_i )$, encapsulates the sum of marginal EOQ-based costs. Its second component, which will be designated by $J(T)$, captures long-run average joint ordering costs. Formally, this term is defined as the asymptotic density $J( T ) = K_0 \cdot \lim_{\Delta \to \infty} \frac{ N(T,\Delta) }{ \Delta }$, where $N(T,\Delta)$ stands for the number of joint orders occurring across $[0,\Delta]$ with respect to the time intervals $T_1, \ldots, T_n$. That is, letting ${\cal M}_{T_i,\Delta} = \{ 0, T_i, 2T_i, \ldots, \lfloor \frac{ \Delta }{ T_i } \rfloor \cdot T_i \}$ be the set of integer multiples of $T_i$ within $[0,\Delta]$, we have $N(T,\Delta) = | \bigcup_{i \in [n]} {\cal M}_{T_i,\Delta} |$. Putting both components together, the long-run average operational cost of any joint replenishment policy $T = (T_1, \ldots, T_n)$ can be written as $F(T) = J(T) + \sum_{i \in [n]} C_i( T_i )$.

\paragraph{Incorporating resource constraints.} A common thread passing through the preceding discussion is that different commodities are interacting ``only'' via their common ordering density. The resource-constrained joint replenishment problem introduces a further layer of complexity, where the underlying commodities are interrelated through resource consumption bounds. While serving a wide range of practical purposes, it is instructive to take a production-oriented view on this family of constraints, where assembling each commodity $i$ requires $D$ limited resources. Assuming each order of commodity $i$ requires $\alpha_{id} \geq 0$ units of resource $d \in [D]$, a time interval of $T_i$ induces a long-run average consumption rate of $\frac{\alpha_{id}}{T_i}$ for this resource. In turn, our system is constrained by ensuring the aggregate rate across all commodities does not exceed the total resource capacity, $\beta_d$. By normalizing the latter constant, such constraints can be compactly specified by asking the time intervals $T_1, \ldots, T_n$ to satisfy $\sum_{i \in [n]} \frac{ \alpha_{id} }{ T_i } \leq 1$ for every $d \in [D]$. In summary, our objective is to identify a resource-feasible joint replenishment policy $T = (T_1, \ldots, T_n)$ whose long-run average operational cost $F(T)$ is minimized. 

\subsection{Known approximability results} \label{subsec:related_work}

\paragraph{The classics: Power-of-$2$ policies.} By investigating natural convex relaxations, on which Section~\ref{subsec:conv_relax} sheds further light, the seminal work of \citet{Roundy89} established that their optimal solutions can be rounded to resource-feasible power-of-$2$ policies with bounded loss in optimality. Similar to unconstrained settings, this approach operates by expressing each time interval as a $[1, 2)$-valued base coefficient, multiplied by a power-of-$2$ component. By carefully shifting these base coefficients, Roundy generated polynomially many candidate policies, each coupled with ``ideal'' base coefficients alongside an ordering-frequency lower bound, dictated by our resource restrictions. Subsequently, to correct potential constraint violations, feasibility is restored by scaling these ideal coefficients up to their corresponding lower bound. In the presence of an arbitrary number of resources, this rounding framework guarantees a long-run average cost within factor $\frac{ 1 }{ \ln 2 } \approx 1.4427$ of the convex relaxation optimum. Furthermore, when dealing with a single resource, \citet{Roundy89} proved that this bound improves to $\sqrt{9/8} \approx 1.0606$. Building upon these foundational results, \citet{TeoB01} introduced an alternative algorithmic framework utilizing randomized rounding. Their approach recovers the above-mentioned performance guarantees by mapping convex relaxation solutions to probability distributions over power-of-$2$ policies. This randomized mechanism considerably simplified the derivation of classical findings and facilitated extensions to lot-sizing models exhibiting submodular joint ordering costs.

\paragraph{Recent progress.} Our recent work \citep{Segev25improved} further examined the approximability of resource-constrained joint replenishment by revisiting its convex relaxation, focusing on whether $\frac{ 1 }{ \ln 2 }$ is a true barrier in this domain or an intrinsic feature of existing methods. To devise a proof-of-concept improvement, we introduced a dual-policy rounding framework, which classifies commodities as being ``small'' and ``large'', depending on the magnitude of their relaxed time intervals. By developing two juxtaposed policies, via a deterministic right-shift procedure that slightly saves on the cost inflation of small commodities, and a randomized-shift procedure that leads to analogous savings for large commodities, we established an approximation factor of $1.417 < \frac{ 1 }{ \ln 2 }$.

Additionally, we investigated the computational status of resource-constrained joint replenishment instances whose number of resource constraints, $D$, is relatively small. In this setting, we proposed a discretization method that considers only $O(\frac{1}{\eps} \log \frac{n}{\eps})$ candidate time intervals for each commodity. Then, to overcome the unbounded integrality gaps inherent to standard LP-relaxations, we considered an assignment-based formulation, augmented by exhaustively guessing and fixing assignment variables for ``heavy'' resource-consuming pairs and ``expensive'' cost-incurring pairs. This mechanism allows for near-lossless randomized rounding, through which we construct a $(1+\eps)$-approximate policy in $O( n^{\tilde{O}( D^3/\eps^4 )} )$ time, implying in particular that for a constant number of resources (i.e., $D = O(1)$), our framework provides a polynomial-time approximation scheme. Notably, for the single-resource case, this outcome improves upon the classical $\sqrt{9/8}$-approximation developed by \citet{Roundy89} and \citet{TeoB01}.

\paragraph{Hardness results.} While our work is algorithmically driven, it is instructive to briefly highlight known complexity results, to better understand the theoretical landscape of resource-constrained joint replenishment. As explained as part of our concluding remarks, we still do not know whether the incorporation of such constraints makes the joint replenishment problem provably harder to approximate in comparison to its unconstrained setting. In the latter context, \citet{SchulzT11} were the first to rigorously examine the computational feasibility of obtaining optimal periodic policies in continuous time. Specifically, they established that in the fixed-base setting, an exact polynomial-time algorithm for the joint replenishment problem would imply a similar outcome for integer factorization, thereby establishing novel connections to core challenges in number theory. Subsequently, \citet{CohenHillelY18} proved that the fixed-base setting is strongly NP-hard. This result was strengthened by \citet{SchulzT22}, who demonstrated that NP-hardness persists even for two-commodity instances. Extending their analysis to the variable-base setting, they demonstrated its polynomial-time reducibility to integer factorization. This reduction was subsequently lifted to a strong NP-hardness proof by \citet{TuisovY20}.

\subsection{Main contributions}

The primary contribution of this paper resides in devising a systematic rounding framework that significantly surpasses the best-known approximation guarantees for resource-constrained joint replenishment. While previous approaches solely relied on bounding the cost inflation inherent to power-of-$2$ policies, the present work examines a much broader class of replenishment policies. In particular, the technical domains we study and their analysis enhance traditional ideas by encompassing fractional expansion factors, randomized shifting, and staggered interleaved grids. As outlined below, these developments culminate in the following performance guarantee for general problem instances, with an arbitrary number of capacity constraints.

\begin{theorem} \label{thm:main_result}
The resource-constrained joint replenishment problem can be approximated in polynomial time within factor $\frac{5}{6\ln 2} < 1.2023$ of optimal.
\end{theorem}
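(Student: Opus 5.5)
The plan is to round an optimal solution of the convex relaxation from Section~\ref{subsec:conv_relax} to a randomized interleaved power-of-$2$ policy. I take the relaxation in its standard Roundy form: minimize $\frac{K_0}{T_0}+\sum_{i}\bigl(\frac{K_i}{T_i}+H_iT_i\bigr)$ subject to $T_i\ge T_0$ for all $i$ and $\sum_i \frac{\alpha_{id}}{T_i}\le 1$ for all $d\in[D]$. I will show that the rounded policy is resource-feasible and has expected cost at most $\frac{5}{6\ln 2}$ times the relaxation optimum. The rounding is then derandomized, since only polynomially many distinct shifts produce distinct policies.

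\textbf{The grid and the rounding.} Draw $U$ uniformly from $[0,1)$ and set $q=2^{U}$. Define the interleaved grid
\[
\mathcal{G}_U=\{q\cdot 2^j : j\in\bbZ\}\cup\{q\cdot 3\cdot 2^{j-1}: j\in\bbZ\}.
\]
The secondary points $1.5\cdot q2^j$ cut each primary doubling interval into the two pieces $[1,1.5]$ and $[1.5,2]$ (in units of $q2^j$). Every $T_i$, including $T_0$, is rounded \emph{up} to the smallest grid point $\hat T_i\ge T_i$.

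This gives three properties directly. First, rounding up only lowers frequencies, so $\sum_i\alpha_{id}/\hat T_i\le\sum_i\alpha_{id}/T_i\le 1$ for every $d$, and no repair step is needed. Second, rounding up is monotone, so $\hat T_i\ge\hat T_0$. Third, $\frac{K_i}{\hat T_i}\le\frac{K_i}{T_i}$.

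For the holding terms, $\log_2 T_i$ modulo $1$ is uniform relative to the random grid, so $T_i/(q2^j)$ is log-uniform on $[1,2)$ for the appropriate $j$. Hence
\[
\mathbb{E}\Bigl[\tfrac{\hat T_i}{T_i}\Bigr]=\frac{1}{\ln 2}\left(\int_1^{1.5}\frac{1.5}{t^2}\,dt+\int_{1.5}^{2}\frac{2}{t^2}\,dt\right)=\frac{1}{\ln 2}\Bigl(\tfrac12+\tfrac13\Bigr)=\frac{5}{6\ln 2}.
\]
This is exactly the target ratio. It also explains why it is best possible within this class: it is the pure rounding-up loss of the interleaved grid.

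\textbf{The joint ordering term, which I expect to be the main obstacle.} The interleaved grid is not a divisibility chain: $1.5\cdot q2^j$ is not a multiple of $q2^j$. So the joint-order density is not simply $1/\hat T_0$, and the joint term must be bounded by a union-of-multiples count. I will use that every grid point $\ge \hat T_0$ is an integer multiple of either $\hat T_0$ or the ``other'' grid point adjacent to it. This gives two cases, written in units of $q2^j$.

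If $\hat T_0=1.5$, the surviving grid values are multiples of $1.5$ or of $2$. The union of multiples of $1.5$ and $2$ has density $\frac{1}{1.5}+\frac12-\frac13=1$.

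If $\hat T_0=2$, the surviving values are multiples of $2$ or of $3$. The union has density $\frac12+\frac13-\frac16=\frac23$.

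Since $T_0$ is log-uniform on $[1,2)$ relative to $q2^j$, the expected value of $T_0$ times the joint density is
\[
\frac{1}{\ln 2}\left(\int_1^{1.5} t\cdot 1\,\frac{dt}{t}+\int_{1.5}^{2} t\cdot\tfrac23\,\frac{dt}{t}\right)=\frac{1}{\ln 2}\Bigl(\tfrac12+\tfrac13\Bigr)=\frac{5}{6\ln 2}.
\]
Therefore $\mathbb{E}[J(\hat T)]\le \frac{5}{6\ln 2}\cdot\frac{K_0}{T_0}$. The part of this step I would verify most carefully is the claim that the density is an upper bound even when many commodities sit at different grid levels. This follows because all grid values at least $\hat T_0$ lie in the union of multiples of $\hat T_0$ and the adjacent grid point, so any joint order is already counted there.

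\textbf{Conclusion.} Summing the bounds, every term of $F(\hat T)$ is in expectation at most $\frac{5}{6\ln2}$ times its relaxation counterpart. The relaxation is a lower bound on $\opt$, so $\mathbb{E}[F(\hat T)]\le\frac{5}{6\ln 2}\,\opt$. The rounded policy changes only at the $O(n)$ values of $U$ where some $\log_2 T_i$ or $\log_2 T_i-\log_2 1.5$ crosses a grid boundary. The algorithm therefore solves the relaxation (to within $1+\eps$, absorbed by the strict inequality $<1.2023$), evaluates $F$ at every breakpoint shift, and returns the best policy. This is a deterministic polynomial-time algorithm, which proves Theorem~\ref{thm:main_result}.
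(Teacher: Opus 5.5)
Your rounding scheme and its expected-cost analysis are correct and follow the paper's argument. Your grid is the paper's interleaved grid with $m=2$, $k=1$, $\alpha=3/2$ under a log-uniform shift. Your holding integral is the paper's $\frac{1}{\ln 2}(\alpha+\frac{2}{\alpha}-2)$ at $\alpha=3/2$. Your two cases $\hat T_0=1.5u$ and $\hat T_0=2u$ are exactly the paper's secondary case (density $\frac{3}{2}\cdot\frac{1}{\hat T_0}$) and primary case (density $\frac{4}{3}\cdot\frac{1}{\hat T_0}$), integrated against the same law. Bounding $K_i/\hat T_i$ by $K_i/T_i$ instead of using the exact factor $\frac{7}{12\ln 2}$ is harmless. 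Weak versus strict rounding differs only on a null set of shifts. Your side remark that $\frac{5}{6\ln 2}$ is best possible for the class does not follow from the holding computation alone, but it is not needed for the theorem.

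The step that fails as written is the derandomization. It is not true that the rounded policy changes only at the $O(n)$ breakpoints. Between consecutive breakpoints each $\hat T_i$ equals $2^{U}c_i$ with $c_i$ fixed, so the policy moves continuously with $U$. On each piece its cost has the form $F=A\,2^{-U}+B\,2^{U}$, where $A$ collects the joint and individual ordering terms and $B$ the holding terms. This is convex in $U$ and may be minimized in the interior of a piece. Moreover, with your weak rounding $F$ is right-continuous, so the value at a breakpoint is the left endpoint of the next piece, where the ordering terms are at their largest. Hence ``evaluate $F$ at every breakpoint and return the best'' is not justified by $\min_U F\le\mathbb{E}_U[F]$.

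The repair is the paper's. On each of the $O(n)$ pieces, minimize $A\,2^{-U}+B\,2^{U}$ in closed form (at $2^{U}=\sqrt{A/B}$, clipped to the piece), then return the best candidate; that shift costs at most the expectation. To compute $A$, you can either use the surrogate $K_0\mathcal{D}$ as the paper does, or observe that $J/K_0$ is the density of the multiples of at most two values: the smallest used primary grid point and the smallest used secondary grid point. Without this repair you still have a correct randomized algorithm with the right expected guarantee, but not the deterministic algorithm you claim.
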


\paragraph{Static power-of-$m^{1/k}$ policies.} In Section~\ref{sec:unshifted}, we initiate our algorithmic progression by rounding optimal solutions of a well-known convex relaxation via a static geometric grid, defined by a possibly fractional expansion factor of $m^{1/k}$. By identifying the optimal grid parameters ($m=2$, $k=2$), we establish a baseline approximation guarantee of $\sqrt{2} < 1.4143$ for the resulting policy, $T^{2,2}$. To push beyond this limit, we exploit a structural asymmetry within the rounding operator in play, as its respective multipliers governing joint ordering density and inventory holding costs are unbalanced. Specifically, our optimal single-shot policy $T^{2,2}$ incurs a highly efficient density multiplier alongside a heavy holding cost multiplier, whereas the alternative configuration $T^{2,3}$ exhibits the inverse cost structure. By carefully pairing these policies and selecting the cheaper of two static configurations, we obtain an improved approximation ratio of $1.3776$.

\paragraph{Randomized shifting.} In Section~\ref{sec:randomized_power_of_mk}, to further mitigate the rounding errors incurred by static grids, we inject a uniformly distributed shift $\Theta \sim U[0,1]$ into the logarithmic domain of the expansion factor. In essence, scaling grid elements by $m^{\Theta/k}$ smooths the expected individual ordering and holding cost multipliers across the continuum of $\Theta$ values. We show that an optimal single-shot randomized policy, $T^{3,2,\Theta}$, leverages such smoothing to attain an expected approximation guarantee of $\frac{2(\sqrt{3}-1)}{\ln 3} < 1.3327$. Moreover, exploiting the asymmetry between two randomized policies ($T^{2,1,\Theta}$ and $T^{2,2,\Theta}$) further drops the expected bound to $1.2585$. Finally, moving beyond oblivious randomization, we transition to a proactive instance-dependent selection of $\Theta$. By discretizing our shift parameter and formulating a factor-revealing linear program to evaluate worst-case adversarial cost distributions, we arrive at a strictly better $1.2512$-approximation.

\paragraph{Interleaved policies.} In Section~\ref{sec:Interleaved}, we tackle the expected holding cost inflation inherent to any single geometric grid. For this purpose, we introduce a secondary grid, scaled by a fractional offset of $\alpha$, which is superimposed onto the primary grid to form a denser interleaved structure. This approach subdivides the primary grid's rounding intervals, restricting the maximum possible rounding error for any given commodity. While utilizing a secondary grid inflates joint ordering costs, optimizing the offset $\alpha$ to stagger ordering points ensures that integer multiples of these two grids frequently synchronize. Instantiating this interleaved framework with $m=2$, $k=1$, and $\alpha=3/2$ strikes an effective balance between bounding holding costs and suppressing ordering density blowup. This configuration establishes our ultimate expected approximation ratio of $\frac{5}{6\ln 2} < 1.2023$, which is proven to be best-possible for the entire class of interleaved power-of-$m^{1/k}$ policies.
\section{\texorpdfstring{Static Power-of-$m^{1/k}$ Policies}{}} \label{sec:unshifted}

In this section, we introduce and study the class of static power-of-$m^{1/k}$ policies. This concept is based on deterministically rounding optimal solutions to a well-known convex relaxation of the resource-constrained joint replenishment problem, utilizing a geometric grid with an expansion factor of $m^{1/k}$. Our first finding along these lines shows that by optimally tuning grid parameters, a single-shot power-of-$m^{1/k}$ policy suffices to achieve a worst-case performance guarantee of $\sqrt{2} < 1.4143$. Second, we exploit the structural asymmetry between the ordering density and holding cost multipliers of our rounding procedure: By selecting the cheaper of two carefully matched policies, we obtain an improved approximation ratio of $1.3776$.

\subsection{The convex relaxation} \label{subsec:conv_relax}

Our mechanism for lower-bounding optimal long-run average operational costs is similar to that of \citet{Roundy89} and \citet{TeoB01}, who considered the following convex relaxation:
\begin{alignat}{3} 
& \text{min} \quad && \frac{ K_0 }{ T_{\min} } + \sum_{i \in [n]} \left( \frac{ K_i }{ T_i } + H_i T_i \right) \tag{P} \label{eqn:conv_relax_RCJRP} \\
&\text{s.t.} \quad && T_i \geq T_{\min} \geq 0 \quad && \forall \, i \in [n] \label{eqn:conv_relax_RCJRP_1} \\
& \quad && \sum_{i \in [n]} \frac{ \alpha_{id} }{ T_i } \leq 1 && \forall \, d \in [D] \label{eqn:conv_relax_RCJRP_2}
\end{alignat}
Reading in reverse order, constraint~\eqref{eqn:conv_relax_RCJRP_2} ensures that the time intervals $T_1, \ldots, T_n$ are resource-feasible. Constraint~\eqref{eqn:conv_relax_RCJRP_1} involves the auxiliary decision variable $T_{\min}$, which represents $\min_{ i \in [n] } T_i$ in an optimal solution, since the objective function of~\eqref{eqn:conv_relax_RCJRP} is strictly decreasing in $T_{\min}$. Finally, to derive a closed-form bound on the operational cost $F(T) = J(T) + \sum_{i \in [n]} C_i( T_i )$ of any replenishment policy, this formulation replaces the asymptotic ordering density $J(T)$ with $\frac{ K_0 }{ T_{\min} }$. This term coincides with the individual density of the most frequently ordered commodity, which provides a straightforward lower bound on $J(T)$.

Summarizing common knowledge, we point out that formulation~\eqref{eqn:conv_relax_RCJRP} is a convex program that can be solved to optimality in polynomial time via interior-point methods. Specifically, by transforming our decision variables into frequency-based ones, where $f_i = \frac{ 1 }{ T_i }$, this problem can be cast as a second-order cone program or as a convex geometric program. For a comprehensive overview of such problems and their algorithmic foundations, we refer the reader to \citet[Chap.~4]{Nesterov94} for interior-point theory and to \citet[Sec.~4.4--4.6]{BoydVandenberghe04} for geometric and cone programming. Moving forward, $T^* = (T_{\min}^*, T_1^*, \ldots, T_n^*)$ will designate an optimal solution to~\eqref{eqn:conv_relax_RCJRP}.

\subsection{\texorpdfstring{Power-of-$m^{1/k}$ policies: Preliminaries}{}} \label{subsec:det_mk_prelim}

To describe the mechanics of power-of-$m^{1/k}$ policies, we begin by creating a deterministic geometric grid and elaborating on its induced rounding operator and density coefficient. For a gradual exposition of these concepts, the objects we will be interacting with are formally defined as follows.

\paragraph{The geometric grid.} For a pair of integers $m \geq 2$ and $k \geq 1$ whose values will be optimized later on, let $m^{1/k}$ be the expansion factor of our grid. We restrict our attention to this discrete parametrization, noting that the set $\{ m^{1/k} : m \in \bbZ_{\geq 2}, k \in \bbZ_{\geq 1} \}$ is dense in the interval $(1,\infty)$, thereby covering arbitrarily valued expansion factors via potentially irrational terms. Given $m$, $k$, and $T_{\min}^*$, we consider the geometric grid ${\cal G}_{m,k}$, where $T_{\min}^*$ is multiplied by all integer powers of the expansion factor $m^{1/k}$, namely, ${\cal G}_{m,k} = \{ m^{p/k} \cdot T_{\min}^* : p \in \bbZ \}$. 

\paragraph{The operator $\myround_{m,k}$.} With respect to this grid, the operator $\myround_{m,k} : \bbR_{++} \to \bbR$ rounds its argument up to the nearest strictly greater point in ${\cal G}_{m,k}$, meaning that $\myround_{m,k}( t ) = \min \{ g \in {\cal G}_{m,k} : g > t \}$. We emphasize that this operator is purposely defined with a strict inequality, implying in particular that grid points are not rounded to themselves but rather to their successors. As explained in the sequel, the latter property is an important design feature for our current class of policies. 

\paragraph{The density coefficient ${\cal D}_{m,k}$.} To introduce our next component, which serves an analytical role in bounding joint ordering costs, for any element $g \in {\cal G}_{m,k}$ and $\Delta \geq 0$, let ${\cal M}_{g,\Delta} = \{ 0, g, 2g, \ldots, \lfloor \frac{ \Delta }{ g } \rfloor \cdot g \}$ be the collection of integer multiples of $g$ within the interval $[0,\Delta]$. Then, the density coefficient ${\cal D}_{m,k}$ of ${\cal G}_{m,k}$ is defined as the asymptotic density of the union of these integer multiples, over elements of value at least $\myround_{m,k}( T_{\min}^* )$, i.e., 
\begin{equation} \label{eqn:def_density_det}
{\cal D}_{m,k} ~~=~~ \lim_{\Delta \to \infty} \frac{ | \bigcup_{g \in {\cal G}_{m,k} \cap [\myround_{m,k}( T_{\min}^* ), \infty) } {\cal M}_{g,\Delta} | }{ \Delta } \ . 
\end{equation}
Even though this union is taken over infinitely many sets, it is identical to the finite union of its first $k$ sets, ${\cal M}_{ \myround_{m,k}( T_{\min}^* ), \Delta}, {\cal M}_{m^{1/k} \cdot \myround_{m,k}( T_{\min}^* ),\Delta}, \ldots, {\cal M}_{m^{(k-1)/k} \cdot \myround_{m,k}( T_{\min}^* ),\Delta}$. Indeed, since $m$ is an integer, ${\cal M}_{mg,\Delta} \subseteq {\cal M}_{g,\Delta}$ for every $g \in {\cal G}_{m,k}$, effectively making all sets beyond the first $k$ redundant.

\subsection{Rounding procedure} \label{subsec:det_rounding}

\paragraph{Creating $T^{m,k}$.} Given an optimal solution $T^* = (T_{\min}^*, T_1^*, \ldots, T_n^*)$ to the convex relaxation~\eqref{eqn:conv_relax_RCJRP}, we round this vector via the $\myround_{m,k}$-operator into a power-of-$m^{1/k}$ policy, $T^{m,k}$, setting $T^{m,k}_i = \myround_{m,k}( T_i^* )$ for every commodity $i \in [n]$. As such, since our resource constraints $\{ \sum_{i \in [n]} \frac{ \alpha_{id} }{ T_i } \leq 1 \}_{d \in [D]}$ are upward-closed, and since $\myround_{m,k}$ inflates its argument (i.e., $\myround_{m,k}(t) > t$), the replenishment policy $T^{m,k}$ is clearly resource-feasible.

\paragraph{Parametric performance guarantee.} For the purpose of evaluating the operational cost $F( T^{m,k} )$ of this policy, we observe that the time interval $T^{m,k}_i$ of every commodity must be a grid point in ${\cal G}_{m,k} \cap [\myround_{m,k}( T_{\min}^* ), \infty)$. Therefore, $T^{m,k}$ places joint orders only at integer multiples of these points, precisely corresponding to the union in definition~\eqref{eqn:def_density_det} of the density coefficient ${\cal D}_{m,k}$, implying that $J( T^{m,k} ) \leq K_0 \cdot {\cal D}_{m,k}$. To resolve overlapping integer multiples of relevant grid elements, we derive an inclusion-exclusion-type expression for this coefficient in Claim~\ref{clm:exact_density}, whose proof is presented in Section~\ref{subsec:proof_clm_exact_density}. Here, for any finite set ${\cal N} \subseteq \bbR_{++}$, the smallest positive value that is an integer multiple of every element in ${\cal N}$ will be denoted by $\mylcm({\cal N})$; when these elements are incommensurable, we set $\mylcm({\cal N}) = \infty$.

\begin{claim} \label{clm:exact_density}
${\cal D}_{m,k} = \frac{ 1 }{ \myround_{m,k}( T_{\min}^* ) } \cdot \sum_{\emptyset \neq {\cal N} \subseteq {\cal S}_{m,k}} \frac{ (-1)^{ |{\cal N}| + 1} }{ \mylcm({\cal N}) }$, where ${\cal S}_{m,k} = \{ m^{(\kappa-1)/k} \}_{\kappa \in [k]}$.
\end{claim}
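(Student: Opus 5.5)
Writing $g_0 = \myround_{m,k}( T_{\min}^* )$, the plan is to first reduce the infinite union in~\eqref{eqn:def_density_det} to the finite union over the $k$ grid elements $g_0 \cdot s$ with $s \in {\cal S}_{m,k}$. This was already justified right after the definition: every grid element at least $g_0$ has the form $m^{j} \cdot g_0 \cdot m^{(\kappa-1)/k}$ with $j \geq 0$ and $\kappa \in [k]$, and since $m$ is an integer, ${\cal M}_{m^j g,\Delta} \subseteq {\cal M}_{g,\Delta}$. Next, I will rescale time by $g_0$. The set $\bigcup_{s \in {\cal S}_{m,k}} {\cal M}_{g_0 s,\Delta}$ is in bijection with $\bigcup_{s \in {\cal S}_{m,k}} {\cal M}_{s,\Delta/g_0}$ via $t \mapsto t/g_0$. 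Consequently, it suffices to show that the asymptotic density of $U_\Delta = \bigcup_{s \in {\cal S}_{m,k}} {\cal M}_{s,\Delta}$ equals $\sum_{\emptyset \neq {\cal N}} (-1)^{|{\cal N}|+1}/\mylcm({\cal N})$; the factor $\frac{1}{g_0}$ then appears from $\lim_{\Delta\to\infty} |U_{\Delta/g_0}|/\Delta$.

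For the unscaled density, I will apply finite inclusion--exclusion to $|U_\Delta|$. This gives $|U_\Delta| = \sum_{\emptyset\neq{\cal N}\subseteq{\cal S}_{m,k}} (-1)^{|{\cal N}|+1} |\bigcap_{s\in{\cal N}} {\cal M}_{s,\Delta}|$. The key step is to identify each intersection. A point $t \geq 0$ is an integer multiple of every $s \in {\cal N}$ if and only if either $t$ is an integer multiple of $\mylcm({\cal N})$ (when this quantity is finite), or $t = 0$ (when the elements are incommensurable, i.e.\ $\mylcm({\cal N}) = \infty$). To see this, note that the set $\{t \in \bbR : t \in s\bbZ \ \forall s\in{\cal N}\}$ is an intersection of subgroups of $\bbR$, hence a discrete subgroup. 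It is therefore either $\{0\}$ or $L\bbZ$ for its smallest positive element $L$, and this $L$ is exactly $\mylcm({\cal N})$ by definition. It follows that $|\bigcap_{s\in{\cal N}} {\cal M}_{s,\Delta}|$ equals $\lfloor \Delta/\mylcm({\cal N})\rfloor + 1$ in the commensurable case, and equals $1$ in the incommensurable case.

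Dividing by $\Delta$ and letting $\Delta\to\infty$ then finishes the argument. Each of the finitely many terms converges to $1/\mylcm({\cal N})$, which is interpreted as $0$ when $\mylcm({\cal N})=\infty$, and the $+1$ terms vanish in the limit. Because the sum is finite, the limit passes through it, so the limit in~\eqref{eqn:def_density_det} exists and equals the claimed expression. The only step needing some care is the subgroup characterization of common multiples, specifically making sure the incommensurable convention is consistent with it. I do not expect any of these steps to be a genuine obstacle; one could also remark that in the present setting the elements $m^{(\kappa-1)/k}$ are typically pairwise incommensurable, but the argument does not rely on this.
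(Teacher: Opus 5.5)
Your proposal is correct and follows essentially the same route as the paper: reduce to the finite union of the $k$ sets ${\cal M}_{g_0 s,\Delta}$ with $s \in {\cal S}_{m,k}$, apply inclusion--exclusion, and count each intersection as $\lfloor \Delta/(g_0\cdot\mylcm({\cal N}))\rfloor+1$. Your rescaling by $g_0$ and your subgroup argument make explicit what the paper's final counting identity takes for granted. That argument identifies common multiples with multiples of $\mylcm({\cal N})$, including the incommensurable case where only $0$ survives.
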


This observation allows us to relate the operational cost of $T^{m,k}$ to the convex relaxation optimum in terms of the grid parameters $m$ and $k$, by noting that
\begin{align}
F(T^{m,k}) ~~\leq~~ & K_0 \cdot {\cal D}_{m,k} + \sum_{i \in [n]} \left( \frac{ K_i }{ T^{m,k}_i } + H_i T^{m,k}_i \right) \nonumber \\
~~\leq~~ & \left( \sum_{\emptyset \neq{\cal N} \subseteq {\cal S}_{m,k}} \frac{ (-1)^{ |{\cal N}| + 1} }{ \mylcm({\cal N}) } \right) \cdot \frac{ K_0 }{ \myround_{m,k}( T_{\min}^* ) } + \sum_{i \in [n]} \left( \frac{ K_i }{ \myround_{m,k}( T_i^* ) } + H_i \myround_{m,k}( T_i^* ) \right) \nonumber \\
~~\leq~~ & \frac{ 1 }{ m^{1/k} } \cdot \left( \sum_{\emptyset \neq{\cal N} \subseteq {\cal S}_{m,k}} \frac{ (-1)^{ |{\cal N}| + 1} }{ \mylcm({\cal N}) } \right) \cdot \frac{ K_0 }{ T_{\min}^* } + \sum_{i \in [n]} \frac{ K_i }{ T_i^* } + m^{1/k} \cdot \sum_{i \in [n]} H_i T_i^* \ . \label{eqn:bound_cost_det}
\end{align}
The last inequality holds since $T_i^* < \myround_{m,k}( T_i^* ) \leq m^{1/k} \cdot T_i^*$ for every commodity $i \in [n]$. In addition, $\myround_{m,k}( T_{\min}^* ) = m^{1/k} \cdot T_{\min}^*$, which would not have been the case had $\myround_{m,k}$ been defined with a weak inequality. 

\subsection{Policy instantiation} \label{subsec:policies_det_inst}

\paragraph{Optimal single-shot policy.} To extract the best-possible approximation guarantee out of this mechanism, we proceed by optimizing the grid parameters $m$ and $k$, depending on whether $m^{1/k}$ is rational or not. First, by standard divisibility arguments, $m^{1/k}$ is rational if and only if $m$ is a perfect $k$-th power of an integer. In this case, the expansion factor $m^{1/k}$ is integer-valued, and we may therefore assume $k=1$ without loss of generality. In turn, letting $T^{m,1}$ stand for the resulting policy, our upper bound~\eqref{eqn:bound_cost_det} specializes to 
\[ F(T^{m,1}) ~~\leq~~ \frac{ 1 }{ m } \cdot \frac{ K_0 }{ T_{\min}^* } + \sum_{i \in [n]} \frac{ K_i }{ T_i^* } + m \cdot \sum_{i \in [n]} H_i T_i^* \ , \]
implying that such a policy does not improve on the trivial $2$-approximation, since $m \geq 2$. 

Now, when $m^{1/k}$ is irrational, we observe that the inclusion-exclusion term in~\eqref{eqn:bound_cost_det} collapses to the union bound, as ${\cal S}_{m,k} = \{ m^{(\kappa-1)/k} \}_{\kappa \in [k]}$ consists of incommensurable elements. Consequently, $\sum_{\emptyset \neq{\cal N} \subseteq {\cal S}_{m,k}} \frac{ (-1)^{ |{\cal N}| + 1} }{ \mylcm({\cal N}) } = \sum_{\kappa \in [k]} \frac{ 1 }{ m^{(\kappa-1)/k} } = \frac{ 1 - 1/m }{ 1 - m^{-1/k} }$, leading to a simplified upper bound of
\begin{equation} \label{eqn:bound_det_irrational}
F(T^{m,k}) ~~\leq~~ \underbrace{ \left( 1 - \frac{ 1 }{ m } \right) \cdot \frac{ 1 }{ m^{1/k}-1 } }_{ V_J(m,k) } \cdot \frac{ K_0 }{ T_{\min}^* } + \sum_{i \in [n]} \frac{ K_i }{ T_i^* } + \underbrace{ \vphantom{\left( 1 - \frac{ 1 }{ m } \right) \cdot \frac{ 1 }{ m^{1/k}-1 }} m^{1/k} }_{ V_H(m,k) } \cdot \sum_{i \in [n]} H_i T_i^* \ .
\end{equation}
To derive the best approximation of this form, we wish to determine $m \in \bbZ_{\geq 2}$ and $k \in \bbZ_{\geq 1}$ that minimize the larger coefficient out of $V_J(m,k)$ and $V_H(m,k)$. The next claim, whose proof is provided in Appendix~\ref{app:proof_clm_optimal_base_unshift}, shows that the optimal choice is $m^* = k^* = 2$. 

\begin{claim} \label{clm:optimal_base_unshift}
The integers $m \in \bbZ_{\geq 2}$ and $k \in \bbZ_{\geq 1}$ that minimize $\max \{ V_J(m,k), V_H(m,k) \}$ are $m^* = 2$ and $k^* = 2$, yielding an optimum value of $\sqrt{2}$.
\end{claim}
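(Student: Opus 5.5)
The plan is to reduce the two-parameter integer problem to a one-dimensional comparison in the expansion factor $x = m^{1/k} > 1$. The only coupling between $V_J$ and $V_H$ is through $x$ and the factor $1 - 1/m$. Write
\[
V_J(m,k) = \frac{1 - 1/m}{x - 1}, \qquad V_H(m,k) = x .
\]
For a fixed $x$, $V_J$ increases with $m$, while $V_H = x$ does not depend on $m$. The argument then splits into two cases.

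\textbf{Case $x \geq \sqrt{2}$.} Here $\max\{V_J, V_H\} \geq V_H = x \geq \sqrt{2}$, with equality only when $x = \sqrt{2}$. Since $m^{1/k} = \sqrt{2}$ forces $m = 2^{k/2}$, this means $(m,k) = (2,2)$, $(4,4)$, $(8,6)$, and so on. At $(2,2)$ we get
\[
V_J = \frac{1/2}{\sqrt{2} - 1} = \frac{\sqrt{2} + 1}{2} \approx 1.207 \leq \sqrt{2},
\]
so the maximum equals $\sqrt{2}$ there. For larger $m$ with the same $x$, $V_J$ only grows, so those pairs are no better.

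\textbf{Case $1 < x < \sqrt{2}$.} I will show that $V_J > \sqrt{2}$. Because $m \geq 2$, we have $1 - 1/m \geq 1/2$, so
\[
V_J \geq \frac{1}{2(x-1)}.
\]
This exceeds $\sqrt{2}$ exactly when $x < 1 + \frac{1}{2\sqrt{2}} \approx 1.3536$. That leaves the window $x \in [1.3536, \sqrt{2})$, where the crude bound is not enough.

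In this window I will use the finer factor $1 - 1/m$ with $m = x^k$. The condition $V_J \leq \sqrt{2}$ reads
\[
1 - x^{-k} \leq \sqrt{2}\,(x - 1).
\]
The right-hand side is less than $\sqrt{2}(\sqrt{2} - 1) = 2 - \sqrt{2} \approx 0.586$. So we would need $x^{-k} > \sqrt{2} - 1$, that is, $m = x^k < \sqrt{2} + 1 \approx 2.414$. Hence $m = 2$.

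With $m = 2$ and $x = 2^{1/k} \in [1.3536, 1.4142)$: $k = 2$ gives exactly $\sqrt{2}$, which is excluded from this case. $k = 3$ gives $2^{1/3} \approx 1.26$, which is outside the window. No other integer $k$ lands in the window, so this case is empty. Equivalently, the exact check at $(m,k) = (2,3)$ gives $V_J = 0.5/0.26 \approx 1.92$. Thus every pair other than those with $x = \sqrt{2}$ has $\max\{V_J, V_H\} > \sqrt{2}$.

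The main obstacle is exactly this window near $\sqrt{2}$, because the crude bound $1 - 1/m \geq 1/2$ is too weak there. I will handle it by turning the inequality into the upper bound $m < 1 + \sqrt{2}$ as above, which forces $m = 2$, and then enumerating $k$.

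For uniqueness in the minimizer statement, $(2,2)$ should be identified as the optimum representation. I will note that the other pairs with $x = \sqrt{2}$, such as $(4,4)$, have strictly larger $V_J$ but the same maximum $\sqrt{2}$, since there $V_J = (1 - 1/m)/(\sqrt{2} - 1) \leq 1/(\sqrt{2} - 1) \approx 2.41$. I will check $(4,4)$ explicitly: $V_J = 0.75/0.414 \approx 1.81 > \sqrt{2}$, so its maximum exceeds $\sqrt{2}$. The same holds for all $m \geq 4$, because we need $1 - 1/m \leq \sqrt{2}(\sqrt{2} - 1) \approx 0.586$, which forces $m \leq 2$. So $(2,2)$ is the unique minimizer, with value $\sqrt{2}$.
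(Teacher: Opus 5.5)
Your proof is correct, but it takes a different route from the paper. The paper relaxes $k$ to a real variable $x$ and computes the balanced continuous minimum $\min_{x>0} V(m,x) = \frac{1}{2}\bigl(1+\sqrt{5-4/m}\bigr)$. Because this value increases in $m$, every $m \geq 3$ is dismissed at once, since $\frac{1}{2}\bigl(1+\sqrt{11/3}\bigr) > \sqrt{2}$. For $m=2$, the paper uses unimodality in $x$ around $x_2^* \approx 2.2223$ to reduce to comparing $k=2$ and $k=3$.

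You instead parametrize by the expansion factor $x = m^{1/k}$ and argue directly against the target value $\sqrt{2}$. The bound $V_H = x$ disposes of all pairs with $x \geq \sqrt{2}$. For $x < \sqrt{2}$, the inequality $V_J \leq \sqrt{2}$ forces $m < 1+\sqrt{2}$, hence $m = 2$, and no $2^{1/k}$ then lands in the required range.

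What each approach buys:
\begin{itemize}
\item Your argument is entirely elementary: it needs no balancing computation, no unimodality, and no monotonicity-in-$m$ fact.
\item It also settles the ties at $x=\sqrt{2}$, such as $(4,4)$ and $(8,6)$, explicitly, so it gives uniqueness of $(2,2)$ cleanly.
\item The paper's relaxation discovers the optimum rather than verifying a value known in advance.
\item The relaxation also explains why only $k\in\{2,3\}$ are candidates when $m=2$.
\end{itemize}

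Two small remarks on your write-up. First, the preliminary split at $x = 1+\frac{1}{2\sqrt{2}}$ is not needed, because your derivation of $m < 1+\sqrt{2}$ is valid for every $x<\sqrt{2}$. Second, the sentence saying that the other pairs with $x=\sqrt{2}$ have ``the same maximum $\sqrt{2}$'' is wrong as written and contradicts your next line. Delete it and keep only the computation showing $V_J = \frac{1-1/m}{\sqrt{2}-1} > \sqrt{2}$ for $m\geq 4$.
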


We have just shown that the policy obtained, $T^{2,2}$, is associated with a long-run average operational cost within factor $\sqrt{2} < 1.4143$ of optimal. This result by itself is already an improvement over the $1.417$-approximation attained by our previous work in this context \citep{Segev25improved}. 

\begin{lemma} \label{lem:approx_det_2_2}
$F( T^{2,2} ) \leq \sqrt{2} \cdot \opt\eqref{eqn:conv_relax_RCJRP} $.
\end{lemma}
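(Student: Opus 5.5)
The plan is to specialize the bound~\eqref{eqn:bound_det_irrational} to $m=k=2$ and then compare the result term by term with the objective of~\eqref{eqn:conv_relax_RCJRP} at its optimum $T^*$.

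First, I will confirm that~\eqref{eqn:bound_det_irrational} applies. Its derivation required $m^{1/k}$ to be irrational, so that ${\cal S}_{2,2} = \{1, \sqrt{2}\}$ consists of incommensurable elements and $\mylcm(\{1,\sqrt{2}\}) = \infty$. Since $\sqrt{2}$ is irrational, Claim~\ref{clm:exact_density} then gives the sum $1 + \frac{1}{\sqrt{2}}$. Multiplying by $\frac{1}{m^{1/k}} = \frac{1}{\sqrt{2}}$ yields $V_J(2,2) = \frac{1}{\sqrt{2}} + \frac{1}{2}$. As a check, this agrees with $(1-\frac{1}{2})\cdot\frac{1}{\sqrt{2}-1} = \frac{\sqrt{2}+1}{2}$. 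Likewise $V_H(2,2) = \sqrt{2}$. Both values are exactly those computed in Claim~\ref{clm:optimal_base_unshift}, which certifies that $\max\{V_J(2,2), V_H(2,2)\} = \sqrt{2}$. Directly, $\frac{\sqrt{2}+1}{2} \approx 1.207 \leq \sqrt{2}$.

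Second, I will bound each of the three terms of~\eqref{eqn:bound_det_irrational} by $\sqrt{2}$ times its counterpart in the relaxation objective. The joint term satisfies $V_J(2,2)\cdot\frac{K_0}{T^*_{\min}} \leq \sqrt{2}\cdot\frac{K_0}{T^*_{\min}}$. The ordering terms satisfy $\sum_i \frac{K_i}{T_i^*} \leq \sqrt{2}\sum_i \frac{K_i}{T_i^*}$. The holding terms carry the coefficient $\sqrt{2}$ exactly. All three quantities are nonnegative, because $K_0, K_i, H_i > 0$ and $T^*_i \geq T^*_{\min} > 0$.

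Third, I will sum these inequalities to obtain
\[ F(T^{2,2}) ~~\leq~~ \sqrt{2}\cdot\left(\frac{K_0}{T^*_{\min}} + \sum_{i\in[n]}\left(\frac{K_i}{T_i^*} + H_iT_i^*\right)\right) ~~=~~ \sqrt{2}\cdot\opt\eqref{eqn:conv_relax_RCJRP}. \]

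The only step that needs care is justifying $T^*_{\min} > 0$, so that the grid ${\cal G}_{2,2}$ and $\myround_{2,2}$ are well defined. This holds because $K_0 > 0$ makes the objective infinite at $T_{\min} = 0$. Resource feasibility of $T^{2,2}$ was already established in Section~\ref{subsec:det_rounding}.
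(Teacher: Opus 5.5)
Your proposal is correct and follows the paper's own route: you specialize the irrational-case bound~\eqref{eqn:bound_det_irrational} to $m=k=2$, use $V_J(2,2) = \frac{\sqrt{2}+1}{2} \leq \sqrt{2} = V_H(2,2)$ (as in Claim~\ref{clm:optimal_base_unshift}) together with the unit coefficient on $\sum_{i \in [n]} \frac{K_i}{T_i^*}$, and sum against $\opt\eqref{eqn:conv_relax_RCJRP}$. Your remark that $T^*_{\min} > 0$ because $K_0 > 0$ is a harmless detail that the paper leaves implicit.
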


\paragraph{Best-of-both-worlds policy.} Our next enhancement is derived by noting that, while $m^* = k^* = 2$ is an optimal choice in the sense of minimizing $\max \{ V_J(m,k), V_H(m,k) \}$, these two terms are not perfectly balanced, due to the integrality restriction on $m$ and $k$. Specifically, in light of the irrational-case bound~\eqref{eqn:bound_det_irrational}, 
\begin{equation} \label{eqn:def_FT_22}
F(T^{2,2}) ~~\leq~~ \frac{ \sqrt{2} + 1}{ 2 } \cdot \frac{ K_0 }{ T_{\min}^* } + \sum_{i \in [n]} \frac{ K_i }{ T_i^* } + \sqrt{2} \cdot \sum_{i \in [n]} H_i T_i^* \ , 
\end{equation}
where the ordering density component has a multiplier of $\frac{ \sqrt{2} + 1}{ 2 } \approx 1.2071$, much better than the $\sqrt{2}$-multiplier of the holding cost component. To exploit this structure, we consider the policy $T^{2,3}$, which exhibits the inverse asymmetry. Here,
\begin{equation} \label{eqn:def_FT_23}
F(T^{2,3}) ~~\leq~~ \frac{ 1}{ 2(2^{1/3}-1) } \cdot \frac{ K_0 }{ T_{\min}^* } + \sum_{i \in [n]} \frac{ K_i }{ T_i^* } + 2^{1/3} \cdot \sum_{i \in [n]} H_i T_i^* \ , 
\end{equation}
meaning that we incur an ordering density multiplier of $\frac{ 1}{ 2(2^{1/3}-1) } \approx 1.9236$, in contrast to a holding cost multiplier of only $2^{1/3}$, which is strictly better than $\sqrt{2}$. Consequently, by picking the cheaper of these two policies, we improve on the worst-case performance guarantee of $T^{2,2}$ by itself, as shown below.

\begin{lemma} \label{lem:approx_better_2det}
$\min \{ F( T^{2,2} ), F( T^{2,3} ) \} \leq 1.3776 \cdot \opt\eqref{eqn:conv_relax_RCJRP} $.
\end{lemma}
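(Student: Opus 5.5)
The plan is to bound the minimum of the two costs by a fixed convex combination of them, and then pick the weight so that both coefficients are balanced. Write $\mathcal{J} = \frac{K_0}{T_{\min}^*}$, $\mathcal{K} = \sum_{i \in [n]} \frac{K_i}{T_i^*}$ and $\mathcal{H} = \sum_{i \in [n]} H_i T_i^*$. All three are nonnegative and $\opt\eqref{eqn:conv_relax_RCJRP} = \mathcal{J} + \mathcal{K} + \mathcal{H}$. For any $\lambda \in [0,1]$ we have $\min\{ F(T^{2,2}), F(T^{2,3}) \} \leq \lambda F(T^{2,2}) + (1-\lambda) F(T^{2,3})$. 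Substituting the bounds~\eqref{eqn:def_FT_22} and~\eqref{eqn:def_FT_23} gives an upper bound of the form
\[ c_J(\lambda) \cdot \mathcal{J} + \mathcal{K} + c_H(\lambda) \cdot \mathcal{H} \ , \]
where the two coefficients are
\[ c_J(\lambda) = \lambda \cdot \frac{\sqrt{2}+1}{2} + (1-\lambda) \cdot \frac{1}{2(2^{1/3}-1)} \quad \text{and} \quad c_H(\lambda) = \lambda \sqrt{2} + (1-\lambda) 2^{1/3} \ . \]
The coefficient of $\mathcal{K}$ is $1$ in both policies, so it stays $1$. It therefore suffices to exhibit one $\lambda$ with $\max\{c_J(\lambda), c_H(\lambda)\} \leq 1.3776$. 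Since both coefficients are at least $1$, this yields the claimed bound on $\mathcal{J}+\mathcal{K}+\mathcal{H}$.

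Next I will choose $\lambda$ to equalize the coefficients. The function $c_J$ is decreasing in $\lambda$, because $\frac{\sqrt{2}+1}{2} < \frac{1}{2(2^{1/3}-1)}$. The function $c_H$ is increasing, because $2^{1/3} < \sqrt{2}$. Solving $c_J(\lambda) = c_H(\lambda)$ gives
\[ \lambda^* = \frac{ \frac{1}{2(2^{1/3}-1)} - 2^{1/3} }{ \frac{1}{2(2^{1/3}-1)} - 2^{1/3} + \sqrt{2} - \frac{\sqrt{2}+1}{2} } \approx \frac{0.6637}{0.8708} \approx 0.762 \ . \]
At this weight the common value is $c_H(\lambda^*) \approx 0.762 \cdot 1.41421 + 0.238 \cdot 1.25992 \approx 1.3775$.

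The only delicate step is the numerical verification at the precision required, since the margin below $1.3776$ is very small. To keep it rigorous, I will fix a concrete rational weight close to $\lambda^*$, for instance $\lambda = 0.762$. I will then bound the irrational constants $\sqrt{2}$, $2^{1/3}$ and $\frac{1}{2(2^{1/3}-1)}$ by explicit rational upper bounds with four or five correct decimals. With these, I will check directly that both $c_J(0.762)$ and $c_H(0.762)$ are at most $1.3776$. This completes the argument, and no further structural ingredient is needed.
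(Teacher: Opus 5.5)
Your strategy is the same as the paper's: bound the minimum by a convex combination of \eqref{eqn:def_FT_22} and \eqref{eqn:def_FT_23}, then choose the weight that balances the density and holding multipliers. The gap is in the step you yourself flagged as delicate. The concrete weight $\lambda = 0.762$ does not pass the check you describe.

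Use $\frac{1}{2(2^{1/3}-1)} = \frac{1+2^{1/3}+2^{2/3}}{2} \approx 1.9236611$ and $\frac{\sqrt{2}+1}{2} \approx 1.2071068$. Then $c_J(0.762) \approx 0.762 \cdot 1.2071068 + 0.238 \cdot 1.9236611 \approx 1.37765 > 1.3776$. You only evaluated $c_H(0.762) \approx 1.37749$. The failure comes from the asymmetric slopes:
\begin{itemize}
\item $c_J$ decreases at rate about $0.7166$ in $\lambda$, while $c_H$ increases at rate only about $0.1543$;
\item the slack at the balance point $\lambda^* \approx 0.762178$, where the common value is about $1.377519$, is only about $8 \cdot 10^{-5}$;
\item rounding $\lambda^*$ down by about $1.8 \cdot 10^{-4}$ therefore raises $c_J$ by roughly $1.3 \cdot 10^{-4}$, which overshoots $1.3776$.
\end{itemize}

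Solving the two constraints pins $\lambda$ to a narrow window. The bound $c_J(\lambda) \leq 1.3776$ holds iff $\lambda \geq 0.762065\ldots$, and $c_H(\lambda) \leq 1.3776$ holds iff $\lambda \leq 0.762700\ldots$. The paper takes $\lambda = 0.76218$, where both coefficients are about $1.37752$; $\lambda = 0.7622$ also works. Because the remaining slack is below $10^{-4}$, your rational enclosures need at least five correct decimals. For instance, with $\sqrt{2} \leq 1.4143$ and $2^{1/3} \leq 1.2600$, the resulting upper bound on $c_H(0.76218)$ is about $1.377604$, which already exceeds $1.3776$. With these two corrections your argument goes through.
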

\begin{proof}
By synthesizing the upper bounds~\eqref{eqn:def_FT_22} and~\eqref{eqn:def_FT_23} on the operational costs of $T^{2,2}$ and $T^{2,3}$, respectively, we conclude that
\begin{align}
\min & \{ F( T^{2,2} ), F( T^{2,3} ) \} \nonumber \\
& \leq~~ \min_{\lambda \in [0,1]} \left\{ \lambda \cdot \left( \frac{ \sqrt{2} + 1}{ 2 } \cdot \frac{ K_0 }{ T_{\min}^* } + \sum_{i \in [n]} \frac{ K_i }{ T_i^* } + \sqrt{2} \cdot \sum_{i \in [n]} H_i T_i^* \right) \right. \nonumber \\
& \qquad \qquad \qquad \left. \mbox{} + (1-\lambda) \cdot \left( \frac{ 1}{ 2(2^{1/3}-1) } \cdot \frac{ K_0 }{ T_{\min}^* } + \sum_{i \in [n]} \frac{ K_i }{ T_i^* } + 2^{1/3} \cdot \sum_{i \in [n]} H_i T_i^* \right) \right\} \nonumber \\
& \leq~~ 0.76218 \cdot \left( \frac{ \sqrt{2} + 1}{ 2 } \cdot \frac{ K_0 }{ T_{\min}^* } + \sum_{i \in [n]} \frac{ K_i }{ T_i^* } + \sqrt{2} \cdot \sum_{i \in [n]} H_i T_i^* \right) \nonumber \\
& \qquad \mbox{} + (1-0.76218) \cdot \left( \frac{ 1}{ 2(2^{1/3}-1) } \cdot \frac{ K_0 }{ T_{\min}^* } + \sum_{i \in [n]} \frac{ K_i }{ T_i^* } + 2^{1/3} \cdot \sum_{i \in [n]} H_i T_i^* \right) \label{eqn:lem_approx_better_2det_1} \\
& \leq~~ 1.3776 \cdot \left( \frac{ K_0 }{ T_{\min}^* } + \sum_{i \in [n]} \left( \frac{ K_i }{ T_i^* } + H_i T_i^* \right) \right) \nonumber \\
& =~~ 1.3776 \cdot \opt\eqref{eqn:conv_relax_RCJRP} \ . \nonumber
\end{align}
We remark that the constant $0.76218$ appearing in inequality~\eqref{eqn:lem_approx_better_2det_1} is a truncation of the one that balances between the density and holding multipliers, i.e., $\frac{ \sqrt{2} + 1}{ 2 } \cdot \lambda + \frac{ 1}{ 2(2^{1/3}-1) } \cdot (1-\lambda) = \sqrt{2} \cdot \lambda + 2^{1/3} \cdot (1-\lambda)$.
\end{proof}

\subsection{Proof of Claim~\ref{clm:exact_density}} \label{subsec:proof_clm_exact_density}

In order to compute ${\cal D}_{m,k}$, according to definition~\eqref{eqn:def_density_det}, we have
\begin{align}
{\cal D}_{m,k} ~~=~~ & \lim_{\Delta \to \infty} \frac{ | \bigcup_{g \in {\cal G}_{m,k} \cap [\myround_{m,k}( T_{\min}^* ), \infty) } {\cal M}_{g,\Delta} | }{ \Delta } \nonumber \\
=~~ & \lim_{\Delta \to \infty} \frac{ | \bigcup_{\kappa \in [k] } {\cal M}_{m^{(\kappa-1)/k} \cdot \myround_{m,k}( T_{\min}^* ),\Delta} | }{ \Delta } \label{eqn:lem_bound_unshifted_1} \\
=~~ & \sum_{\emptyset \neq{\cal N} \subseteq [k] } (-1)^{ |{\cal N}|+1 } \cdot \lim_{\Delta \to \infty} \frac{ | \bigcap_{\kappa \in {\cal N}} {\cal M}_{m^{(\kappa-1)/k} \cdot \myround_{m,k}( T_{\min}^* ),\Delta} | }{ \Delta } \label{eqn:lem_bound_unshifted_2} \\
=~~ & \frac{ 1 }{ \myround_{m,k}( T_{\min}^* ) } \cdot \sum_{\emptyset \neq{\cal N} \subseteq {\cal S}_{m,k}} \frac{ (-1)^{ |{\cal N}| + 1} }{ \mylcm({\cal N}) } \ . \label{eqn:lem_bound_unshifted_3}
\end{align}
Here, equality~\eqref{eqn:lem_bound_unshifted_1} holds since the union $\bigcup_{g \in {\cal G}_{m,k} \cap [\myround_{m,k}( T_{\min}^* ), \infty) } {\cal M}_{g,\Delta}$ is identical to the finite union of its first $k$ sets, $\{ {\cal M}_{ m^{(\kappa-1)/k} \cdot\myround_{m,k}( T_{\min}^* ), \Delta } \}_{\kappa \in [k]}$, as explained in Section~\ref{subsec:det_mk_prelim}. Equality~\eqref{eqn:lem_bound_unshifted_2} follows from the inclusion-exclusion principle. Finally, recalling that ${\cal S}_{m,k} = \{ m^{(\kappa-1)/k} \}_{\kappa \in [k]}$, equality~\eqref{eqn:lem_bound_unshifted_3} is obtained by observing that
\[ \left| \bigcap_{\kappa \in {\cal N}} {\cal M}_{m^{(\kappa-1)/k} \cdot \myround_{m,k}( T_{\min}^* ),\Delta} \right| ~~=~~ \left\lfloor \frac{ \Delta }{ \myround_{m,k}( T_{\min}^* ) \cdot \mylcm( \{ m^{(\kappa-1)/k} \}_{\kappa \in {\cal N}} ) } \right\rfloor + 1 \ . \]
\section{\texorpdfstring{Shifted Power-of-$m^{1/k}$ Policies}{}} \label{sec:randomized_power_of_mk}

In what follows, we expand our algorithmic framework by introducing the class of shifted power-of-$m^{1/k}$ policies. Building upon the geometric rounding approach of Section~\ref{sec:unshifted}, this concept injects a randomized shift into the logarithmic grid domain to smooth out rounding errors. Our first finding demonstrates that, by optimally tuning grid parameters, a single-shot policy is capable of attaining an expected performance guarantee of $\frac{2(\sqrt{3}-1)}{\ln 3} < 1.3327$. Second, we once again exploit the structural asymmetry between the ordering density and holding cost multipliers of our rounding procedure. By picking the cheaper of two carefully constructed randomized policies, we obtain an improved expected approximation ratio of $1.2585$. Finally, we move beyond oblivious randomization to a proactive instance-dependent shift selection. Utilizing a factor-revealing linear program, we prove that by choosing an optimal shift for each specific instance, a tighter approximation guarantee of $1.2512$ can be established.

\subsection{Introducing randomization} \label{subsec:rand_mk_prelim}

To take the performance guarantees of Section~\ref{sec:unshifted} one step further, we augment power-of-$m^{1/k}$ policies with randomized shifting. Specifically, by revisiting earlier definitions and notation, our geometric grid, rounding operator, and density coefficient will incorporate randomization as follows.

\paragraph{The geometric grid.} We remind the reader that given $m \in \bbZ_{\geq 2}$, $k \in \bbZ_{\geq 1}$, and $T_{\min}^*$, the static geometric grid ${\cal G}_{m,k} = \{ m^{p/k} \cdot T_{\min}^* : p \in \bbZ \}$ was defined through multiplying $T_{\min}^*$ by all integer powers of the expansion factor $m^{1/k}$. Moving forward, rather than considering ${\cal G}_{m,k}$, we scale its elements by a factor of $m^{\Theta/k}$, where $\Theta \sim U[0,1]$ represents a uniformly distributed shift in the logarithmic domain, thereby creating the random grid ${\cal G}_{m,k, \Theta} = \{ m^{(p + \Theta)/k} \cdot T_{\min}^* : p \in \bbZ \}$.

\paragraph{The operator $\myround_{m,k, \Theta}$ and density coefficient ${\cal D}_{m,k, \Theta}$.} The first consequence of this extension is that, when any $t > 0$ is rounded up to the strictly greater nearest point in ${\cal G}_{m,k, \Theta}$, the latter point is clearly random. As such, our rounding operator becomes the random function $\myround_{m,k, \Theta}( t ) = \min \{ g \in {\cal G}_{m,k, \Theta} : g > t \}$. Similarly, once we take all integer multiples of grid elements valued at least $\myround_{m,k, \Theta}( T_{\min}^* )$, their asymptotic density is random as well, with the density coefficient of ${\cal G}_{m,k, \Theta}$ prescribed by
\begin{equation} \label{eqn:def_density_rand}
{\cal D}_{m,k, \Theta} ~~=~~ \lim_{\Delta \to \infty} \frac{ | \bigcup_{g \in {\cal G}_{m,k, \Theta} \cap [\myround_{m,k, \Theta}( T_{\min}^* ), \infty) } {\cal M}_{g,\Delta} | }{ \Delta } \ . 
\end{equation} 

\paragraph{Creating $T^{m,k, \Theta}$.} Given an optimal solution $T^* = (T_{\min}^*, T_1^*, \ldots, T_n^*)$ to the convex relaxation~\eqref{eqn:conv_relax_RCJRP}, we define the randomized replenishment policy $T^{m,k, \Theta}$ via the $\myround_{m,k, \Theta}$-operator, specifically meaning that $T^{m,k, \Theta}_i = \myround_{m,k, \Theta}( T_i^* )$ for every commodity $i \in [n]$. As before, since $\myround_{m,k, \Theta}$ inflates its argument, the policy $T^{m,k, \Theta}$ is resource-feasible with probability $1$.

\paragraph{Parametric performance guarantee.} In order to upper-bound the expected operational cost of this policy, we observe that $T^{m,k, \Theta}$ may place joint orders only at integer multiples of the points in ${\cal G}_{m,k, \Theta} \cap [\myround_{m,k, \Theta}( T_{\min}^* ), \infty)$. This set is clearly contained within the union appearing in definition~\eqref{eqn:def_density_rand}, implying that $J( T^{m,k, \Theta} ) \leq K_0 \cdot {\cal D}_{m,k, \Theta}$ almost surely. As stated in Claim~\ref{clm:prop_shifted}, we show that the expected density coefficient $\exsubpar{ \Theta }{ {\cal D}_{m,k, \Theta} }$ admits a closed-form expression, and so do $\exsubpar{ \Theta }{ \myround_{m,k, \Theta}(t) }$ and $\exsubpar{ \Theta }{ \frac{ 1 }{ \myround_{m,k, \Theta}(t) } }$, for any $t > 0$. The derivation of these expressions is presented in Section~\ref{subsec:proof_clm_prop_shifted}.

\begin{claim} \label{clm:prop_shifted}
\begin{enumerate}
    \item $\exsubpar{ \Theta }{ \myround_{m,k, \Theta}(t) } = \frac{ k(m^{1/k}-1)}{ \ln m} \cdot t$, for every $t > 0$. \label{item:prop_shifted_1} 

    \item $\exsubpar{ \Theta }{ \frac{ 1 }{ \myround_{m,k, \Theta}(t) } } = \frac{ k(m^{1/k}-1)}{ m^{1/k}\ln m} \cdot \frac{ 1 }{ t }$, for every $t > 0$. \label{item:prop_shifted_2} 

    \item $\exsubpar{ \Theta }{ {\cal D}_{m,k, \Theta} } = \frac{ k(m^{1/k}-1)}{ m^{1/k}\ln m} \cdot (\sum_{\emptyset \neq{\cal N} \subseteq {\cal S}_{m,k}} \frac{ (-1)^{ |{\cal N}| + 1} }{ \mylcm({\cal N}) } ) \cdot \frac{ 1 }{ T_{\min}^* }$. \label{item:prop_shifted_3} 
\end{enumerate}
\end{claim}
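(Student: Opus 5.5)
We reduce all three items to one observation: in the logarithmic domain, the rounding offset is uniformly distributed. Write $t = m^{x/k} \cdot T_{\min}^*$ with $x \in \bbR$. The grid ${\cal G}_{m,k,\Theta}$ consists of the points $m^{(p+\Theta)/k} \cdot T_{\min}^*$, $p \in \bbZ$, so $\myround_{m,k,\Theta}(t) = m^{(x+U)/k} \cdot T_{\min}^*$, where $U \in (0,1]$ is the unique value with $x + U \equiv \Theta \pmod 1$. Since $\Theta \sim U[0,1]$, the fractional part of $\Theta - x$ is again uniform on $[0,1)$. Hence $U$ is uniform on $(0,1]$; the single atom, where $\Theta - x$ is an integer and $U = 1$, has probability zero. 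Therefore $\myround_{m,k,\Theta}(t) = m^{U/k} \cdot t$ with $U \sim U(0,1]$, for every $t > 0$.

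Items~\ref{item:prop_shifted_1} and~\ref{item:prop_shifted_2} are then one-line integrals:
\[ \int_0^1 m^{u/k}\,du = \frac{k(m^{1/k}-1)}{\ln m} \quad\text{and}\quad \int_0^1 m^{-u/k}\,du = \frac{k(1-m^{-1/k})}{\ln m} = \frac{k(m^{1/k}-1)}{m^{1/k}\ln m}. \]

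For item~\ref{item:prop_shifted_3}, we redo the computation of Claim~\ref{clm:exact_density} for a fixed realization of $\Theta$. The proof of that claim never used that the grid passes through $T_{\min}^*$. It used only that the grid is geometric with ratio $m^{1/k}$ and that $m$ is an integer, which gives ${\cal M}_{mg,\Delta} \subseteq {\cal M}_{g,\Delta}$ and so collapses the union to its first $k$ sets. Inclusion-exclusion then reduces each intersection to multiples of $\myround(T_{\min}^*) \cdot \mylcm({\cal N})$. Repeating these steps verbatim with base point $\myround_{m,k,\Theta}(T_{\min}^*)$ gives, pointwise in $\Theta$,
\[ {\cal D}_{m,k,\Theta} = \frac{1}{\myround_{m,k,\Theta}(T_{\min}^*)} \cdot \sum_{\emptyset \neq {\cal N} \subseteq {\cal S}_{m,k}} \frac{(-1)^{|{\cal N}|+1}}{\mylcm({\cal N})}. \]
The sum is deterministic, because $\mylcm$ is scale-equivariant: scaling every element of ${\cal N}$ by the base point scales the least common multiple by the same factor. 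Taking expectations and applying item~\ref{item:prop_shifted_2} with $t = T_{\min}^*$ yields the stated formula.

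The only delicate step is the uniformity claim for $U$. We must handle the strict inequality in the definition of $\myround$, which maps grid points to their successors, so $U$ ranges over $(0,1]$ rather than $[0,1)$. We must also check that the limit defining ${\cal D}_{m,k,\Theta}$ exists for every realization, which holds because the inclusion-exclusion expression above is an exact limit. Measurability of $\Theta \mapsto {\cal D}_{m,k,\Theta}$ then follows from its closed form, since it is a continuous function of $\myround_{m,k,\Theta}(T_{\min}^*)$, which is piecewise continuous in $\Theta$.
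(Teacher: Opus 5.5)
Your proposal is correct and follows essentially the same route as the paper. For items 1 and 2 the paper also integrates $\myround_{m,k,\Theta}(t)$ and its reciprocal over $\Theta$, splitting $[0,1]$ at $\phi = L_t - \lfloor L_t \rfloor$ into two pieces that your uniformly distributed offset $U \in (0,1]$ merges into a single integral; for item 3 it reruns the proof of Claim~\ref{clm:exact_density} pointwise in $\Theta$ and then applies item 2 at $t = T_{\min}^*$, exactly as you do.
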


As an immediate byproduct, the expected operational cost of $T^{m,k, \Theta}$ can be related to the convex relaxation optimum in terms of the grid parameters $m$ and $k$, by noting that
\begin{align}
\exsub{ \Theta }{ F(T^{m,k, \Theta}) } ~~\leq~~ & K_0 \cdot \exsub{ \Theta }{ {\cal D}_{m,k, \Theta}} + \sum_{i \in [n]} \left( K_i \cdot \exsub{ \Theta }{ \frac{ 1 }{ T^{m,k, \Theta}_i } } + H_i \cdot \exsub{ \Theta }{ T^{m,k, \Theta}_i } \right) \nonumber \\
=~~ & K_0 \cdot \exsub{ \Theta }{ {\cal D}_{m,k, \Theta}} + \sum_{i \in [n]} \left( K_i \cdot \exsub{ \Theta }{ \frac{ 1 }{ \myround_{m,k, \Theta}( T_i^* ) } } + H_i \cdot \exsub{ \Theta }{ \myround_{m,k, \Theta}( T_i^* ) } \right) \nonumber \\
\leq~~ & \frac{ k(m^{1/k}-1)}{ m^{1/k}\ln m} \cdot \left( \sum_{\emptyset \neq{\cal N} \subseteq {\cal S}_{m,k}} \frac{ (-1)^{ |{\cal N}| + 1} }{ \mylcm({\cal N}) } \right) \cdot \frac{ K_0 }{ T_{\min}^* } \nonumber \\
& \mbox{} + \frac{ k(m^{1/k}-1)}{ m^{1/k}\ln m} \cdot \sum_{i \in [n]} \frac{ K_i }{ T_i^* } + \frac{ k(m^{1/k}-1)}{ \ln m} \cdot \sum_{i \in [n]} H_i T_i^* \ . \label{eqn:bound_shifted}
\end{align}

\subsection{Policy instantiation} \label{subsec:instance_rand}

\paragraph{Optimal single-shot policy.} Building on this bound, we observe that with respect to the ordering cost multiplier, $\frac{ k(m^{1/k}-1)}{ m^{1/k}\ln m} < 1$ for any $m \geq 2$ and $k \geq 1$. Therefore, we focus on identifying the grid parameters $m \in \bbZ_{\geq 2}$ and $k \in \bbZ_{\geq 1}$ that minimize the larger of the density multiplier $\frac{ k(m^{1/k}-1)}{ m^{1/k}\ln m} \cdot ( \sum_{\emptyset \neq{\cal N} \subseteq {\cal S}_{m,k}} \frac{ (-1)^{ |{\cal N}| + 1} }{ \mylcm({\cal N}) } )$ and the holding cost multiplier $\frac{ k(m^{1/k}-1)}{ \ln m }$. Similarly to the case analysis in Section~\ref{subsec:policies_det_inst}, when $m^{1/k}$ is rational, we may assume without loss of generality that $k=1$. As such, $\sum_{\emptyset \neq{\cal N} \subseteq {\cal S}_{m,1}} \frac{ (-1)^{ |{\cal N}| + 1} }{ \mylcm({\cal N}) } = 1$, and in relation to our resulting policy $T^{m,1,\Theta}$, the upper bound~\eqref{eqn:bound_shifted} becomes
\begin{equation} \label{eqn:bound_rand_rational}
\exsub{ \Theta }{ F(T^{m,1,\Theta}) } ~~\leq~~ \frac{ m-1}{ m\ln m} \cdot \frac{ K_0 }{ T_{\min}^* } + \frac{ m-1}{ m\ln m} \cdot \sum_{i \in [n]} \frac{ K_i }{ T_i^* } + \frac{ m-1}{ \ln m} \cdot \sum_{i \in [n]} H_i T_i^* \ . 
\end{equation}
An optimal policy of this form picks $m=2$, for which $\exsubpar{ \Theta }{ F(T^{2,1,\Theta}) } \leq \frac{ 1 }{ \ln 2 } \cdot \opt\eqref{eqn:conv_relax_RCJRP}$. 

On the other hand, when $m^{1/k}$ is irrational, we know that $\sum_{\emptyset \neq{\cal N} \subseteq {\cal S}_{m,k}} \frac{ (-1)^{ |{\cal N}| + 1} }{ \mylcm({\cal N}) } = \frac{ 1 - 1/m }{ 1 - m^{-1/k} }$, leading to a simplified upper bound of
\[ \exsub{ \Theta }{ F(T^{m,k,\Theta}) } ~~\leq~~ \underbrace{ \left( 1 - \frac{ 1 }{ m } \right) \cdot \frac{ k }{ \ln m } }_{ V_J(m,k) } \cdot \frac{ K_0 }{ T_{\min}^* } + \underbrace{ \vphantom{\left( 1 - \frac{ 1 }{ m } \right)} \frac{ k(m^{1/k}-1)}{ m^{1/k}\ln m} }_{ \leq 1 } \cdot \sum_{i \in [n]} \frac{ K_i }{ T_i^* } + \underbrace{ \vphantom{\left( 1 - \frac{ 1 }{ m } \right)} \frac{ k(m^{1/k}-1)}{ \ln m} }_{ V_H(m,k) } \cdot \sum_{i \in [n]} H_i T_i^* \ . \]
As stated in Claim~\ref{clm:optimal_base_shifted}, whose complete derivation is deferred to Appendix~\ref{app:proof_clm_optimal_base_shifted}, the optimal choice here is $m^* = 3$ and $k^* = 2$, attaining an expected operational cost within factor $\frac{ 2(\sqrt{3}-1) }{ \ln 3 } < 1.3327$ of optimal. This result represents a substantial improvement over the deterministic $1.3776$-approximation ratio of Section~\ref{sec:unshifted}.

\begin{claim} \label{clm:optimal_base_shifted}
The integers $m \in \bbZ_{\geq 2}$ and $k \in \bbZ_{\geq 1}$ that minimize $\max \{ V_J(m,k), V_H(m,k) \}$ are $m^* = 3$ and $k^* = 2$, yielding an optimum value of $\frac{ 2(\sqrt{3}-1) }{ \ln 3 }$.
\end{claim}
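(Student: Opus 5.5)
The plan is to reparametrize by the logarithm of the expansion factor and prune the search to finitely many pairs $(m,k)$. Write $x = \frac{\ln m}{k} > 0$, so that $m^{1/k} = e^x$. Then
\[ V_H(m,k) = \frac{e^x - 1}{x} \qquad \text{and} \qquad V_J(m,k) = \frac{1 - 1/m}{x} \ . \]
The first expression depends only on $x$ and is strictly increasing in $x$, as seen from its power series $\sum_{j \geq 0} \frac{x^j}{(j+1)!}$. The second is strictly decreasing in $x$ for fixed $m$, and for fixed $x$ it is increasing in $m$.

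Let $c^* = \frac{2(\sqrt{3}-1)}{\ln 3}$. At $(m,k) = (3,2)$ we have $x_0 = \frac{\ln 3}{2} \approx 0.5493$. This gives $V_H(3,2) = c^*$ and $V_J(3,2) = \frac{4}{3\ln 3} \approx 1.2137 < c^*$, so the maximum there equals $c^*$. It remains to show that every other admissible pair has $\max\{V_J, V_H\} > c^*$.

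Suppose some pair $(m,k)$ satisfies $\max\{V_J(m,k), V_H(m,k)\} \leq c^*$.
\begin{itemize}
\item By monotonicity of $V_H$ in $x$, the bound $V_H \leq c^*$ forces $x \leq x_0$.
\item The bound $V_J \leq c^*$ then gives $1 - \frac{1}{m} \leq c^* x \leq c^* x_0 = \sqrt{3} - 1 \approx 0.732$.
\item Hence $\frac{1}{m} \geq 2 - \sqrt{3} \approx 0.268$, so $m \leq 3$.
\end{itemize}
This reduces the problem to $m \in \{2,3\}$, for which I would check each $k$ directly.
\begin{itemize}
\item For $m = 2$ and $k = 1$, we get $x = \ln 2 > x_0$, so $V_H = \frac{1}{\ln 2} > c^*$. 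This case is rational anyway, and the rational bound~\eqref{eqn:bound_rand_rational} gives exactly $\frac{1}{\ln 2}$.
\item For $m = 2$ and $k \geq 2$, we get $x \leq \frac{\ln 2}{2}$, so $V_J \geq \frac{1/2}{(\ln 2)/2} = \frac{1}{\ln 2} > c^*$.
\item For $m = 3$ and $k = 1$, we get $x = \ln 3 > x_0$, so $V_H = \frac{2}{\ln 3} > c^*$.
\item For $m = 3$ and $k \geq 3$, we get $x \leq \frac{\ln 3}{3}$, so $V_J \geq \frac{2/3}{(\ln 3)/3} = \frac{2}{\ln 3} > c^*$.
\end{itemize}
Thus $(3,2)$ is the unique minimizer. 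I would also remark that rational expansion factors, where $m$ is a perfect $k$-th power, are covered: the pruning step already excludes $m \geq 4$, and for such $m$ with $k = 1$ the rational-case bound~\eqref{eqn:bound_rand_rational}, namely $\frac{m-1}{\ln m}$, is itself at least $\frac{1}{\ln 2} > c^*$.

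The main obstacle is the pruning step. It requires the monotonicity of $\frac{e^x-1}{x}$ together with the clean identity $c^* x_0 = \sqrt{3} - 1$, and after that only numerical comparisons remain: $c^* \approx 1.3327$, $\frac{1}{\ln 2} \approx 1.4427$, and $\frac{2}{\ln 3} \approx 1.8205$. Each of these needs only modest precision.
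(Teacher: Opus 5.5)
Your proof is correct and follows essentially the same route as the paper: the substitution $x = \frac{\ln m}{k}$, strict monotonicity of $\frac{e^x-1}{x}$ to force $x \leq \frac{\ln 3}{2}$, pruning $m \geq 4$ through $V_J$, and then a direct check of $m \in \{2,3\}$ split by the value of $k$. The differences are cosmetic. You use the identity $c^* \cdot \frac{\ln 3}{2} = \sqrt{3}-1$ where the paper uses the numerical bound $\frac{3}{4} \cdot \frac{2}{\ln 3} > 1.36$, and your non-strict inequalities additionally give uniqueness of the minimizer.
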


\begin{lemma} \label{lem:approx_rand_3_2}
$\exsubpar{ \Theta }{ F( T^{3,2, \Theta} ) } \leq \frac{ 2(\sqrt{3}-1) }{ \ln 3 } \cdot \opt\eqref{eqn:conv_relax_RCJRP}$.
\end{lemma}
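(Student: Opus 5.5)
The plan is to instantiate the parametric bound~\eqref{eqn:bound_shifted} with $m=3$ and $k=2$, and then compare each of the three resulting multipliers against the target ratio $\frac{2(\sqrt{3}-1)}{\ln 3}$.

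First I check that $3^{1/2}=\sqrt{3}$ is irrational. Hence ${\cal S}_{3,2}=\{1,\sqrt{3}\}$ consists of incommensurable elements, so $\mylcm(\{1,\sqrt{3}\})=\infty$. The inclusion-exclusion sum then collapses to $1+\frac{1}{\sqrt{3}}=\frac{1-1/3}{1-3^{-1/2}}$. Substituting this into~\eqref{eqn:bound_shifted} gives the irrational-case bound stated just before Claim~\ref{clm:optimal_base_shifted}. The density multiplier is
\[ V_J(3,2) = \tfrac{2}{3}\cdot\tfrac{2}{\ln 3} = \tfrac{4}{3\ln 3}, \]
the ordering multiplier is $\frac{2(\sqrt{3}-1)}{\sqrt{3}\ln 3}$, and the holding multiplier is $V_H(3,2)=\frac{2(\sqrt{3}-1)}{\ln 3}$.

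Next I bound each multiplier by $V_H(3,2)$.
\begin{itemize}
\item The ordering multiplier is $V_H(3,2)/\sqrt{3}$, so it is smaller than $V_H(3,2)$.
\item For the density multiplier, I need $\frac{4}{3}\le 2(\sqrt{3}-1)$, i.e.\ $\sqrt{3}\ge \frac{5}{3}$. This holds since $3\ge \frac{25}{9}$.
\end{itemize}
Equivalently, Claim~\ref{clm:optimal_base_shifted} gives $\max\{V_J(3,2),V_H(3,2)\}=\frac{2(\sqrt{3}-1)}{\ln 3}$. Also, the ordering multiplier is at most $1$, as noted in the text, and $1\le \frac{2(\sqrt{3}-1)}{\ln 3}\approx 1.3327$.

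Finally, each multiplier is bounded by the same constant, so I factor it out:
\[ \exsubpar{\Theta}{F(T^{3,2,\Theta})}\le \tfrac{2(\sqrt{3}-1)}{\ln 3}\Big(\tfrac{K_0}{T^*_{\min}}+\sum_i\big(\tfrac{K_i}{T_i^*}+H_iT_i^*\big)\Big). \]
The bracketed expression is exactly $\opt\eqref{eqn:conv_relax_RCJRP}$, because $T^*$ is optimal for that program.

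There is no real obstacle here; the only care needed is confirming that the irrational-case collapse of the inclusion-exclusion term applies. That holds because $\sqrt{3}$ is irrational, so no positive common integer multiple of $1$ and $\sqrt{3}$ exists.
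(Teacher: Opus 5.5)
Your proposal is correct and matches the paper's argument. The lemma is obtained by specializing the irrational-case form of bound~\eqref{eqn:bound_shifted} to $m=3$, $k=2$. There, both the density multiplier $\frac{4}{3\ln 3}$ and the ordering multiplier $\frac{2(\sqrt{3}-1)}{\sqrt{3}\ln 3}$ are dominated by the holding multiplier $\frac{2(\sqrt{3}-1)}{\ln 3}$, as recorded in Claim~\ref{clm:optimal_base_shifted}; your explicit check $\sqrt{3}\geq\frac{5}{3}$ simply spells out a comparison the paper leaves to that claim.
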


\paragraph{Best-of-both-worlds policy.} Our primary improvement along these lines is attained by exploiting a structural asymmetry in the randomized policy $T^{2,1,\Theta}$. As previously mentioned, by instantiating the rational-case bound~\eqref{eqn:bound_rand_rational} on the expected operational cost of $T^{2,1, \Theta}$, we have 
\begin{equation} \label{eqn:def_FT_rand_21}
\exsub{ \Theta }{ F(T^{2,1, \Theta}) } ~~\leq~~ \frac{ 1 }{ 2 \ln 2 } \cdot \frac{ K_0 }{ T_{\min}^* } + \frac{ 1 }{ 2 \ln 2 } \cdot \sum_{i \in [n]} \frac{ K_i }{ T_i^* } + \frac{ 1 }{ \ln 2 } \cdot \sum_{i \in [n]} H_i T_i^* \ .
\end{equation}
Interestingly, this policy's ordering density multiplier is only $\frac{ 1 }{ 2 \ln 2 } \approx 0.7213$, even though it incurs heavy holding costs, with a multiplier of $\frac{ 1 }{ \ln 2 } \approx 1.4426$. To counterbalance this trend, we evaluate $T^{2,2, \Theta}$, which exhibits the opposite asymmetry. Indeed, 
\begin{equation} \label{eqn:def_FT_rand_22}
\exsub{ \Theta }{ F(T^{2,2, \Theta}) } ~~\leq~~ \frac{ 1 }{ \ln 2 } \cdot \frac{ K_0 }{ T_{\min}^* } + \frac{ 2-\sqrt{2} }{ \ln 2 } \cdot \sum_{i \in [n]} \frac{ K_i }{ T_i^* } + \frac{ 2(\sqrt{2}-1) }{ \ln 2 } \cdot \sum_{i \in [n]} H_i T_i^* \ ,
\end{equation}
which is associated with a high ordering density multiplier of $\frac{ 1 }{ \ln 2 } \approx 1.4426$, but with a highly efficient holding cost multiplier, $\frac{ 2(\sqrt{2}-1) }{ \ln 2 } \approx 1.1951$. By picking the cheaper of these two randomized policies, we push our expected performance guarantee down to $1.2585$. 

\begin{lemma} \label{lem:approx_better_2rand}
$\exsubpar{ \Theta }{ \min \{ F( T^{2,1, \Theta} ), F( T^{2,2, \Theta} ) \} } \leq 1.2585 \cdot \opt\eqref{eqn:conv_relax_RCJRP} $.
\end{lemma}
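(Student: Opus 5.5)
The plan mirrors the proof of Lemma~\ref{lem:approx_better_2det}, now working with expectations over $\Theta$. The one thing to set up first is the order of expectation and minimum. For every realization of $\Theta$ and every $\lambda \in [0,1]$,
\[ \min \{ F( T^{2,1, \Theta} ), F( T^{2,2, \Theta} ) \} \leq \lambda \cdot F( T^{2,1, \Theta} ) + (1-\lambda) \cdot F( T^{2,2, \Theta} ) \ . \]
Taking expectations and using linearity gives $\exsubpar{ \Theta }{ \min \{ F( T^{2,1, \Theta} ), F( T^{2,2, \Theta} ) \} } \leq \lambda \cdot \exsubpar{ \Theta }{ F(T^{2,1,\Theta}) } + (1-\lambda) \cdot \exsubpar{ \Theta }{ F(T^{2,2,\Theta}) }$. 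The same $\Theta$ may be shared by both policies, or independent copies may be used; since only each policy's marginal expectation enters, the coupling does not matter.

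Next, I would substitute the bounds~\eqref{eqn:def_FT_rand_21} and~\eqref{eqn:def_FT_rand_22} and collect the coefficients of the three cost components $\frac{K_0}{T_{\min}^*}$, $\sum_{i \in [n]} \frac{K_i}{T_i^*}$ and $\sum_{i \in [n]} H_i T_i^*$:
\begin{itemize}
\item the density coefficient is $\lambda \cdot \frac{1}{2\ln 2} + (1-\lambda) \cdot \frac{1}{\ln 2}$;
\item the ordering coefficient is $\lambda \cdot \frac{1}{2\ln 2} + (1-\lambda) \cdot \frac{2-\sqrt2}{\ln 2}$;
\item the holding coefficient is $\lambda \cdot \frac{1}{\ln 2} + (1-\lambda) \cdot \frac{2(\sqrt2-1)}{\ln 2}$.
\end{itemize}
Both ordering multipliers, $\frac{1}{2\ln 2} \approx 0.7213$ and $\frac{2-\sqrt2}{\ln 2} \approx 0.8451$, are below $1$. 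Hence every convex combination of them is below $1$, and the ordering component can be ignored.

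I would then choose $\lambda$ to equalize the density and holding coefficients. The balancing equation is $-\frac{\lambda}{2} + (1-\lambda)(3 - 2\sqrt 2) = 0$, which gives $\lambda^* = \frac{3-2\sqrt2}{7/2 - 2\sqrt2} \approx 0.2555$. At this $\lambda^*$ the common multiplier is
\[ \frac{1}{\ln 2}\left(1 - \frac{\lambda^*}{2}\right) \approx 1.2584 \ . \]

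Following the presentation of Lemma~\ref{lem:approx_better_2det}, I would plug in a truncated numerical value such as $\lambda = 0.2555$. It then remains to check that the density coefficient ($\approx 1.2584$), the holding coefficient ($\approx 1.2584$), and the ordering coefficient ($< 1$) are all at most $1.2585$. With that, the right-hand side is at most $1.2585 \cdot \bigl( \frac{K_0}{T_{\min}^*} + \sum_{i \in [n]} ( \frac{K_i}{T_i^*} + H_i T_i^* ) \bigr) = 1.2585 \cdot \opt\eqref{eqn:conv_relax_RCJRP}$. The only delicate point is this numerical verification after truncating $\lambda$, since the margin to $1.2585$ is about $10^{-4}$; I would carry the arithmetic to five decimals to ensure neither of the two balanced coefficients exceeds the target.
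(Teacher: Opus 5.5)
Your proposal is correct and is essentially the paper's proof. You bound the minimum pointwise by a $\lambda$-convex combination, take expectations, substitute~\eqref{eqn:def_FT_rand_21} and~\eqref{eqn:def_FT_rand_22}, and fix $\lambda$ near the balancing value $\lambda^* = \frac{3-2\sqrt{2}}{7/2-2\sqrt{2}} \approx 0.25548$, which is what the paper uses. Your $\lambda = 0.2555$ also works, giving density and holding coefficients of about $1.25839$ and $1.25841$, both below $1.2585$.
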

\begin{proof}
By upper-bounding the expected operational costs of $T^{2,1, \Theta}$ and $T^{2,2, \Theta}$ via inequalities~\eqref{eqn:def_FT_rand_21} and~\eqref{eqn:def_FT_rand_22}, we conclude that
\begin{align*}
& \exsub{ \Theta }{ \min \{ F( T^{2,1, \Theta} ), F( T^{2,2, \Theta} ) \} } \\
& \qquad \leq~~ \min_{\lambda \in [0,1]} \left\{ \lambda \cdot \exsub{ \Theta }{ F( T^{2,1, \Theta} ) } + (1-\lambda) \cdot \exsub{ \Theta }{ F( T^{2,2, \Theta} ) } \right\} \\
& \qquad \leq~~ \min_{\lambda \in [0,1]} \left\{ \lambda \cdot \left( \frac{ 1 }{ 2 \ln 2 } \cdot \frac{ K_0 }{ T_{\min}^* } + \frac{ 1 }{ 2 \ln 2 } \cdot \sum_{i \in [n]} \frac{ K_i }{ T_i^* } + \frac{ 1 }{ \ln 2 } \cdot \sum_{i \in [n]} H_i T_i^* \right) \right. \\
& \qquad \qquad \qquad \qquad \left. \mbox{} + (1-\lambda) \cdot \left( \frac{ 1 }{ \ln 2 } \cdot \frac{ K_0 }{ T_{\min}^* } + \frac{ 2-\sqrt{2} }{ \ln 2 } \cdot \sum_{i \in [n]} \frac{ K_i }{ T_i^* } + \frac{ 2(\sqrt{2}-1) }{ \ln 2 } \cdot \sum_{i \in [n]} H_i T_i^* \right) \right\} \\
&\qquad \leq~~ 0.25548 \cdot \left( \frac{ 1 }{ 2 \ln 2 } \cdot \frac{ K_0 }{ T_{\min}^* } + \frac{ 1 }{ 2 \ln 2 } \cdot \sum_{i \in [n]} \frac{ K_i }{ T_i^* } + \frac{ 1 }{ \ln 2 } \cdot \sum_{i \in [n]} H_i T_i^* \right) \\
& \qquad \qquad \mbox{} + (1-0.25548) \cdot \left( \frac{ 1 }{ \ln 2 } \cdot \frac{ K_0 }{ T_{\min}^* } + \frac{ 2-\sqrt{2} }{ \ln 2 } \cdot \sum_{i \in [n]} \frac{ K_i }{ T_i^* } + \frac{ 2(\sqrt{2}-1) }{ \ln 2 } \cdot \sum_{i \in [n]} H_i T_i^* \right) \\
& \qquad \leq~~ 1.2585 \cdot \left( \frac{ K_0 }{ T_{\min}^* } + \sum_{i \in [n]} \left( \frac{ K_i }{ T_i^* } + H_i T_i^* \right) \right) \\
& \qquad =~~ 1.2585 \cdot \opt\eqref{eqn:conv_relax_RCJRP} \ . \nonumber
\end{align*}
\end{proof}

\subsection{Best-of-two policy via factor-revealing LP} \label{subsec:factor_reveal}

Despite extensive numerical search, we still do not know whether the best-of-several-worlds approach can improve upon $T^{2,1,\Theta}$ and $T^{2,2,\Theta}$, in the sense of achieving better expected cost coefficients via two or more randomized policies. In contrast, our current approach treats the shift $\Theta$ as an intentionally selected design parameter. This idea motivates us to exploit $\tilde{T}^{2,1,\Theta}$ and $\tilde{T}^{2,2,\Theta}$, along with an instance-dependent choice of $\Theta$. Here, the $\tilde{T}$-notation indicates that we will not be directly rounding the optimal convex relaxation solution $T^*$, but rather a carefully modified vector. As stated below, this proactive choice allows us to surpass the approximation guarantee stated in Lemma~\ref{lem:approx_better_2rand}, which evaluates to at least $1.2584$.

\begin{lemma} 
$\min_{ \Theta \in [0,1] } \{ F( \tilde{T}^{2,1, \Theta} ), F( \tilde{T}^{2,2, \Theta} ) \} \leq 1.2512 \cdot \opt\eqref{eqn:conv_relax_RCJRP}$.
\end{lemma}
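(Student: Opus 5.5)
We will prove the lemma through an averaging argument: for every fixed instance, the minimum over $\Theta$ is at most the expectation of the best policy under some distribution over $\Theta$ and over the two policies. That distribution may be chosen \emph{after} seeing the cost profile of $T^*$. The modified vector $\tilde{T}$ serves only to make the problem one-dimensional. Since $\Theta$ is a shift in the logarithmic domain, we first fold every commodity into its residue $u_i = \log_{2}(T_i^*/T_{\min}^*) \bmod 1$ (for the $k=2$ grid, residue modulo $1/2$). We then discretize these residues to a fine mesh $\{0, 1/N, \ldots, (N-1)/N\}$ by rounding each $T_i^*$ up to the next mesh point, which gives $\tilde{T}_i$. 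This rounding keeps resource feasibility, and its cost loses at most a factor $2^{1/N}$ on holding terms while only improving ordering terms. It also guarantees that the rounding operators of both grids, for every shift $\Theta$ in the mesh, act on a finite set of residue classes.

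Next we write, for each shift $\theta$ in the mesh and each residue class $r$, the exact multipliers $a_{j}(\theta,r)$ and $b_{j}(\theta,r)$ with $j\in\{1,2\}$. Here $\myround_{2,j,\theta}(\tilde{T}_i) = b_j(\theta,r)\cdot \tilde{T}_i$, and the ordering term is $K_i/\tilde T_i$ times $1/b_j$. For the joint ordering term we use the deterministic density $2^{-\theta}$ for $T^{2,1,\theta}$, and $(1+2^{-1/2})\cdot 2^{-\theta/2}/\,2^{1/2}$ relative to $K_0/T^*_{\min}$ for $T^{2,2,\theta}$, read off Claim~\ref{clm:exact_density} and its shifted analogue. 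The instance is then summarized by nonnegative masses: $x_0 = K_0/T^*_{\min}$, and, per residue class $r$, the aggregated ordering mass $y_r=\sum K_i/T_i^*$ and holding mass $z_r = \sum H_iT_i^*$. These are normalized so that $x_0+\sum_r(y_r+z_r)=1$. The cost of every candidate policy is linear in $(x_0,y,z)$, so the quantity $\max_{(x,y,z)}\min_{j,\theta} \mathrm{cost}_{j,\theta}(x,y,z)$ is the value of a linear program, the factor-revealing LP. By LP duality it equals $\min_{\lambda}\max$ over extreme points, where $\lambda$ is a distribution over (policy, shift) pairs. Equivalently, it suffices to exhibit a single mixed strategy $\lambda$ and verify, one coordinate at a time, that the expected multiplier of $x_0$, of each $y_r$, and of each $z_r$ is at most $1.2512$. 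Because the multipliers of $y_r$ are always below $1$, only the $x_0$ coordinate and the $z_r$ coordinates bind.

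The main obstacle is the certification step: finding a mixed strategy $\lambda$ on the mesh whose coordinatewise multipliers are all at most $1.2512$, and controlling the discretization error so that it survives the limit. We would first solve the LP numerically for moderate $N$, for instance $N=64$, to guess the structure of $\lambda$. We expect it to be a density over $\theta$ for $T^{2,2,\theta}$ together with a point mass or density for $T^{2,1,\theta}$, skewed toward small shifts where the joint density is cheap. We would then try to write $\lambda$ as a piecewise-constant density on $[0,1]$ and check the holding inequality $\mathbb{E}_\lambda[b_j(\theta,r)]\le 1.2512$ for every residue $r$ analytically. This check is piecewise in $r$ with finitely many breakpoints at mesh points, plus slack absorbing the $2^{1/N}$ factor. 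If a clean analytic $\lambda$ does not emerge, we fall back to reporting the LP with rational data and verifying the dual certificate with exact arithmetic for a fixed $N$. This gives a bound of the form (LP value)$\cdot 2^{1/N} \le 1.2512$.
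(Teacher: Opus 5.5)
Your plan follows the paper's route. You round $T^*$ up to powers of $2^{1/N}$ so each residue class has a single rounding ratio, bound every (policy, shift) pair linearly in the aggregated masses, restrict the shift to a mesh, solve the factor-revealing LP, and pay $2^{1/N}$ at the end. Your mixed-strategy view is the LP dual of the paper's $\mathrm{LP}_{N,L}$, and your observation that the ordering coordinates never bind is correct.

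\textbf{The data you write for $T^{2,2,\theta}$ is wrong.} On the shifted grid $\{2^{(p+\theta)/2}T_{\min}\}_{p\in\mathbb{Z}}$ with $\theta\in(0,1]$, the first point strictly above $T_{\min}$ is $2^{\theta/2}T_{\min}$. The extra factor in $\myround_{m,k}(T^*_{\min})=m^{1/k}T^*_{\min}$ belongs to the unshifted grid of Section~\ref{sec:unshifted}, equivalently to $\theta=0$ only. So the joint-ordering multiplier is $(1+2^{-1/2})\cdot 2^{-\theta/2}$, as in the paper's LP. Your $(1+2^{-1/2})\cdot 2^{-\theta/2}/2^{1/2}$ is smaller by a factor $\sqrt{2}$ and is not a valid upper bound on $J$. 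For small $\theta$ with commodities in the two lowest grid classes, $J$ is about $1.707\,K_0/T_{\min}$, versus your $1.207\,K_0/T_{\min}$. You also did not insert the analogous factor for $k=1$, so the two rows are inconsistent.

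This error matters. Averaged over $\theta$, your coefficient gives $T^{2,2,\theta}$ a density multiplier of $\frac{1}{\sqrt{2}\ln 2}\approx 1.02$ instead of the exact $\frac{1}{\ln 2}$ from Claim~\ref{clm:prop_shifted}. That policy alone would then appear to be a $\frac{2(\sqrt{2}-1)}{\ln 2}\approx 1.195$-approximation, below even the paper's final $\frac{5}{6\ln 2}$. Any LP value or dual certificate computed from these data would therefore be unsound.

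\textbf{The proposal stops before the step that produces the constant.} For this lemma that step is the entire proof: the paper solves $\mathrm{LP}_{N,L}$ with $N=L=2000$, obtains $1.250677$, and concludes from $1.250677\cdot 2^{1/2000}<1.2512$. The slack $1.2512/1.250677\approx 1.0004$ forces a residue mesh of roughly $N\gtrsim 1700$, assuming the LP value stays near $1.2507$. Your exploratory $N=64$ instance can only guide the search, since its $2^{1/64}\approx 1.011$ loss alone exceeds the slack.

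Your structural guess is also backwards. The density multiplier, $2^{-\theta/k}$ times the inclusion--exclusion constant, decreases in $\theta$, so joint ordering is cheap at \emph{large} shifts, not small ones. An explicit $\lambda$ verified analytically for all real residues would be a genuinely different and more transparent argument, and would remove the need for the residue discretization. The proposal does not supply one.
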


\paragraph{Policy description.} Rather than working with an optimal solution $T^*$ to the convex relaxation~\eqref{eqn:conv_relax_RCJRP}, our policies will be defined in terms of its $\tilde{T}$-counterpart. Given an integer-valued discretization parameter $N \geq 1$, to be specified later on, for every commodity $i \in [n]$, we fix $\tilde{T}_i$ by rounding $T^*_i$ up to the nearest integer power of $2^{1/N}$; similarly, $T_{\min}^*$ is transformed to $\tilde{T}_{\min}$. As such, $\tilde{T}$ is a feasible solution to~\eqref{eqn:conv_relax_RCJRP}, with an objective value of at most $2^{1/N} \cdot \opt\eqref{eqn:conv_relax_RCJRP}$. With this modified vector, the policies $\tilde{T}^{2,1,\Theta}$ and $\tilde{T}^{2,2,\Theta}$ are defined in relation to $\tilde{T}$, meaning that $\tilde{T}^{m,k, \Theta}_i = \myround_{m,k, \Theta}( \tilde{T}_i )$ for every commodity $i \in [n]$. Subsequently, we pick the shift $\Theta \in [0,1]$ for which $\min \{ F( \tilde{T}^{2, 1, \Theta} ), F( \tilde{T}^{2,2, \Theta} ) \}$ is minimized. As explained in Section~\ref{sec:conclusions} when discussing derandomization, one can indeed determine an optimal choice of $\Theta$ for all policies considered in this paper. 

\paragraph{Bounding $F(\tilde{T}^{2,k, \Theta})$ for adversarial instances.} Let us recall that the objective value of $\tilde{T}$ with respect to formulation~\eqref{eqn:conv_relax_RCJRP} can be written as $\Psi( \tilde{T} ) = \frac{ K_0 }{ \tilde{T}_{\min} } + \sum_{i \in [n]} \frac{ K_i }{ \tilde{T}_i } + \sum_{i \in [n]} H_i \tilde{T}_i$. For the purpose of decomposing this expression, note that the restriction of each time interval $\tilde{T}_i$ to integer powers of $2^{1/N}$ induces a natural partition of the underlying commodities into $N$ disjoint sets. Specifically, for each residue class $\nu \in [N]$, we define $A_{\nu}$ as the subset of commodities $i \in [n]$ whose modified time interval take the form $\tilde{T}_i = 2^{r_i + \nu/N}$ for some integer $r_i$. Consequently, 
\[ \Psi( \tilde{T} ) ~~=~~ \underbrace{ \vphantom{\sum_{i \in A_{\nu}} H_i \tilde{T}_i} \frac{ K_0 }{ \tilde{T}_{\min} } }_{ z } + \sum_{ \nu \in [N] } \underbrace{ \sum_{i \in A_{\nu}} \frac{ K_i }{ \tilde{T}_i } }_{ x_{\nu} } + \sum_{ \nu \in [N] } \underbrace{ \sum_{i \in A_{\nu}} H_i \tilde{T}_i }_{ y_{\nu} } \ . \]

The crucial observation is that, with this shorthand notation, for any choice of $\Theta$, the operational cost of $\tilde{T}^{2,k, \Theta}$ can be bounded in terms of $z$, $\{ x_{\nu} \}_{\nu \in [N]}$, and $\{ y_{\nu} \}_{\nu \in [N]}$. Indeed, we know that this policy incurs a joint ordering cost of $J( \tilde{T}^{2,k, \Theta} ) \leq K_0 \cdot \tilde{\cal D}_{2,k,\Theta}$, where the latter coefficient is given by 
\[ {\cal D}_{2,k, \Theta} ~~=~~ \frac{ 1 }{ \myround_{2,k,\Theta}( \tilde{T}_{\min} ) } \cdot \sum_{\emptyset \neq{\cal N} \subseteq {\cal S}_{2,k}} \frac{ (-1)^{ |{\cal N}| + 1} }{ \mylcm({\cal N}) } ~~=~~ \frac{ 1 }{ \tilde{T}_{\min} \cdot 2^{\Theta/k} } \cdot \sum_{\emptyset \neq{\cal N} \subseteq {\cal S}_{2,k}} \frac{ (-1)^{ |{\cal N}| + 1} }{ \mylcm({\cal N}) } \ , \]
as explained within the proof of Claim~\ref{clm:prop_shifted}. In addition, for every class $\nu \in [N]$ and commodity $i \in A_{\nu}$, its time interval $\tilde{T}_i = 2^{r_i + \nu/N}$ has a rounding error of 
\[ \frac{ \tilde{T}^{2,k, \Theta}_i }{ \tilde{T}_i } ~~=~~ \frac{ \myround_{2,k, \Theta}( \tilde{T}_i ) }{ \tilde{T}_i } ~~=~~ \frac{ \myround_{2,k, \Theta}( 2^{r_i + \nu/N} ) }{ 2^{r_i + \nu/N} } ~~=~~ \frac{ \myround_{2,k, \Theta}( 2^{\nu/N} ) }{ 2^{\nu/N} } \ . \]
In other words, we incur precisely the same error for all commodities in $A_{\nu}$, noting that $\myround_{2,k, \Theta}( 2^{\nu/N} )$ can easily be computed; see equation~\eqref{eqn:def_rounding_mktheta} in Section~\ref{subsec:proof_clm_prop_shifted}. By aggregating these observations, we have
\[ F(\tilde{T}^{2,k, \Theta}) ~~\leq~~ 2^{-\Theta/k} \cdot \left( \sum_{\emptyset \neq{\cal N} \subseteq {\cal S}_{2,k}} \frac{ (-1)^{ |{\cal N}| + 1} }{ \mylcm({\cal N}) } \right) \cdot z + \sum_{ \nu \in [N] } \frac{ 2^{\nu/N} }{ \myround_{2,k, \Theta}( 2^{\nu/N} ) } \cdot x_{\nu} + \sum_{ \nu \in [N] } \frac{ \myround_{2,k, \Theta}( 2^{\nu/N} ) }{ 2^{\nu/N} } \cdot y_{\nu} \ . \]

\paragraph{The factor-revealing LP.} To evaluate the best-performing policy out of $\tilde{T}^{2,1,\Theta}$ and $\tilde{T}^{2,2,\Theta}$ across all continuous choices of $\Theta \in [0,1]$, we transition to a tractable framework by discretizing this interval, restricting our attention to the set $\Theta_{\mydisc} = \{ \frac{ \ell }{ L }: 0 \leq \ell \leq L \}$. Clearly, for any integer-valued $L \geq 1$, the resulting approximation guarantee will form an upper bound on the one attained by optimizing over all $\Theta \in [0,1]$. The main advantage of considering finitely many $\Theta$ values is that such discretization allows us to rigorously derive an upper bound via a factor-revealing LP. The goal of this mechanism is to show that, regardless of how our adversary sets up $z$, $\{ x_{\nu} \}_{\nu \in [N]}$, and $\{ y_{\nu} \}_{\nu \in [N]}$, there is always a choice of $\Theta \in \Theta_{\mydisc}$ ensuring an approximation ratio of $\rho$, which is what we wish to optimize. For this purpose, when these adversarial terms are normalized by the objective value $\Psi( \tilde{T} )$, we obtain the following linear program:
\begin{alignat}{3} 
& \text{max} \quad && \rho \tag{$\mathrm{LP}_{N,L}$} \\
&\text{s.t.} \quad && \rho \leq 2^{-\Theta} \cdot \left( \sum_{\emptyset \neq{\cal N} \subseteq {\cal S}_{2,1}} \frac{ (-1)^{ |{\cal N}| + 1} }{ \mylcm({\cal N}) } \right) \cdot z \nonumber \\
&&& \qquad \quad \mbox{} + \sum_{ \nu \in [N] } \frac{ 2^{\nu/N} }{ \myround_{2,1, \Theta}( 2^{\nu/N} ) } \cdot x_{\nu} + \sum_{ \nu \in [N] } \frac{ \myround_{2,1, \Theta}( 2^{\nu/N} ) }{ 2^{\nu/N} } \cdot y_{\nu} \qquad && \forall \, \Theta \in \Theta_{\mydisc} \nonumber \\
& \quad && \rho \leq 2^{-\Theta/2} \cdot \left( \sum_{\emptyset \neq{\cal N} \subseteq {\cal S}_{2,2}} \frac{ (-1)^{ |{\cal N}| + 1} }{ \mylcm({\cal N}) } \right) \cdot z \nonumber \\
&&& \qquad \quad \mbox{} + \sum_{ \nu \in [N] } \frac{ 2^{\nu/N} }{ \myround_{2,2, \Theta}( 2^{\nu/N} ) } \cdot x_{\nu} + \sum_{ \nu \in [N] } \frac{ \myround_{2,2, \Theta}( 2^{\nu/N} ) }{ 2^{\nu/N} } \cdot y_{\nu} \qquad && \forall \, \Theta \in \Theta_{\mydisc} \nonumber \\
&&& z + \sum_{\nu \in [N]} x_{\nu} + \sum_{\nu \in [N]} y_{\nu} = 1 \nonumber \\
&&& z, \, x_{\nu}, \, y_{\nu} \geq 0 && \forall \, \nu \in [N] \nonumber 
\end{alignat}
By numerically solving this program, instantiated with $N=2000$ and $L=2000$, we find that $\opt(\mathrm{LP}_{2000,2000}) = 1.250677$. That said, one should still take into account that, due to normalizing by $\Psi( \tilde{T} )$, this worst-case ratio is determined with respect to the modified solution $\tilde{T}$. As previously mentioned, the ratio between $\Psi( \tilde{T} )$ and $\Psi( T^* )$ is bounded by $2^{1/N}$, implying that our final approximation guarantee is at most $1.250677 \cdot 2^{1/2000} < 1.2512$.

\subsection{Proof of Claim~\ref{clm:prop_shifted}} \label{subsec:proof_clm_prop_shifted}

\paragraph{Proof of items~\ref{item:prop_shifted_1} and~\ref{item:prop_shifted_2}.} 
To evaluate the expected values of $\myround_{m,k, \Theta}(t)$ and its reciprocal, we define their relationship to $t$ as a function of the random shift $\Theta$. For this purpose, let $B = m^{1/k}$ be the base expansion factor and let $L_t = \log_B(\frac{ t }{ T_{\min}^*} )$ be a mapping of $t > 0$ to the logarithmic scale, normalized by $T_{\min}^*$. Then, in terms of $\Theta$, the rounded value of $t$ is given by the relative position of $L_t$ within a unit cycle, along the following case disjunction:
\begin{equation} \label{eqn:def_rounding_mktheta}
\myround_{m,k, \Theta}(t) ~~=~~ 
\begin{cases} 
t \cdot B^{\lfloor L_t \rfloor - L_t + 1 + \Theta}, \qquad & \text{when } \Theta \in [0, L_t - \lfloor L_t \rfloor] \\
t \cdot B^{\lfloor L_t \rfloor - L_t + \Theta}, & \text{when } \Theta \in (L_t - \lfloor L_t \rfloor, 1]
\end{cases} 
\end{equation}

To compute $\exsubpar{ \Theta }{ \myround_{m,k, \Theta}(t) }$, we integrate this functional dependency over $\Theta \in [0,1]$, where for simplicity of notation, $\phi = L_t - \lfloor L_t \rfloor$, leading to
\begin{align*}
\exsub{ \Theta }{ \myround_{m,k, \Theta}(t) } ~~=~~ & t \cdot \left( \int_{0}^{\phi} B^{1-\phi + \theta} \mathrm{d}\theta + \int_{\phi}^{1} B^{\theta - \phi} \mathrm{d}\theta \right) \\
~~=~~& t \cdot \left( B^{1-\phi} \cdot \left. \frac{B^\theta}{\ln B} \right]_{0}^{\phi} + B^{-\phi} \cdot \left. \frac{B^\theta}{\ln B} \right]_{\phi}^{1} \right) \\
~~=~~& \frac{t}{\ln B} \cdot \left( B^{1-\phi} \cdot (B^\phi - 1) + B^{-\phi} \cdot (B - B^\phi) \right) \\
~~=~~& \frac{B-1}{\ln B} \cdot t \ .
\end{align*}
Substituting back $B = m^{1/k}$, we obtain $\exsubpar{ \Theta }{ \myround_{m,k, \Theta}(t) } = \frac{ k(m^{1/k}-1) }{ \ln m } \cdot t$, completing the proof of item~\ref{item:prop_shifted_1}. For the reciprocal, a similar integration yields
\begin{align*}
\exsub{ \Theta }{ \frac{1}{\myround_{m,k, \Theta}(t)} } ~~=~~& \frac{ 1 }{ t } \cdot \left( \int_{0}^{\phi} \frac{\mathrm{d}\theta}{B^{1 - \phi + \theta}} + \int_{\phi}^{1} \frac{\mathrm{d}\theta}{B^{\theta - \phi}} \right) \\
~~=~~& \frac{1}{t} \cdot \left( \left. \frac{-B^{\phi - 1 - \theta}}{\ln B} \right]_{0}^{\phi} + \left. \frac{-B^{\phi - \theta}}{\ln B} \right]_{\phi}^{1} \right) \\
~~=~~& \frac{1}{t \ln B} \cdot \left( (B^{\phi-1} - B^{-1}) + (1 - B^{\phi-1}) \right) \\
~~=~~& \frac{B-1}{B \ln B} \cdot \frac{ 1 }{ t } \ .
\end{align*}
Again, substituting $B = m^{1/k}$ results in $\exsubpar{ \Theta }{ \frac{ 1 }{ \myround_{m,k, \Theta}(t) } } = \frac{ k(m^{1/k}-1) }{ m^{1/k} \ln m } \cdot \frac{1}{t}$, which concludes the proof of item~\ref{item:prop_shifted_2}.

\paragraph{Proof of item~\ref{item:prop_shifted_3}.} By replicating the analysis leading to Claim~\ref{clm:exact_density} in the deterministic setting, it follows that ${\cal D}_{m,k, \Theta} = \frac{ 1 }{ \myround_{m,k,\Theta}( T_{\min}^* ) } \cdot \sum_{\emptyset \neq{\cal N} \subseteq {\cal S}_{m,k}} \frac{ (-1)^{ |{\cal N}| + 1} }{ \mylcm({\cal N}) }$. Therefore,
\begin{align*}
\exsub{ \Theta }{ {\cal D}_{m,k, \Theta} } ~~=~~ & \exsub{ \Theta }{ \frac{ 1 }{ \myround_{m,k, \Theta}( T_{\min}^* ) } } \cdot \sum_{\emptyset \neq{\cal N} \subseteq {\cal S}_{m,k}} \frac{ (-1)^{ |{\cal N}| + 1} }{ \mylcm({\cal N}) } \\
~~=~~ & \frac{ k(m^{1/k}-1)}{ m^{1/k}\ln m} \cdot \left( \sum_{\emptyset \neq{\cal N} \subseteq {\cal S}_{m,k}} \frac{ (-1)^{ |{\cal N}| + 1} }{ \mylcm({\cal N}) } \right) \cdot \frac{ 1 }{ T_{\min}^* } \ , 
\end{align*}
where the second equality follows from the expression in item~\ref{item:prop_shifted_2} for the expected value of $\frac{ 1 }{ \myround_{m,k, \Theta}(t) }$. 
\section{\texorpdfstring{Shifted Interleaved Power-of-$m^{1/k}$ Policies}{}} \label{sec:Interleaved}

In this section, we finalize our algorithmic machinery by introducing the class of shifted interleaved power-of-$m^{1/k}$ policies. This approach utilizes a secondary offset grid to subdivide the rounding gaps of a primary geometric grid, thereby mitigating its expected holding cost inflation. By optimizing this offset, we demonstrate that a single-shot interleaved policy strikes an effective balance between joint ordering and holding cost inflations. These ideas allow us to establish our main result, showing that the resource-constrained joint replenishment problem can be efficiently approximated within factor $\frac{5}{6\ln 2} < 1.2023$ of optimal. Moreover, we prove that our choice of parameters perfectly balances between various cost components, arguing that this configuration is optimal for the entire class of interleaved policies.

\subsection{Introducing shifted interleaved policies} \label{subsec:rand_interleaved_prelim}

The core logic behind shifted interleaved policies focuses on mitigating the expected rounding errors inherent to a single geometric grid. Specifically, with an expansion factor of $m^{1/k}$, any given time interval $T_i^*$ might be inflated by nearly $m^{1/k}$ when rounded up. Through interleaving a secondary geometric grid, offset by a carefully chosen factor $\alpha$, we divide each grid interval into two smaller sub-intervals. This idea reduces the maximum rounding error of any point, tightening the expected individual ordering and holding cost multipliers. Even though an additional grid comes at the expense of a denser set of joint ordering points, we will prove that specific choices for the offset $\alpha$ strike an effective balance between these competing costs. Toward this objective, our interleaved geometric grid, rounding operator, and density coefficient are formally defined as follows.

\paragraph{The interleaved grid.} Let $m \geq 2$ and $k \geq 1$ be a pair of integers whose values will be optimized later on, dictating our base expansion factor, $m^{1/k}$. We begin by considering the geometric grid ${\cal G}_{m,k} = \{ m^{p/k} \cdot T_{\min}^* : p \in \bbZ \}$, where $T_{\min}^*$ is multiplied by all integer powers of $m^{1/k}$. Additionally, for a constant parameter $\alpha \in (1, m^{1/k})$ whose value will be optimized as well, we define the $\alpha$-scaled secondary grid ${\cal G}_{m,k}^{\alpha} = \{ \alpha m^{p/k} \cdot T_{\min}^* : p \in \bbZ \}$. The union of these two grids forms the interleaved grid ${\cal H}_{m,k}^{\alpha} = {\cal G}_{m,k} \cup {\cal G}_{m,k}^{\alpha}$. Moving forward, rather than directly utilizing ${\cal H}_{m,k}^{\alpha}$, we scale its elements by a multiplicative factor of $m^{\Theta/k}$, where $\Theta \sim U[0,1]$ represents a uniformly distributed shift, thereby creating the randomized interleaved grid ${\cal H}_{m,k, \Theta}^{ \alpha } = \{ m^{\Theta/k} \cdot g : g \in {\cal H}_{m,k}^{\alpha} \}$.

\paragraph{The operator $\myround^{\alpha}_{m,k, \Theta}$ and density coefficient ${\cal D}_{m,k,\Theta}^{ \alpha }$.} The first implication of these definitions is that, when any $t > 0$ is rounded up to the strictly greater nearest point in ${\cal H}_{m,k, \Theta}^{ \alpha }$, the resulting point is random. As such, our rounding operator $\myround^{\alpha}_{m,k, \Theta}( t ) = \min \{ g \in {\cal H}_{m,k, \Theta}^{ \alpha } : g > t \}$ is actually a random function. Similarly, the asymptotic density of the relevant grid multiples is random as well, with the density coefficient of ${\cal H}_{m,k, \Theta}^{ \alpha }$ specified by
\[ {\cal D}_{m,k,\Theta}^{ \alpha } ~~=~~ \lim_{\Delta \to \infty} \frac{ | \bigcup_{g \in {\cal H}_{m,k, \Theta}^{ \alpha } \cap [\myround^{\alpha}_{m,k, \Theta}( T_{\min}^* ),\infty) } {\cal M}_{g,\Delta} | }{ \Delta } \ . \]

\paragraph{Creating $T^{m,k, \alpha, \Theta}$.} Given an optimal convex relaxation solution $T^* = (T_{\min}^*, T_1^*, \ldots, T_n^*)$, the randomized interleaved policy $T^{m,k, \alpha, \Theta}$ is defined via our new rounding operator. Namely, we set $T^{m,k, \alpha, \Theta}_i = \myround^{\alpha}_{m,k, \Theta}( T_i^* )$ for every commodity $i \in [n]$. As before, since this operator strictly increases its argument, the resulting policy remains resource-feasible with probability $1$.

\paragraph{Parametric performance guarantee.} Next, we explain how to determine the expected values of the rounding operator $\myround^\alpha_{m,k, \Theta}$ and its reciprocal, which govern the individual ordering and holding costs of each commodity. As stated below, these values admit closed-form expressions, including dependencies on how the offset $\alpha$ splits the logarithmic interval of the base grid. Furthermore, we derive an analogous expression with respect to the expected density coefficient ${\cal D}_{m,k,\Theta}^{ \alpha }$. In this context, by extending the notation of Section~\ref{subsec:det_rounding}, we make use of ${\cal S}_{m,k}^{ \alpha, \myprim } = \{ m^{(\kappa-1)/k} \}_{\kappa \in [k]} \cup \{ \alpha m^{(\kappa-1)/k} \}_{\kappa \in [k]}$ to designate the primary base multipliers of the interleaved grid. Similarly, ${\cal S}_{m,k}^{ \alpha, \mysecond } = \{ m^{(\kappa-1)/k} \}_{\kappa \in [k]} \cup \{ \frac{1}{\alpha} m^{\kappa/k} \}_{\kappa \in [k]}$ will denote its secondary base multipliers. For convenience, their corresponding inclusion-exclusion summations will be denoted by $\Sigma_{m,k}^{ \alpha, \myprim} = \sum_{\emptyset \neq {\cal N} \subseteq {\cal S}_{m,k}^{ \alpha, \myprim}} \frac{ (-1)^{ |{\cal N}| + 1} }{ \mylcm({\cal N}) }$ and $\Sigma_{m,k}^{ \alpha, \mysecond} = \sum_{\emptyset \neq {\cal N} \subseteq {\cal S}_{m,k}^{ \alpha, \mysecond}} \frac{ (-1)^{ |{\cal N}| + 1} }{ \mylcm({\cal N}) }$. The proof of this result is presented in Section~\ref{subsec:proof_clm_prop_interleaved_rounding_final}; as a sanity check, one can verify that Claim~\ref{clm:prop_shifted} is recovered by setting $\alpha = 1$.

\begin{claim} \label{clm:prop_interleaved_rounding_final}
\begin{enumerate}
    \item $\exsubpar{ \Theta }{ \myround^\alpha_{m,k, \Theta}(t) } = \frac{ k }{ \ln m } \cdot ( \alpha + \frac{ m^{1/k} }{\alpha} - 2 ) \cdot t$, for every $t > 0$. \label{item:final_prop_1}

    \item $\exsubpar{ \Theta }{ \frac{ 1 }{ \myround^\alpha_{m,k, \Theta}(t) } } = \frac{ k }{ \ln m } \cdot ( 2 - \frac{1}{\alpha} - \frac{\alpha}{m^{1/k}} ) \cdot \frac{ 1 }{ t } $, for every $t > 0$. \label{item:final_prop_2} 

    \item $\exsubpar{ \Theta }{ {\cal D}_{m,k,\Theta}^{ \alpha } } = \frac{ k }{ \ln m } \cdot ( ( 1 - \frac{\alpha}{ m^{1/k} } ) \cdot \Sigma_{m,k}^{ \alpha, \myprim} + ( 1 - \frac{1}{\alpha} ) \cdot \Sigma_{m,k}^{ \alpha, \mysecond} ) \cdot \frac{ 1 }{ T_{\min}^* } $. \label{item:final_prop_3}
    \end{enumerate}
\end{claim}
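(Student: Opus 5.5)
We follow the template of the proof of Claim~\ref{clm:prop_shifted}, working in the logarithmic domain with base $B = m^{1/k}$. Write $a = \log_B \alpha \in (0,1)$. Then the randomized interleaved grid ${\cal H}_{m,k,\Theta}^{\alpha}$ consists of the points $T_{\min}^* \cdot B^{p+\Theta}$ and $T_{\min}^* \cdot B^{p+a+\Theta}$, for $p \in \bbZ$. In log-coordinates, these points split each unit cycle into two gaps, of lengths $a$ and $1-a$.

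\textbf{Items~\ref{item:final_prop_1} and~\ref{item:final_prop_2}.} Fix $t>0$ and let $\phi$ be the fractional part of $L_t = \log_B(t/T_{\min}^*)$. As $\Theta$ ranges over $[0,1]$, the log-distance $\delta$ from $L_t$ to the next strictly greater grid point is obtained by a piecewise case disjunction analogous to~\eqref{eqn:def_rounding_mktheta}, now with four pieces instead of two. The key observation is that the map $\Theta \mapsto \delta$ pushes the uniform measure on $[0,1]$ to the uniform measure on $(0,a]$ with mass $a$, plus the uniform measure on $(0,1-a]$ with mass $1-a$. Equivalently, $\delta$ has density equal to the number of gaps longer than $\delta$. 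This is the usual fact that a uniformly shifted point lands in a gap with probability proportional to the gap's length, and is then uniform inside it. In particular, the resulting distribution does not depend on $\phi$. Consequently,
\[ \exsub{\Theta}{\myround^\alpha_{m,k,\Theta}(t)} = t\left(\int_0^a B^{\delta}\,\mathrm{d}\delta + \int_0^{1-a} B^{\delta}\,\mathrm{d}\delta\right) = \frac{t}{\ln B}\left(\alpha - 1 + \frac{B}{\alpha} - 1\right), \]
which is item~\ref{item:final_prop_1} once we substitute $\ln B = \frac{1}{k}\ln m$. The same integral with $B^{-\delta}$ gives $\frac{1}{t \ln B}\bigl((1-\frac{1}{\alpha}) + (1-\frac{\alpha}{B})\bigr)$, which is item~\ref{item:final_prop_2}. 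As a check, setting $\alpha=1$ (so $a=0$) recovers Claim~\ref{clm:prop_shifted}.

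\textbf{Item~\ref{item:final_prop_3}.} We apply the same decomposition at $t = T_{\min}^*$, where $\phi = 0$ and $L_t=0$ is itself a grid point of the unshifted primary grid. There are two cases, according to where the first point $g_0 = \myround^\alpha_{m,k,\Theta}(T_{\min}^*)$ lands.

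\emph{Case 1: $\Theta \in (1-a, 1]$.} The first point comes from the primary family, at log-distance $\delta = \Theta \in (1-a,1]$. Together with the subsequent point $\alpha g_0$, the grid points from $g_0$ upward are $g_0$ times the elements $B^{\kappa-1}$ and $\alpha B^{\kappa-1}$, together with their multiples by integer powers of $m$. Since ${\cal M}_{mg,\Delta} \subseteq {\cal M}_{g,\Delta}$, the union reduces to the multipliers in ${\cal S}^{\alpha,\myprim}_{m,k}$. The inclusion-exclusion argument of Claim~\ref{clm:exact_density} then gives ${\cal D} = \Sigma^{\alpha,\myprim}_{m,k}/g_0$.

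\emph{Case 2: $\Theta \in (0, 1-a]$.} The first point comes from the secondary family, at log-distance $\delta = a + \Theta \in (a,1]$. The next point is $(B/\alpha) g_0$, and the relevant multiplier set is ${\cal S}^{\alpha,\mysecond}_{m,k}$. This gives ${\cal D} = \Sigma^{\alpha,\mysecond}_{m,k}/g_0$.

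We now average $1/g_0 = B^{-\delta}/T_{\min}^*$ over each case. For Case 1,
\[ \int_{1-a}^{1} B^{-\theta}\,\mathrm{d}\theta = \frac{1}{\ln B}\left(\frac{\alpha}{B} - \frac{1}{B}\right). \]
This appears to disagree with the claimed coefficient $\frac{1}{\ln B}(1-\frac{\alpha}{B})$ for $\Sigma^{\rm prim}$, so the bookkeeping needs care.

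\textbf{Main obstacle.} The obstacle is exactly this bookkeeping: matching each case to the correct multiplier set and the correct normalization. Recheck Case 2. Its interval of $\delta$ values is $(a,1]$, and $\int_a^1 B^{-\delta}\,\mathrm{d}\delta = \frac{1}{\ln B}(\frac{1}{\alpha}-\frac{1}{B})$. This also differs from the claimed $(1-\frac{1}{\alpha})$. The resolution is the normalization: the claim is stated in terms of $\Sigma$ evaluated relative to the smallest element of the active multiplier set. I expect that rescaling $g_0$ to the primary-grid reference point absorbs the factors of $B$, respectively $\alpha$. For example, in Case 2 one writes $g_0 = \alpha\cdot(\text{primary point})$, which converts the coefficient into one of the form $\frac{1}{\ln B}(1-\frac{1}{\alpha})$ once the roles of ${\cal S}^{\rm prim}$ and ${\cal S}^{\rm sec}$ are identified correctly.

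So the plan is to redo this step carefully. For each range of $\Theta$, I will determine which of the two families supplies $g_0$, and express $g_0$ against the reference point whose multiplier set yields the claimed $\Sigma$. The two integrals should then come out as $(1-\frac{\alpha}{B})$ and $(1-\frac{1}{\alpha})$, each divided by $\ln B$, and summing them gives item~\ref{item:final_prop_3}. Everything else is routine integration.
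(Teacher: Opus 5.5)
Items~1 and~2 are correct, and your route there is cleaner than the paper's. The paper writes out three-piece formulas for $\myround^\alpha_{m,k,\Theta}(t)$ separately when $L_t \le \lfloor L_t\rfloor+\beta$ and when $L_t>\lfloor L_t\rfloor+\beta$, and integrates each piece. You instead observe that the log-distance $\delta$ to the next grid point is uniform on $(0,a]$ with mass $a$, plus uniform on $(0,1-a]$ with mass $1-a$. That makes independence from $L_t$ immediate.

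Item~3, however, is not proved, and the error is in identifying $g_0=\myround^\alpha_{m,k,\Theta}(T^*_{\min})$. In log-coordinates, the candidates strictly above $0$ are the primary point $\Theta$ and the secondary point, which sits at $a+\Theta$ or at $a+\Theta-1$. In both of your cases you picked the later candidate instead of the earlier one:
\begin{itemize}
\item For $\Theta\in(0,1-a]$, the primary point $\Theta$ comes before $a+\Theta$, and $a+\Theta-1\le 0$ is not eligible. So $g_0=T^*_{\min}B^{\Theta}$ is primary.
\item For $\Theta\in(1-a,1]$, the secondary point $a+\Theta-1\in(0,a]$ is positive and comes before $\Theta$. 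So $g_0=T^*_{\min}B^{a+\Theta-1}$ is secondary.
\end{itemize}
Your pairing of ``primary first'' with ${\cal S}^{\alpha,\myprim}_{m,k}$ and ``secondary first'' with ${\cal S}^{\alpha,\mysecond}_{m,k}$ is right. The $\Theta$-ranges and log-distances attached to the two cases are what is wrong.

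The ``normalization'' fix you propose does not work, and none is needed. Both multiplier sets have $1$ as their smallest element, so ${\cal D}^\alpha_{m,k,\Theta}=\Sigma/g_0$ is already correctly normalized. Your own item~2 exposes the problem: $\mathbb{E}[1/g_0]$ is exactly item~2 evaluated at $t=T^*_{\min}$. Hence your two coefficients must sum to $2-\frac{1}{\alpha}-\frac{\alpha}{B}$, but yours sum to $\frac{\alpha-1}{B}+\frac{1}{\alpha}-\frac{1}{B}$. For example, with $B=2$ and $\alpha=\frac{3}{2}$, you get $\frac{5}{12}$ instead of $\frac{7}{12}$. No rescaling by $B$ or $\alpha$ repairs that, and your closing plan only expresses an expectation that the integrals ``should come out'' right.

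With the correct $g_0$, the computation is immediate:
\[ \int_0^{1-a} B^{-\theta}\,\mathrm{d}\theta = \frac{1}{\ln B}\Bigl(1-\frac{\alpha}{B}\Bigr), \qquad \int_{1-a}^{1} B^{-(a+\theta-1)}\,\mathrm{d}\theta = \int_0^{a} B^{-u}\,\mathrm{d}u = \frac{1}{\ln B}\Bigl(1-\frac{1}{\alpha}\Bigr). \]
This matches the paper's case disjunction. It is also just your item-2 decomposition at $L_t=0$. The gap of length $1-a$ ends at a primary point and contributes $1-\frac{\alpha}{B}$; the gap of length $a$ ends at a secondary point and contributes $1-\frac{1}{\alpha}$. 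Item~3 weights each contribution by the $\Sigma$ of the family at that gap's right endpoint.
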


Given these findings, the expected operational cost of $T^{m,k, \alpha, \Theta}$ can be related to the convex relaxation optimum in terms of the grid parameters $m$, $k$, and $\alpha$, by noting that
\begin{align}
& \exsub{ \Theta }{ F(T^{m,k, \alpha, \Theta}) } ~~\leq~~ K_0 \cdot \exsub{ \Theta }{ {\cal D}_{m,k, \Theta}^{ \alpha }} + \sum_{i \in [n]} \left( K_i \cdot \exsub{ \Theta }{ \frac{ 1 }{ T^{m,k, \alpha, \Theta}_i } } + H_i \cdot \exsub{ \Theta }{ T^{m,k, \alpha, \Theta}_i } \right) \nonumber \\
& \qquad =~~ K_0 \cdot \exsub{ \Theta }{ {\cal D}_{m,k, \Theta}^{ \alpha }} + \sum_{i \in [n]} \left( K_i \cdot \exsub{ \Theta }{ \frac{ 1 }{ \myround^{\alpha}_{m,k, \Theta}( T_i^* ) } } + H_i \cdot \exsub{ \Theta }{ \myround^{\alpha}_{m,k, \Theta}( T_i^* ) } \right) \nonumber \\
& \qquad \leq~~ \underbrace{ \frac{ k }{ \ln m } \cdot \left( \left( 1 - \frac{\alpha}{ m^{1/k} } \right) \cdot \Sigma_{m,k}^{ \alpha, \myprim} + \left( 1 - \frac{1}{\alpha} \right) \cdot \Sigma_{m,k}^{ \alpha, \mysecond} \right) }_{ V_J(m,k,\alpha) }\cdot \frac{ K_0 }{ T_{\min}^* } \nonumber \\
& \qquad \qquad \mbox{} + \underbrace{ \vphantom{\frac{ k }{ \ln m } \cdot \left( \alpha + \frac{ m^{1/k} }{\alpha} - 2 \right)} \frac{ k }{ \ln m } \cdot \left( 2 - \frac{1}{\alpha} - \frac{\alpha}{m^{1/k}} \right) }_{ \leq 1 } \cdot \sum_{i \in [n]} \frac{ K_i }{ T_i^* } + \underbrace{ \frac{ k }{ \ln m } \cdot \left( \alpha + \frac{ m^{1/k} }{\alpha} - 2 \right) }_{ V_H(m,k,\alpha) } \cdot \sum_{i \in [n]} H_i T_i^* \ . \label{eqn:bound_interleaved_shifted}
\end{align}

\subsection{Policy instantiation} \label{subsec:interleave_inst}

\paragraph{Choice of parameters.} From this point on, we will investigate the specific configuration where $m = 2$ and $k = 1$, along with an offset of $\alpha = \frac{ 3 }{ 2 }$. In other words, our randomized policy will be $T^{2, 1, 3/2, \Theta}$, with the convention that $T^{2, 1, 3/2, \Theta}_i = \myround^{3/2}_{2,1, \Theta}( T_i^* )$ for every commodity $i \in [n]$. Intuitively, the latter offset is small enough to significantly subdivide the primary grid, thereby reducing holding cost inflation. Furthermore, by choosing $\alpha = \frac{ 3 }{ 2 }$, we ensure that the secondary grid points are well-staggered relative to the primary ones, leading to useful synchronization. Specifically, since the current expansion factor and offset are both simple rationals, integer multiples of these grids frequently overlap, preventing the joint ordering density from increasing as sharply as the sum of its parts. Notably, in Section~\ref{subsec:optimal_conf_inter}, we show that the choice of $m = 2$, $k = 1$, and $\alpha = \frac{ 3 }{ 2 }$ is best possible for the class of interleaved power-of-$m^{1/k}$ policies.

\paragraph{Performance guarantee.} To quantify the expected operational cost of $T^{2, 1, 3/2, \Theta}$, let us consider the three coefficients appearing in our upper bound~\eqref{eqn:bound_interleaved_shifted}. Beyond directly plugging in $m=2$, $k=1$, and $\alpha = \frac{ 3 }{ 2 }$, it remains to evaluate the inclusion-exclusion summation over the primary base multipliers ${\cal S}_{2,1}^{ 3/2, \myprim} = \{ 1, \frac{ 3 }{ 2 } \}$, which amounts to $\Sigma_{2,1}^{ 3/2, \myprim} = 1 + \frac{ 2 }{ 3 } - \frac{ 1 }{ \mylcm(1, 3/2) } = \frac{ 4 }{ 3 }$. Similarly, in regard to the secondary base multipliers ${\cal S}_{2,1}^{ 3/2, \mysecond} = \{ 1, \frac{ 4 }{ 3 } \}$, we have $\Sigma_{2,1}^{ 3/2, \mysecond} = 1 + \frac{ 3 }{ 4 } - \frac{ 1 }{ \mylcm(1, 4/3) } = \frac{ 3 }{ 2 }$. Consequently, we arrive at an upper bound of 
\begin{equation} \label{eqn:UB_interleaved_2_1_1.5}
\exsub{ \Theta }{ F(T^{2, 1, 3/2, \Theta}) } ~~\leq~~ \frac{ 5 }{ 6\ln 2 } \cdot \frac{ K_0 }{ T_{\min}^* } + \frac{ 7 }{ 12\ln 2 } \cdot \sum_{i \in [n]} \frac{ K_i }{ T_i^* } + \frac{ 5 }{ 6\ln 2 } \cdot \sum_{i \in [n]} H_i T_i^* 
\end{equation}
on the expected operational cost of $T^{2,1, 3/2, \Theta}$, which is within factor $\frac{ 5 }{ 6\ln 2 } < 1.2023$ of the convex relaxation optimum. This result constitutes a significant improvement over the best-of-two randomized policies of Section~\ref{subsec:factor_reveal}, which provided an approximation ratio of $1.2512$.

\begin{lemma} \label{lem:approx_interleaved_specialized}
$\exsubpar{ \Theta }{ F( T^{2,1, 3/2, \Theta} ) } \leq \frac{ 5 }{ 6 \ln 2 } \cdot \opt\eqref{eqn:conv_relax_RCJRP}$.
\end{lemma}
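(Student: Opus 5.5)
The plan is to instantiate the parametric bound~\eqref{eqn:bound_interleaved_shifted}, which follows from Claim~\ref{clm:prop_interleaved_rounding_final}, at $m=2$, $k=1$ and $\alpha=\frac{3}{2}$. We then check that each of the three coefficients is at most $\frac{5}{6\ln 2}$. Since $\opt\eqref{eqn:conv_relax_RCJRP} = \frac{K_0}{T_{\min}^*} + \sum_{i\in[n]} ( \frac{K_i}{T_i^*} + H_i T_i^* )$ and all three terms are nonnegative, bounding every coefficient by the same constant gives the lemma at once. Resource feasibility holds with probability $1$ because the operator strictly inflates its argument, so it needs no further argument.

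First, the holding multiplier. By item~\ref{item:final_prop_1}, it equals $\frac{1}{\ln 2}( \frac{3}{2} + \frac{4}{3} - 2 ) = \frac{1}{\ln 2}\cdot\frac{5}{6}$.

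Second, the individual ordering multiplier. By item~\ref{item:final_prop_2}, it equals $\frac{1}{\ln 2}( 2 - \frac{2}{3} - \frac{3}{4} ) = \frac{7}{12\ln 2}$, which is smaller than $\frac{5}{6\ln 2}$.

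Third, the density multiplier from item~\ref{item:final_prop_3}. This requires the two inclusion-exclusion sums.
\begin{itemize}
\item For ${\cal S}_{2,1}^{3/2,\myprim} = \{1, \frac{3}{2}\}$, the least common multiple is $\mylcm(1,\frac{3}{2}) = 3$. Hence $\Sigma^{3/2,\myprim}_{2,1} = 1 + \frac{2}{3} - \frac{1}{3} = \frac{4}{3}$.
\item For ${\cal S}_{2,1}^{3/2,\mysecond} = \{1, \frac{4}{3}\}$, we have $\mylcm(1,\frac{4}{3}) = 4$. Hence $\Sigma^{3/2,\mysecond}_{2,1} = 1 + \frac{3}{4} - \frac{1}{4} = \frac{3}{2}$.
\end{itemize}
Plugging in, $V_J = \frac{1}{\ln 2}( (1-\frac{3}{4})\cdot\frac{4}{3} + (1-\frac{2}{3})\cdot\frac{3}{2} ) = \frac{1}{\ln 2}( \frac{1}{3} + \frac{1}{2} ) = \frac{5}{6\ln 2}$. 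This yields~\eqref{eqn:UB_interleaved_2_1_1.5}, and hence the lemma.

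The only step needing care is the least common multiple of non-integers. For a finite set ${\cal N}\subseteq\bbR_{++}$ of rationals, $\mylcm({\cal N})$ is the smallest positive real that is an integer multiple of every element. For $\{1,\frac{3}{2}\}$, the multiples of $\frac{3}{2}$ are $\frac{3j}{2}$, and the first integer among them is $3$. For $\{1,\frac{4}{3}\}$, the multiples of $\frac{4}{3}$ are $\frac{4j}{3}$, and the first integer among them is $4$.

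The real content lies in Claim~\ref{clm:prop_interleaved_rounding_final}, which we take as given. In particular, it handles the fact that the interleaved grid above $\myround^{\alpha}_{2,1,\Theta}(T^*_{\min})$ starts at a primary or a secondary point depending on $\Theta$. Beyond that, the proof is pure arithmetic.
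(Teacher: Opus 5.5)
Your proposal is correct and follows the paper's argument essentially line for line. You instantiate~\eqref{eqn:bound_interleaved_shifted} at $m=2$, $k=1$, $\alpha=\frac{3}{2}$ and evaluate $\Sigma_{2,1}^{3/2,\myprim}=\frac{4}{3}$ and $\Sigma_{2,1}^{3/2,\mysecond}=\frac{3}{2}$ via $\mylcm(1,\frac{3}{2})=3$ and $\mylcm(1,\frac{4}{3})=4$, which yields the coefficients $\frac{5}{6\ln 2}$, $\frac{7}{12\ln 2}$, $\frac{5}{6\ln 2}$ of~\eqref{eqn:UB_interleaved_2_1_1.5}; your arithmetic checks out, and, like the paper, you rightly leave the real work to Claim~\ref{clm:prop_interleaved_rounding_final}.
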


\subsection{Optimality of single-shot configuration} \label{subsec:optimal_conf_inter}

In what follows, we prove that the configuration considered in Section~\ref{subsec:interleave_inst} is optimal, in the sense of minimizing the largest coefficient we are seeing in our general-purpose bound~\eqref{eqn:bound_interleaved_shifted}. Interestingly, additional optimal configurations exist; for instance, the upcoming analysis will demonstrate that $m = 2$, $k = 1$, and $\alpha = \frac{ 4 }{ 3 }$ is best possible as well.

\begin{lemma} \label{lem:optimal_configuration}
A configuration of integers $m \in \bbZ_{\geq 2}$ and $k \in \bbZ_{\geq 1}$, along with an offset $\alpha \in (1, m^{1/k})$, that minimizes $\max \{ V_J(m,k,\alpha), V_H(m,k,\alpha) \}$ is given by $m^* = 2$, $k^* = 1$, and $\alpha^* = \frac{ 3 }{ 2 }$.
\end{lemma}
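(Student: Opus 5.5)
Write $B = m^{1/k}$. The goal is a universal lower bound $\max\{V_J(m,k,\alpha), V_H(m,k,\alpha)\} \ge \frac{5}{6\ln 2}$ for every admissible configuration. Lemma~\ref{lem:approx_interleaved_specialized} already shows that $(2,1,\frac{3}{2})$ attains this value.

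\textbf{Bounding $V_H$.} The holding multiplier is
\[ V_H = \frac{\alpha + B/\alpha - 2}{\ln B}. \]
By AM--GM it is minimized at $\alpha = \sqrt{B}$, which gives $V_H \ge \frac{(\sqrt{B}-1)^2}{\ln B} = \frac{2(\sqrt{B}-1)^2}{\ln B^2 \cdot 1}\cdot\frac{1}{1}$, i.e.\ $V_H \ge \frac{(\sqrt B-1)^2}{\ln B}$. This quantity is increasing in $B$. It suffices to rule out every configuration in which $B$ is large enough that this minimum already exceeds $\frac{5}{6\ln 2}$. That reduces the problem to $B$ in a bounded range $(1, B_0]$.

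Whenever one of the two multipliers is at most $\frac{5}{6\ln 2}$, I will show the other is at least that value.

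\textbf{Bounding $V_J$.} For the density multiplier, I will use a lower bound on each inclusion--exclusion sum. Each sum is the asymptotic density of a union of arithmetic progressions with periods in ${\cal S}$. A union of progressions has density at least that of the densest single progression. It is also at least the density of the union of any two progressions, which equals $\frac{1}{a} + \frac{1}{b} - \frac{1}{\mylcm(a,b)}$.

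So $\Sigma^{\myprim} \ge 1 + \frac{1}{\alpha} - \frac{1}{\mylcm(1,\alpha)}$. Since $\mylcm(1,\alpha) = \infty$ unless $\alpha = p/q$ in lowest terms, in which case it equals $p$, this gives $\Sigma^{\myprim} \ge 1 + \frac{1}{\alpha} - \frac{1}{p}$. Similarly, $\Sigma^{\mysecond} \ge 1 + \frac{\alpha}{B} - \frac{1}{\mylcm(1, B/\alpha)}$.

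\textbf{Case split on $k$.} For $k \ge 2$, additional terms $m^{(\kappa-1)/k}$ enter the sums. Here $B$ is irrational or $k$ can be reduced, so these progressions are essentially incommensurable with $1$. I expect them to push $V_J$ above the threshold in the relevant window of $B$. The lower bound $\frac{k}{\ln m}(1 - 1/m)$, mirroring Claim~\ref{clm:optimal_base_shifted}, should handle this case together with the $V_H$ bound.

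For $k = 1$ and $m \ge 3$, the $V_H$ bound already exceeds the threshold once $m$ is moderately large. I will check the few remaining small $m$ directly, minimizing over $\alpha$.

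The core case is $m = 2$, $k = 1$, with $\alpha \in (1,2)$. There
\[ V_J = \frac{1}{\ln 2}\left( \left(1 - \frac{\alpha}{2}\right)\left(1 + \frac{1}{\alpha} - \frac{1}{p}\right) + \left(1 - \frac{1}{\alpha}\right)\left(1 + \frac{\alpha}{2} - \frac{1}{p'}\right) \right), \]
where $p$ and $p'$ are the numerators of $\alpha$ and $2/\alpha$ in lowest terms. The requirement $V_H \le \frac{5}{6\ln 2}$ forces $\alpha + 2/\alpha \le \frac{17}{6}$, i.e.\ $\alpha \in [\frac{4}{3}, \frac{3}{2}]$.

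On this interval I will show $V_J \ge \frac{5}{6\ln 2}$. For irrational $\alpha$, or rationals with large numerators, the $1/p$ terms are negligible and $V_J$ is visibly too large. Only the rationals with small numerators in $[\frac{4}{3}, \frac{3}{2}]$ remain, namely $\frac{4}{3}, \frac{7}{5}, \frac{3}{2}$. I will check these by direct computation; the endpoints $\frac{4}{3}$ and $\frac{3}{2}$ give exactly $\frac{5}{6}$.

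\textbf{Main obstacle.} The delicate step is making the reduction to finitely many rationals rigorous, given the number-theoretic dependence of $\mylcm$ on $\alpha$. I would bound $\frac{1}{p}+\frac{1}{p'}$ by a small constant whenever $p, p' \ge 5$. Then I would verify that even the most favorable such bound leaves $V_J > \frac{5}{6\ln 2}$ throughout $[\frac{4}{3}, \frac{3}{2}]$.
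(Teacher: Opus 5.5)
Your architecture is the paper's. You use $\min_\alpha V_H$ to confine $B=m^{1/k}$ to a small window, handle irrational $B$ through a lower bound on $V_J$, and for $B=2$ confine $\alpha$ to $[\frac43,\frac32]$ and then do the $\mylcm$ arithmetic. However, the step everything rests on is miscomputed. At $\alpha=\sqrt B$ one gets $\alpha+\frac{B}{\alpha}-2=2(\sqrt B-1)$, so the correct bound is $V_H\ge\frac{2(\sqrt B-1)}{\ln B}$, not $\frac{(\sqrt B-1)^2}{\ln B}$. The latter is not even a valid lower bound once $B>9$: take $B=16$, $\alpha=4$, which gives $V_H=\frac{6}{\ln 16}<\frac{9}{\ln 16}$.

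This error matters. With your expression, the region excluded by $V_H$ only begins around $B\approx 6.1$, whereas the correct expression already exceeds $\frac{5}{6\ln 2}$ at $B=\frac{41}{20}$. On your enlarged window, the tool you propose for $k\ge 2$ cannot close the argument. For $(m,k)=(5,2)$, the bound $V_J\ge\frac{k(1-1/m)}{\ln m}$ gives only $\frac{4/5}{\ln\sqrt5}\approx 0.994$. That configuration is eliminated by the correct $V_H$ bound ($\approx 1.231$), not by your $V_J$ bound. With the fix, the case $k=1$, $m\ge 3$ disappears at once, and the only rational $B$ left is $2$.

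Once corrected, your plan does work, and in two places it genuinely differs from the paper's.

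\emph{Irrational $B$.} First reduce $(m,k)$ so that no $B^j$ with $0<j<k$ is rational; this is needed for $\Sigma_k=\frac{1-1/m}{1-1/B}$. Monotonicity of union densities then gives $\Sigma^{\myprim},\Sigma^{\mysecond}\ge\Sigma_k$. Since $2-\frac{\alpha}{B}-\frac{1}{\alpha}$ is concave with value $1-\frac1B$ at both ends of $[1,B]$, you get $V_J\ge\frac{1-1/m}{\ln B}$ uniformly in $\alpha$. For $B<\frac{41}{20}$ this exceeds $\frac{5}{6\ln 2}$ whenever $m\ge 8$, and a direct check covers $m\le 7$. The margin at $(3,2)$ is thin ($\approx 1.2137$ versus $\frac{5}{6\ln 2}\approx 1.2022$), so this check must actually be carried out, not merely expected. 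This route bypasses Claim~\ref{clm:LB_V_J_irrational}, with its three commensurability cases, Bonferroni step and stronger bound $\frac{3(1-1/m)}{2\ln B}$. It also avoids the separate exclusion of $\alpha=B/2$, at the cost of a small case enumeration.

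\emph{The case $B=2$.} Your coarse bound suffices. On $[\frac43,\frac32]$, both $1-\frac{\alpha}{2}$ and $1-\frac{1}{\alpha}$ are at most $\frac13$, so $p,p'\ge 5$ yields $V_J\ln 2\ge 1-\frac{2}{15}>\frac56$. If instead $p\le 4$ or $p'\le 4$, then $\alpha\in\{\frac43,\frac32\}$. In particular $\frac75$, with $p=7$ and $p'=10$, needs no separate check. The paper instead computes the exact values $1-\frac{1}{2p}$ for odd $p$ and $1-\frac{1}{2q}$ for even $p$, which is sharper but needs the parity split.
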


As previously explained, by plugging the above-mentioned configuration into our density and holding cost coefficients, one can ascertain that $V_J(2,1,\frac{3}{2}) = V_H(2,1,\frac{3}{2}) = \frac{ 5 }{ 6 \ln 2}$. Moving forward, we show that in any other case, $\max \{ V_J(m,k,\alpha), V_H(m,k,\alpha) \} \geq \frac{ 5 }{ 6 \ln 2}$. 

\paragraph{Holding coefficient constraints.} To this end, letting $B = m^{1/k}$ be our base expansion factor, we observe that the holding cost coefficient $V_H(B,\alpha) = \frac{1}{\ln B} \cdot ( \alpha + \frac{B}{\alpha} - 2 )$ is strictly convex in $\alpha$, and attains its global minimum over $(0,\infty)$ at $\alpha = \sqrt{B}$. Therefore, for any configuration, $V_H(B,\alpha) \geq \frac{2(\sqrt{B} - 1)}{\ln B}$. Noting that the function $B \mapsto \frac{2(\sqrt{B} - 1)}{\ln B}$ is strictly increasing over $(1,\infty)$, it follows that $V_H(B,\alpha) \geq \frac{2(\sqrt{41/20} - 1)}{\ln (41/20)} > \frac{5}{6 \ln 2}$ for every $B \geq \frac{ 41 }{ 20 }$. As such, optimal configurations must pick $B < \frac{ 41 }{ 20 }$. In turn, we observe that such configurations cannot select an offset of the form $\alpha = \frac{ B }{ q }$, for any integer $q \geq 1$. Indeed, since $\alpha \in (1,B)$ and $B < \frac{ 41 }{ 20 }$, both $q=1$ and $q \geq 3$ are infeasible choices. The remaining case of $q \geq 2$ cannot be optimal, as our holding cost coefficient would become 
\[ V_H \left( B, \frac{ B }{ 2 } \right) ~~=~~ \frac{B}{2\ln B} ~~\geq~~ \min_{B \in (1,\infty) } \left\{ \frac{B}{2\ln B} \right\} ~~=~~ \frac{ e }{ 2 } ~~>~~ \frac{ 5 }{ 6 \ln 2 } \ . \]
We proceed by considering two cases, depending on whether $B$ is rational or not.

\paragraph{Case 1: $B$ is irrational.} The next auxiliary claim, whose proof is provided in Appendix~\ref{app:proof_clm_LB_V_J_irrational}, establishes a lower bound on the density coefficient $V_J(B,\alpha)$ in the scenario of an irrational expansion factor.  

\begin{claim} \label{clm:LB_V_J_irrational}
$V_J(B, \alpha) \geq \frac{ 3(1-1/m) }{2 \ln B }$, when $B < \frac{ 41 }{ 20 }$ is irrational and $\alpha \notin \{ \frac{ B }{ q } : q \in \bbN \}$.
\end{claim}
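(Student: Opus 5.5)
The plan is to bound the two inclusion--exclusion sums $\Sigma_{m,k}^{\alpha,\myprim}$ and $\Sigma_{m,k}^{\alpha,\mysecond}$ from below by interpreting them as asymptotic densities. Recall $V_J(B,\alpha)=\frac{1}{\ln B}\bigl((1-\frac{\alpha}{B})\Sigma_{m,k}^{\alpha,\myprim}+(1-\frac1\alpha)\Sigma_{m,k}^{\alpha,\mysecond}\bigr)$ with $B=m^{1/k}$. By the same inclusion--exclusion identity as in the proof of Claim~\ref{clm:exact_density}, each $\Sigma$ equals $\lim_{\Delta\to\infty}\frac1\Delta|\bigcup_{s\in{\cal S}}{\cal M}_{s,\Delta}|$ for the corresponding multiplier set ${\cal S}$. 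We may take $k$ minimal, so that $B^{\ell}$ is irrational for $0<|\ell|<k$ and the powers $\{B^{\kappa-1}\}_{\kappa\in[k]}$ are pairwise incommensurable. Write $P=\sum_{\kappa\in[k]}B^{-(\kappa-1)}=\frac{(1-1/m)B}{B-1}$.

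\textbf{Step 1: commensurability structure.} Consider ${\cal S}_{m,k}^{\alpha,\myprim}$. Each secondary element $\alpha B^{j}$ is commensurable with at most one primary element $B^{i}$: if it were commensurable with both $B^{i}$ and $B^{i'}$, then $B^{i-i'}$ would be rational. Hence any intersection involving an incommensurable pair has zero density. The union density is therefore the sum of the reciprocals, minus $1/\mylcm(x,y)$ over the disjoint commensurable pairs $\{x,y\}$.

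Next, the ratio within a commensurable pair cannot be an integer. It cannot be $\alpha B^{j}=qB^{i}$: rationality forces $i=j$, so $\alpha=q\ge 2$. That case lies in $(1,B)$ only when $B\in(2,\frac{41}{20})$, and it is already excluded because $V_H(B,2)=\frac{B}{2\ln B}\ge \frac e2$. It also cannot be $B^{i}=q\alpha B^{j}$, which forces $\alpha$ to be of the form $\frac{B}{q}$ (or $\frac1q$, which is impossible), and this is excluded by hypothesis. The same argument applies verbatim to ${\cal S}_{m,k}^{\alpha,\mysecond}$, whose extra elements are $\frac{B^{\kappa}}{\alpha}$; there the non-integer ratio condition again reduces to $\alpha\notin\{\frac Bq\}$. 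Consequently, for each pair $x<y$ we have $\mylcm(x,y)\ge 2y$.

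\textbf{Step 2: density bounds.} Each pair contributes $\frac1x+\frac1y-\frac1{\mylcm}\ge\frac1x+\frac1{2y}$. So every element of the secondary family adds at least half its own reciprocal on top of the primary family's contribution. This holds whichever element of the pair is the larger one, since $\frac1x-\frac1{2y}\ge\frac1{2x}$ when $x<y$. This yields
\[
\Sigma_{m,k}^{\alpha,\myprim}\ \ge\ P\Bigl(1+\frac{1}{2\alpha}\Bigr), \qquad \Sigma_{m,k}^{\alpha,\mysecond}\ \ge\ P\Bigl(1+\frac{\alpha}{2B}\Bigr),
\]
where the second bound uses $\sum_{\kappa\in[k]}\frac{\alpha}{B^{\kappa}}=\frac{\alpha}{B}P$.

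\textbf{Step 3: algebra.} Substituting and expanding, the bracket in $V_J$ is at least $P\bigl(2-\frac1{2\alpha}-\frac{\alpha}{2B}-\frac1B\bigr)$. The target $V_J\ge\frac{3(1-1/m)}{2\ln B}$ becomes $B\bigl(2-\frac1{2\alpha}-\frac{\alpha}{2B}-\frac1B\bigr)\ge\frac32(B-1)$. This simplifies to $B+1\ge\frac B\alpha+\alpha$, i.e.\ $(\alpha-1)(\alpha-B)\le0$, which holds since $\alpha\in(1,B)$.

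The main obstacle is Step~1. One must carefully justify that higher-order intersections vanish and that the integer-ratio cases are exactly those excluded. In particular, the boundary elements $B^{k}/\alpha=m/\alpha$ in the secondary set and the possible non-minimality of $k$ need care; I would first reduce to minimal $k$, since $V_J$ and $V_H$ depend only on $B$ and $\alpha$ (noting $\frac{k}{\ln m}=\frac1{\ln B}$).
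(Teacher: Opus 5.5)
Your Step~1 is where the argument breaks. You assert that, once $\alpha\notin\{B/q\}$ (and $\alpha\ne 2$), no commensurable pair has an integer ratio; this is false. From $\alpha B^{j}=qB^{i}$ you infer $i=j$ ``by rationality'', but $\alpha$ need not be rational: $\alpha=q/B^{l}$ with $1\le l\le k-1$ lies in $(1,B)$ whenever $B^{l}<q<B^{l+1}$. Likewise, $B^{i}=q\alpha B^{j}$ only yields $\alpha=B^{i-j}/q$, and $i-j\ge 2$ is possible once $k\ge 3$. Finally, the wrap-around element $m/\alpha$ of ${\cal S}_{m,k}^{\alpha,\mysecond}$ is an integer multiple of $1$ whenever $\alpha=m/q$; you flagged this case but left it open. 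For such $\alpha$ a pair $x<y$ can have $\mylcm(x,y)=y$. Then the per-pair bound $\frac1x+\frac1{2y}$ fails, and so do both inequalities of Step~2.

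This cannot be repaired, because some of these configurations violate the claim itself. Take $m=6$, $k=3$, $B=6^{1/3}\approx1.817$ and $\alpha=B^{2}/2=3/B\approx1.651$, which meets every hypothesis. Then
\[
{\cal S}_{6,3}^{\alpha,\myprim}=\{1,B,B^{2}\}\cup\{B^{2}/2,3,3B\},\qquad {\cal S}_{6,3}^{\alpha,\mysecond}=\{1,B,B^{2}\}\cup\{2/B,2,2B\}.
\]
Every commensurable pair in these sets has an integer ratio (e.g.\ $B^{2}=3\cdot\frac2B$). Hence $\Sigma_{6,3}^{\alpha,\myprim}=1+\frac1B+\frac2{B^{2}}$ and $\Sigma_{6,3}^{\alpha,\mysecond}=1+\frac1B+\frac B2$. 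Using $B^{3}=6$,
\[
V_J(B,\alpha)\ln B~=~\Bigl(1-\frac{\alpha}{B}\Bigr)\Sigma_{6,3}^{\alpha,\myprim}+\Bigl(1-\frac1\alpha\Bigr)\Sigma_{6,3}^{\alpha,\mysecond}~=~\frac32-\frac{2}{B^{3}}~=~\frac76~<~\frac54~=~\frac32\Bigl(1-\frac16\Bigr).
\]
The case $\alpha=2$ that you set aside via $V_H$ is also a counterexample when $m$ is even. For $m=34$, $k=5$ ($B\approx2.024$), the same bracket equals $\frac32-\frac1{17}<\frac32(1-\frac1{34})$. Your reduction to minimal $k$ is likewise harmless only if the $m$ on the right-hand side is read as the minimal one, since that side depends on $m$.

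The paper's proof does not handle these cases either. Its Case~1 treats every cross pair in ${\cal S}_{B}^{\alpha,\mysecond}$ as incommensurable, overlooking $\{B^{j},B^{j}/\alpha\}$ and $\{1,m/\alpha\}$. Its Case~3 reduction to $\kappa=1$ is unjustified; the example above has $p=1$, $\kappa=2$, which the hypothesis does not exclude.

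What your Steps~2--3 do establish, more uniformly than the paper's case split, is the bound when $k$ is minimal and no cross pair has an integer ratio. That is, for $\alpha\notin\{q/B^{l}:0\le l<k\}\cup\{B^{l}/q:1\le l\le k\}$ with $q\in\bbN$. The remaining $\alpha$ need a separate argument. In the examples above $V_J$ still far exceeds $\frac{5}{6\ln 2}$, so the downstream use in Lemma~\ref{lem:optimal_configuration} may survive with a corrected claim.
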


We observe that this claim suffices to establish $V_J(B,\alpha) > \frac{ 5 }{ 6 \ln 2}$. Indeed, when $m \ge 3$, by recalling that $B < \frac{ 41 }{ 20 }$, we have $V_J(B,\alpha) > \frac{ 1 }{ \ln B } > \frac{ 1 }{ \ln (41/20) } > \frac{ 5 }{ 6 \ln 2}$. When $m=2$, since $B = 2^{1/k}$ is irrational by our case hypothesis, $k$ takes a value of at least $2$. As such, $B \leq \sqrt{2}$ and therefore $V_J(B, \alpha) > \frac{ 3 }{4 \ln B } \geq \frac{ 3 }{4 \ln \sqrt{2} } > \frac{ 5 }{ 6 \ln 2}$.

\paragraph{Case 2: $B$ is rational.} As previously noted, when $B = m^{1/k} > 1$ is rational, it must be an integer of value at least $2$. In parallel, we know that $B < \frac{ 41 }{ 20 }$, meaning that the only permissible expansion factor is $B = 2$ and that $m = 2^k$ for some integer $k \geq 1$. To simplify the upcoming discussion, we observe that any such configuration generates the exact same interleaved grid for any given $\alpha$, since its primary grid is ${\cal G}_{2^k,k} = \{ (2^k)^{p/k} \cdot T_{\min}^* : p \in \bbZ \} = \{ 2^p \cdot T_{\min}^* : p \in \bbZ \}$, which is invariant to $k$. Consequently, we assume without loss of generality that $m = 2$ and $k = 1$; for this setting, it remains to identify an optimal offset $\alpha \in (1, 2)$.

To ensure the holding cost coefficient $V_H(2, 1, \alpha)$ does not exceed our target threshold, we must have $\frac{1}{\ln 2} \cdot ( \alpha + \frac{2}{\alpha} - 2 ) \leq \frac{ 5 }{ 6 \ln 2 }$, which is equivalent to $\alpha \in [\frac{4}{3}, \frac{3}{2}]$. The next claim, whose proof appears in Appendix~\ref{app:proof_clm_LB_2_1}, shows that the density coefficient $V_J(2,1,\alpha)$ is lower-bounded by $\frac{ 5 }{ 6 \ln 2 }$, for any $\alpha$ in this interval.

\begin{claim} \label{clm:LB_2_1}
$V_J(2,1, \alpha) \geq \frac{ 5 }{ 6 \ln 2 }$, for every $\alpha \in [\frac{4}{3}, \frac{3}{2}]$.
\end{claim}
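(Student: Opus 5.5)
The plan is to reduce the claim to an elementary identity plus a short case analysis on the arithmetic nature of $\alpha$. Specialising the expression for $V_J$ in~\eqref{eqn:bound_interleaved_shifted} to $m=2$, $k=1$ gives
\[ \ln 2 \cdot V_J(2,1,\alpha) ~~=~~ \left( 1 - \frac{\alpha}{2} \right) \cdot \Sigma_{2,1}^{\alpha,\myprim} + \left( 1 - \frac{1}{\alpha} \right) \cdot \Sigma_{2,1}^{\alpha,\mysecond} \ . \]
Here ${\cal S}_{2,1}^{\alpha,\myprim} = \{1,\alpha\}$ and ${\cal S}_{2,1}^{\alpha,\mysecond} = \{1, \frac{2}{\alpha}\}$. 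Hence
\[ \Sigma_{2,1}^{\alpha,\myprim} = 1 + \frac{1}{\alpha} - \frac{1}{\mylcm(1,\alpha)} \qquad \text{and} \qquad \Sigma_{2,1}^{\alpha,\mysecond} = 1 + \frac{\alpha}{2} - \frac{1}{\mylcm(1,2/\alpha)} \ . \]

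\textbf{Step 1: the main terms sum to exactly $1$.} I would first discard the two $\mylcm$ terms and expand:
\[ \left(1-\frac{\alpha}{2}\right)\left(1+\frac{1}{\alpha}\right) + \left(1-\frac{1}{\alpha}\right)\left(1+\frac{\alpha}{2}\right) ~~=~~ 1 \ . \]
So $\ln 2 \cdot V_J(2,1,\alpha) = 1 - E(\alpha)$, where
\[ E(\alpha) ~~=~~ \frac{1-\alpha/2}{\mylcm(1,\alpha)} + \frac{1-1/\alpha}{\mylcm(1,2/\alpha)} \ , \]
and the claim becomes $E(\alpha) \leq \frac{1}{6}$ on $[\frac{4}{3},\frac{3}{2}]$.

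\textbf{Step 2: compute the $\mylcm$ terms.}
\begin{itemize}
\item If $\alpha$ is irrational, both $\mylcm$ values are $\infty$ by convention. Then $E(\alpha)=0$ and the claim holds strictly.
\item If $\alpha = p/q$ in lowest terms, the smallest positive integer multiple of $p/q$ that is an integer is $p$. So $\mylcm(1,\alpha) = p$, and likewise $\mylcm(1,2/\alpha) = p'$, the numerator of $2/\alpha$ in lowest terms.
\end{itemize}
Since $\alpha$ and $2/\alpha$ both lie in $[\frac{4}{3},\frac{3}{2}]$, a fraction there with numerator $2$ is impossible. Numerator $3$ forces the value $\frac{3}{2}$, and numerator $4$ forces $\frac{4}{3}$. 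Therefore $p,p' \geq 5$ except at the endpoints.

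\textbf{Step 3: finish by cases.}
\begin{itemize}
\item At $\alpha = \frac{3}{2}$ we have $p=3$ and $p'=4$. At $\alpha = \frac{4}{3}$ we have $p=4$ and $p'=3$. In both cases $E = \frac{1}{12}+\frac{1}{12} = \frac{1}{6}$, giving equality.
\item For every other rational $\alpha$ in the interval, $p,p'\geq 5$. The coefficients $1-\frac{\alpha}{2}$ and $1-\frac{1}{\alpha}$ are nonnegative, and their sum $2-\frac{\alpha}{2}-\frac{1}{\alpha}$ is at most $2-\sqrt{2}$, by AM--GM. Hence $E(\alpha) \leq \frac{2-\sqrt{2}}{5} < \frac{1}{6}$.
\end{itemize}
This establishes $V_J(2,1,\alpha) \geq \frac{5}{6\ln 2}$ on the whole interval.

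The only step that needs care is Step 2: correctly identifying $\mylcm(1,p/q)=p$, and pinning down which numerators are possible for fractions in the interval. Once the identity in Step 1 is noticed, the rest is mechanical.
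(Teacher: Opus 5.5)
Your proof is correct. It follows the paper's skeleton: equality at the endpoints, both cross terms vanish for irrational $\alpha$, and rational $\alpha = p/q$ is handled via $\mylcm(1,p/q) = p$. The rational case, however, takes a different route.

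\textbf{The paper's route.} It computes $\mylcm(1,2q/p)$ by splitting on the parity of $p$: the value is $2q$ when $p$ is odd and $q$ when $p$ is even. This gives the exact values $\ln 2 \cdot V_J(2,1,p/q) = 1 - \frac{1}{2p}$ and $1 - \frac{1}{2q}$, respectively. The argument then only needs $p \geq 3$ or $q \geq 3$.

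\textbf{Your route.} You make explicit the identity that the non-$\mylcm$ terms sum to exactly $1$; the paper uses this only implicitly, e.g.\ in its irrational case. You then note that $[\frac{4}{3},\frac{3}{2}]$ is invariant under $\alpha \mapsto 2/\alpha$. Enumerating small numerators shows that only the endpoints have reduced numerator below $5$. You should also list numerator $1$ as impossible, though that is trivial since $1/q \leq 1$. This lets you bound the correction crudely by $\frac{2-\sqrt{2}}{5}$.

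\textbf{What each buys.} Your route avoids the parity bookkeeping entirely and is quick to check. The paper's route yields an exact closed form: the correction equals $\frac{1}{2}$ divided by whichever of the two reduced numerators (of $\alpha$ and of $2/\alpha$) is odd. That formula is valid for every rational $\alpha \in (1,2)$, so the same argument actually gives the bound on all of $(1,2)$. Your estimate is tied to this particular interval: on $(1,2)$ one only knows the numerators are at least $3$, and $\frac{2-\sqrt{2}}{3} > \frac{1}{6}$.
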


\subsection{Proof of Claim~\ref{clm:prop_interleaved_rounding_final}} \label{subsec:proof_clm_prop_interleaved_rounding_final}

\paragraph{Proof of items~\ref{item:final_prop_1} and~\ref{item:final_prop_2}.} Following the logic behind the proof of Claim~\ref{clm:prop_shifted}, we compute the expected values of the rounding operator $\myround^{\alpha}_{m,k, \Theta}(t)$ and its reciprocal by identifying their relationship to $t$ as a function of the random shift $\Theta$. By recycling our previous notation, let $B = m^{1/k}$ be the base expansion factor and let $\beta = \log_B(\alpha) \in [0,1)$ be the logarithmic offset. Moreover, let $L_t = \log_B(\frac{ t }{ T_{\min}^*} )$ be a mapping of $t > 0$ to the logarithmic scale. Once again, the rounded value $\myround^{\alpha}_{m,k, \Theta}(t)$ is given by the relative position of $L_t$ within a unit cycle, including an additional dependency on $\beta$, according to the next two cases:
\begin{enumerate}[label=(\Alph*), ref=(\Alph*)]
    \item \label{case:prop_interleaved_rounding_final_A} {\em When $L_t \leq \lfloor L_t \rfloor + \beta$}: Here, we have
    \[ \myround^{\alpha}_{m,k, \Theta}(t) ~~=~~ 
    \begin{cases} 
    t \cdot B^{\lfloor L_t \rfloor - L_t+ \beta + \Theta}, \qquad & \text{when } \Theta \in [0, L_t - \lfloor L_t \rfloor] \\
    t \cdot B^{\lfloor L_t \rfloor - L_t + \Theta}, & \text{when } \Theta \in (L_t - \lfloor L_t \rfloor, L_t - \lfloor L_t \rfloor + 1 - \beta] \\
    t \cdot B^{\lfloor L_t \rfloor - L_t -1 +\beta + \Theta}, & \text{when } \Theta \in (L_t - \lfloor L_t \rfloor + 1 - \beta,1]
    \end{cases} \]

    \item \label{case:prop_interleaved_rounding_final_B} {\em When $L_t > \lfloor L_t \rfloor + \beta$}: In this case,
    \[ \myround^{\alpha}_{m,k, \Theta}(t) ~~=~~ 
    \begin{cases} 
    t \cdot B^{\lfloor L_t \rfloor - L_t + 1 + \Theta}, \qquad & \text{when } \Theta \in [0, L_t - \lfloor L_t \rfloor - \beta] \\
    t \cdot B^{\lfloor L_t \rfloor - L_t + \beta + \Theta}, & \text{when } \Theta \in (L_t - \lfloor L_t \rfloor - \beta, L_t - \lfloor L_t \rfloor] \\
    t \cdot B^{\lfloor L_t \rfloor - L_t + \Theta}, & \text{when } \Theta \in (L_t - \lfloor L_t \rfloor, 1]
    \end{cases} \] 
\end{enumerate}

To determine the expected value of $\myround^\alpha_{m,k, \Theta}(t)$, we integrate these piecewise functions over $\Theta \in [0,1]$. Regardless of whether $L_t$ falls into case~\ref{case:prop_interleaved_rounding_final_A} or~\ref{case:prop_interleaved_rounding_final_B}, the resulting expression will be identical. We proceed by elaborating on case~\ref{case:prop_interleaved_rounding_final_A}, where
\begin{align*}
\exsub{ \Theta }{ \myround^\alpha_{m,k, \Theta}(t) } ~~=~~& \left( \int_{0}^{L_t - \lfloor L_t \rfloor} B^{\lfloor L_t \rfloor - L_t + \beta + \theta} \mathrm{d}\theta + \int_{L_t - \lfloor L_t \rfloor}^{L_t - \lfloor L_t \rfloor + 1 - \beta} B^{\lfloor L_t \rfloor - L_t + \theta} \mathrm{d}\theta \right. \\
& \qquad \qquad \left. \mbox{} + \int_{L_t - \lfloor L_t \rfloor + 1 - \beta}^{1} B^{\lfloor L_t \rfloor - L_t - 1 + \beta + \theta} \mathrm{d}\theta \right) \cdot t \\
~~=~~& \frac{1}{\ln B} \cdot \left( \left. B^{\lfloor L_t \rfloor - L_t + \beta + \theta} \right]_{0}^{L_t - \lfloor L_t \rfloor} + \left. B^{\lfloor L_t \rfloor - L_t + \theta} \right]_{L_t - \lfloor L_t \rfloor}^{L_t - \lfloor L_t \rfloor + 1 - \beta} \right. \\
& \qquad \qquad \left. \mbox{} + \left. B^{\lfloor L_t \rfloor - L_t - 1 + \beta + \theta} \right]_{L_t - \lfloor L_t \rfloor + 1 - \beta}^{1} \right) \cdot t \\
~~=~~ &\frac{B^{\beta} + B^{1-\beta} - 2}{\ln B} \cdot t \ .
\end{align*}
Substituting $B = m^{1/k}$ and $B^{\beta} = \alpha$ yields $\exsubpar{ \Theta }{ \myround^\alpha_{m,k, \Theta}(t) } = \frac{ k }{ \ln m } \cdot ( \alpha + \frac{ m^{1/k} }{\alpha} - 2 ) \cdot t$. Via an analogous calculation for case~\ref{case:prop_interleaved_rounding_final_B}, we arrive at the same result, concluding item~\ref{item:final_prop_1}.

For the reciprocal, by considering case~\ref{case:prop_interleaved_rounding_final_A}, it follows that
\begin{align*}
\exsub{ \Theta }{ \frac{1}{\myround^\alpha_{m,k, \Theta}(t)} } ~~=~~& \left( \int_{0}^{L_t - \lfloor L_t \rfloor} \frac{ \mathrm{d}\theta }{B^{\lfloor L_t \rfloor - L_t + \beta + \theta}}  + \int_{L_t - \lfloor L_t \rfloor}^{L_t - \lfloor L_t \rfloor + 1 - \beta} \frac{ \mathrm{d}\theta }{B^{\lfloor L_t \rfloor - L_t + \theta}}  \right. \\
& \qquad \qquad \left. \mbox{} + \int_{L_t - \lfloor L_t \rfloor + 1 - \beta}^{1} \frac{ \mathrm{d}\theta }{B^{\lfloor L_t \rfloor - L_t - 1 + \beta + \theta}}  \right) \cdot \frac{1}{t} \\
~~=~~& -\frac{1}{\ln B} \cdot \left( \left. B^{-(\lfloor L_t \rfloor - L_t + \beta + \theta)} \right]_{0}^{L_t - \lfloor L_t \rfloor} + \left. B^{-(\lfloor L_t \rfloor - L_t + \theta)} \right]_{L_t - \lfloor L_t \rfloor}^{L_t - \lfloor L_t \rfloor + 1 - \beta} \right. \\
& \qquad \qquad \left. \mbox{} + \left. B^{-(\lfloor L_t \rfloor - L_t - 1 + \beta + \theta)} \right]_{L_t - \lfloor L_t \rfloor + 1 - \beta}^{1} \right) \cdot \frac{1}{t} \\
~~=~~& \frac{2 - B^{-\beta} - B^{\beta - 1}}{\ln B} \cdot \frac{1}{t} \ .
\end{align*}
By plugging in $B = m^{1/k}$ and $B^{\beta} = \alpha$, we have $\exsubpar{ \Theta }{ \frac{ 1 }{ \myround^\alpha_{m,k, \Theta}(t) } } = \frac{ k }{ \ln m } \cdot ( 2 - \frac{1}{\alpha} - \frac{\alpha}{m^{1/k}} ) \cdot \frac{ 1 }{ t } $. An identical result is obtained for case~\ref{case:prop_interleaved_rounding_final_B}, thereby completing the proof of item~\ref{item:final_prop_2}.

\paragraph{Proof of item~\ref{item:final_prop_3}.} By specializing the notation of items~\ref{item:final_prop_1} and~\ref{item:final_prop_2} to $t = T_{\min}^*$, we observe that the anchor point $T_{\min}^*$ of our yet-unshifted grid ${\cal H}_{m,k}^{ \alpha }$ corresponds to a very specific setting, where $L_{ T_{\min}^* } = 0$. Consequently, the density coefficient ${\cal D}_{m,k,\Theta}^{ \alpha }$ can be expressed as a function of $\Theta$, depending on the shifted grid point to which $T_{\min}^*$ is $\myround^\alpha_{m,k, \Theta}$-rounded. The latter can be determined via the relation between $\Theta$ and the logarithmic offset $\beta = \log_B(\alpha)$ according to the following case disjunction: 
\[ {\cal D}_{m,k,\Theta}^{ \alpha } ~~=~~ 
\begin{cases} 
\frac{ 1 }{ T_{\min}^* \cdot B^\Theta } \cdot \Sigma_{m,k}^{ \alpha, \myprim}, \qquad & \text{when } \Theta \in [0, 1 - \beta] \\
\frac{ 1 }{ T_{\min}^* \cdot B^{\beta - 1 + \Theta} } \cdot \Sigma_{m,k}^{ \alpha, \mysecond}, & \text{when } \Theta \in (1 - \beta, 1]
\end{cases} \]
By integrating this functional dependency over $\Theta \in [0,1]$,
\begin{align*}
\exsub{ \Theta }{ {\cal D}_{m,k,\Theta}^{ \alpha } } ~~=~~& \left( \Sigma_{m,k}^{ \alpha, \myprim} \cdot \int_{0}^{1 - \beta} \frac{ \mathrm{d}\theta }{ B^{\theta} } + \Sigma_{m,k}^{ \alpha, \mysecond} \cdot \int_{1 - \beta}^{1} \frac{ \mathrm{d}\theta }{ B^{\beta - 1 + \theta} } \right) \cdot \frac{ 1 }{ T_{\min}^* } \\
~~=~~& \frac{ 1 }{ \ln B } \cdot \left( \Sigma_{m,k}^{ \alpha, \myprim} \cdot \left( 1 - B^{-(1-\beta)} \right) + \Sigma_{m,k}^{ \alpha, \mysecond} \cdot \left( 1 - B^{-\beta} \right) \right) \cdot \frac{ 1 }{ T_{\min}^* } \ .
\end{align*}
As before, substituting $B = m^{1/k}$ and $\beta = \log_B(\alpha)$ leads to the expression in item~\ref{item:final_prop_3}.
\section{Concluding Remarks} \label{sec:conclusions}

To conclude, we provide a broader perspective on our main findings and discuss promising avenues for future research. Specifically, we explore the extensibility of our algorithmic framework to upward-closed constraint structures, outline how shifted power-of-$m^{1/k}$ policies can be efficiently derandomized, and highlight fundamental questions regarding the approximability of resource-constrained joint replenishment.

\paragraph{Upward-closed constraints.} A key feature of the rounding methodology presented in Sections~\ref{sec:unshifted}--\ref{sec:Interleaved} is that its performance guarantees are not strictly tied to the linear-in-reciprocal structure of resource consumption. Our feasibility argument relies on making use of expansive rounding operators, which ensure that any upward-closed constraint remains satisfied, allowing extensions to variants of the resource-constrained joint replenishment problem whose feasible region of time intervals is upward-closed. The primary requirement for this extension is the ability to solve the convex relaxation~\eqref{eqn:conv_relax_RCJRP} to optimality in polynomial time. As long as the family of constraints in question allows for efficient optimization, capturing practical operational realities such as minimum order quantities or nested storage limitations via second-order cones or other well-behaved convex sets, all established approximation ratios remain valid. This degree of flexibility indicates that power-of-$m^{1/k}$ policies can be applied in a much wider array of production and distribution settings than initially described.

\paragraph{Derandomization.}
Even though most of our results are derived through randomized shifting, these policies can be efficiently derandomized to produce deterministic counterparts with identical performance guarantees. Indeed, since the rounded time interval of each commodity is proportional to $m^{\Theta/k}$, within any interval of $\Theta$ where the resulting grid points are fixed, the terms we utilize to upper-bound operational cost functions such as $F(T^{m,k,\alpha,\Theta})$ are proportional to $m^{-\Theta/k}$ and $m^{\Theta/k}$, representing ordering and holding costs, respectively. This structure implies that our bounds are piecewise-convex with a polynomial number of critical breakpoints where the rounding operator $\myround$ transitions to the next grid point. By identifying these $O(n k)$ values and computing the local minimum within each convex piece via standard calculus arguments, the best-possible shift can be determined in polynomial time. This approach ensures that the $\frac{5}{6\ln 2}$-approximation stated in Theorem~\ref{thm:main_result} is achievable without any reliance on stochasticity. Consequently, our approach maintains its practical utility for real-world inventory systems that require predictable ordering schedules.

\paragraph{APX-hardness.} While our work represents a substantial improvement on the classical approximation guarantee of $\frac{ 1 }{ \ln 2 } \approx 1.4427$, a captivating research direction would investigate whether resource-constrained joint replenishment is APX-hard or not. Namely, is there an absolute constant $\delta > 0$ for which it is NP-hard to approximate this problem within factor $1+\delta$? In contrast to the simpler case of $D=O(1)$ resources, recently proven to admit an $O( n^{ \tilde{O}( D^3 / \eps^4 ) } )$-time approximation scheme \citep{Segev25improved}, the computational status of arbitrarily many resources is still unresolved. Establishing either APX-hardness results or approximation schemes in this context would answer a long-standing open question in the field of lot-sizing and inventory coordination.

% --- Bibliography ---
\phantomsection
\addcontentsline{toc}{section}{References}
\bibliographystyle{plainnat}
\bibliography{BIB-JRP}

% --- Appendix ---
\addtocontents{toc}{\protect\setcounter{tocdepth}{2}} % Stop TOC listing for Appendix
\appendix
% MORE PROOFS FROM PROOF OF CONCEPT SECTION
\section{Additional Proofs}

\subsection{Proof of Claim~\ref{clm:optimal_base_unshift}} \label{app:proof_clm_optimal_base_unshift}

Let $V(m, k) = \max \{ V_J(m,k), V_H(m,k) \}$ be the objective function we wish to minimize, subject to $m \in \bbZ_{\geq 2}$ and $k \in \bbZ_{\geq 1}$, where $V_J(m, k) = ( 1 - \frac{1}{m} ) \cdot \frac{1}{m^{1/k}-1}$ and $V_H(m,k) = m^{1/k}$. To this end, standard calculus arguments show that $\min_{x \in \bbR_{++}} V(m,x) = \frac{ 1 }{ 2 }(1 + \sqrt{5 - \frac{ 4 }{ m }})$ for every $m \geq 1$, which is an increasing function of $m$. Moreover, $V(m, x)$ is decreasing in $x$ over the interval $(0,x^*_m]$ and increasing over $[x^*_m, \infty)$, where $x^*_m = \log_{\frac{1 + \sqrt{5 - 4/m}}{2}} (m)$ is the minimizer of this function. We proceed by considering two cases:
\begin{itemize}
    \item {\em When $m = 2$}: Here, $x^*_2 \approx 2.2223$, meaning that the candidate integer values for minimizing $V(2,k)$ are $k = 2$ and $k=3$. Since $V(2,2) < V(2,3)$, the optimal choice for $m=2$ is $k=2$.

    \item {\em When $m \geq 3$}: In this case, for every $k \in \bbZ_{\geq 1}$, 
    \[ V(m, k) ~~\geq~~ \min_{x \in \bbR_{++}} V(3,x) ~~=~~ \frac{ 1 }{ 2 } \left( 1 + \sqrt{5 - \frac{ 4 }{ 3 }} \right) ~~>~~ \sqrt{2} ~~=~~ V(2,2) \ . \]
    Therefore, there is no choice of $k$ outperforming the $(2,2)$ combination of the previous item. 
\end{itemize}

\subsection{Proof of Claim~\ref{clm:optimal_base_shifted}} \label{app:proof_clm_optimal_base_shifted}

Let $V(m, k) = \max \{ V_J(m, k), V_H(m, k) \}$ be the objective function we wish to minimize, subject to $m \in \bbZ_{\geq 2}$ and $k \in \bbZ_{\geq 1}$, where $V_J(m, k) = ( 1 - \frac{ 1 }{ m } ) \cdot \frac{ k }{ \ln m }$ and $V_H(m, k) = \frac{ k(m^{1/k}-1)}{ \ln m }$. One can verify that $V(3,2) = \frac{2(\sqrt{3}-1)}{\ln 3} < 1.3327$, and to prove that this value is the global minimum, we show that $V(m,k) \geq V(3,2)$ for any feasible integer pair.

To this end, letting $x = \frac{\ln m}{k} > 0$, we can rewrite our holding cost multiplier as 
\[ V_H(m, k) ~~=~~ \frac{ k(m^{1/k}-1)}{ \ln m } ~~=~~ \frac{ k(e^{\frac{\ln m}{k}}-1)}{ \ln m } ~~=~~ \frac{ e^x - 1}{ x } \ . \]
The function $\psi(x) = \frac{ e^x - 1}{ x }$ is strictly increasing over $(0,\infty)$, with $\psi( \frac{\ln 3}{2} ) = V(3,2)$. As such, for a pair of integers $m \in \bbZ_{\geq 2}$ and $k \in \bbZ_{\geq 1}$ to satisfy $V(m,k) < V(3,2)$, we must have $\frac{\ln m}{k} < \frac{\ln 3}{2}$. By applying this necessary condition to our joint ordering function, it follows that $V_J(m, k) = ( 1 - \frac{ 1 }{ m } ) \cdot \frac{ k }{ \ln m } > ( 1 - \frac{ 1 }{ m } ) \cdot \frac{2}{\ln 3}$. Given these observations, we proceed by considering three cases, depending on the value of $m$:
\begin{itemize}
    \item {\em When $m \geq 4$}: In this case, $V_J(m, k) > ( 1 - \frac{ 1 }{ 4 } ) \cdot \frac{2}{\ln 3} > 1.36 > V(3,2)$.

    \item {\em When $m = 3$}: Excluding the configuration $(3,2)$, for the boundary case of $k = 1$, we have $V_H(3,1) = \frac{2}{\ln 3} > 1.82 > V(3,2)$. Then, pairs of the form $(3,k)$ with $k \geq 3$ do not improve the objective value either, since $V_J(3,k) = \frac{2k}{3\ln 3} \geq \frac{2}{\ln 3} > 1.82 > V(3,2)$.

    \item {\em When $m = 2$}: Here, for $k = 1$, we get $V_H(2,1) = \frac{1}{\ln 2} > 1.44 > V(3,2)$. Similarly, for $k \geq 2$, our joint ordering multiplier becomes $V_J(2,k) = \frac{k}{2\ln 2} \geq \frac{1}{\ln 2} > 1.44 > V(3,2)$.
\end{itemize}

\subsection{Proof of Claim~\ref{clm:LB_V_J_irrational}} \label{app:proof_clm_LB_V_J_irrational}

Let us observe that, in terms of $B$, the expected density coefficient can be written as 
\begin{equation} \label{eqn:clm_LB_V_J_irrational_1}
V_J(B,\alpha) ~~=~~ \frac{ 1 }{ \ln B } \cdot \left( \left( 1 - \frac{\alpha}{ B } \right) \cdot \Sigma_{B}^{ \alpha, \myprim} + \left( 1 - \frac{1}{\alpha} \right) \cdot \Sigma_{B}^{ \alpha, \mysecond} \right) \ ,
\end{equation}
where $\Sigma_{B}^{ \alpha, \myprim} = \sum_{\emptyset \neq {\cal N} \subseteq {\cal S}_{B}^{ \alpha, \myprim}} \frac{ (-1)^{ |{\cal N}| + 1} }{ \mylcm({\cal N}) }$ and $\Sigma_{B}^{ \alpha, \mysecond} = \sum_{\emptyset \neq {\cal N} \subseteq {\cal S}_{B}^{ \alpha, \mysecond}} \frac{ (-1)^{ |{\cal N}| + 1} }{ \mylcm({\cal N}) }$. To evaluate these inclusion-exclusion summations, we partition the domain of $\alpha \in (1, B)$ into three cases, based on its commensurability with integer powers of $B$.

\paragraph{Case 1: $\alpha$ is rational.} Let us write $\alpha = \frac{ p }{ q }$ as a rational fraction in lowest terms, with $p > q \geq 1$. In regard to $\Sigma_{B}^{ \alpha, \myprim}$, by recalling that ${\cal S}_B^{ \alpha, \myprim } = \{ B^{\kappa-1} \}_{\kappa \in [k]} \cup \{ \alpha B^{\kappa-1} \}_{\kappa \in [k]}$ and that $B$ is irrational, for every subset $\emptyset \neq {\cal N} \subseteq {\cal S}_{B}^{ \alpha, \myprim}$ we have $\mylcm( {\cal N} ) < \infty$ only when ${\cal N}$ is either a singleton or a cross-grid pair of the form ${\cal N} = \{ B^{\kappa-1}, \alpha B^{\kappa-1} \}$, for some $\kappa \in [k]$. Therefore, letting $\Sigma_k = \sum_{\kappa \in [k]} \frac{ 1 }{ m^{(\kappa-1)/k} } = \frac{ 1 - 1/m }{ 1 - 1/B }$, it follows that $\Sigma_{B}^{ \alpha, \myprim} = (1 + \frac{1}{\alpha} - \frac{1}{ \mylcm(1,\alpha)}) \cdot \Sigma_k = (1 + \frac{q}{p} - \frac{1}{ p}) \cdot \Sigma_k$. By contrast, in relation to $\Sigma_{B}^{ \alpha, \mysecond} $, one has ${\cal S}_B^{ \alpha, \mysecond } = \{ B^{\kappa-1} \}_{\kappa \in [k]} \cup \{ \frac{1}{\alpha} B^{\kappa} \}_{\kappa \in [k]}$. Crucially, since $B$ is irrational, for every subset $\emptyset \neq {\cal N} \subseteq {\cal S}_B^{ \alpha, \mysecond }$, we have $\mylcm( {\cal N} ) < \infty$ only when ${\cal N}$ is a singleton. Indeed, since $\{ \frac{1}{\alpha} B^{\kappa} \}_{\kappa \in [k]}$ is a $\frac{ B }{\alpha}$-scaling of $\{ B^{\kappa-1} \}_{\kappa \in [k]}$, with $\frac{ B }{\alpha}$ being irrational, all cross-grid subsets ${\cal N}$ have $\mylcm({\cal N}) =\infty$. As a result, $\Sigma_{B}^{ \alpha, \mysecond} = (1 + \frac{\alpha}{B}) \cdot \Sigma_k = (1 + \frac{p}{qB} ) \cdot \Sigma_k$. By substituting these terms into~\eqref{eqn:clm_LB_V_J_irrational_1},
\begin{align*} 
V_J(B,\alpha) ~~=~~ & \frac{ 1 }{ \ln B } \cdot \left( \left( 1 - \frac{\alpha}{ B } \right) \cdot \left( 1 + \frac{q}{p} - \frac{ 1 }{ p } \right) + \left( 1 - \frac{1}{\alpha} \right) \cdot \left( 1 + \frac{p}{qB} \right) \right) \cdot \Sigma_k \\
~~=~~ & \frac{ 1 }{ \ln B } \cdot \left( 2 \cdot \left( 1 - \frac{1}{B} \right) - \frac{qB - p}{pqB} \right) \cdot \Sigma_k \\
~~=~~ & \frac{ 1-1/m }{ \ln B } \cdot \left( 2 - \frac{B - p/q}{p(B-1)} \right) \\
~~\geq~~ & \frac{ 3(1-1/m) }{ 2\ln B } \ ,
\end{align*}
where the last inequality holds since $\alpha = \frac{ p }{ q } \in (1, B)$, implying that $\frac{ p }{ q } > 1$ and $p \geq 2$.

\paragraph{Case 2: $\alpha$ is irrational, and $\mylcm( \alpha, B^{\kappa}) = \infty$ for every $\kappa \in \bbZ$.} In this case, the ratio of any element in the secondary grid ${\cal G}_B^{\alpha}$ to any element in the primary grid ${\cal G}_B$ is irrational, implying that the cross-grid $\mylcm$ intersections vanish in both $\Sigma_{B}^{ \alpha, \myprim}$ and $\Sigma_{B}^{ \alpha, \mysecond}$. As such, $\Sigma_{B}^{ \alpha, \myprim} = (1 + \frac{1}{\alpha}) \cdot \Sigma_k$ and $\Sigma_{B}^{ \alpha, \mysecond} = (1 + \frac{\alpha}{B}) \cdot \Sigma_k$. By substituting these terms into~\eqref{eqn:clm_LB_V_J_irrational_1},
\begin{align*} 
V_J(B,\alpha) ~~=~~ & \frac{ 1 }{ \ln B } \cdot \left( \left( 1 - \frac{\alpha}{ B } \right) \cdot \left( 1 + \frac{1}{\alpha} \right) + \left( 1 - \frac{1}{\alpha} \right) \cdot \left( 1 + \frac{\alpha}{B} \right) \right) \cdot \Sigma_k \\
~~=~~ & \frac{ 2 }{ \ln B } \cdot \left( 1 - \frac{1}{ B } \right) \cdot \Sigma_k \\
~~=~~ & \frac{ 2(1-1/m) }{ \ln B } \ .
\end{align*}

\paragraph{Case 3: $\alpha$ is irrational, and $\mylcm( \alpha, B^{\kappa}) < \infty$ for some $\kappa \in \bbZ$.} Here, since $B^k = m$ is an integer, there exist integers $p$, $q$, and $1 \leq \kappa \leq k-1$ for which $\alpha = \frac{ p }{ q } \cdot B^{\kappa}$. Without loss of generality, we assume that $p$ and $q$ are coprime. Additionally, the maximal overlap between the primary and secondary grids occurs when $\kappa = 1$, and we therefore lower-bound the worst-case scenario, where $\alpha = \frac{ p }{ q } \cdot B$. Specifically, by employing the second-order Bonferroni inequality, $\Sigma_{B}^{ \alpha, \myprim} \geq (1 + \frac{1}{\alpha} - \frac{1}{ \mylcm(1,\alpha/B)}) \cdot \Sigma_k = (1 + \frac{q}{pB} - \frac{1}{ p}) \cdot \Sigma_k$, and similarly, $\Sigma_{B}^{ \alpha, \mysecond} \geq (1 + \frac{p}{q} - \frac{1}{ q}) \cdot \Sigma_k$. By substituting these terms into~\eqref{eqn:clm_LB_V_J_irrational_1}, 
\begin{align*} 
V_J(B,\alpha) ~~\geq~~ & \frac{ 1 }{ \ln B } \cdot \left( \left( 1 - \frac{p}{ q } \right) \cdot \left( 1 + \frac{q}{pB} - \frac{ 1 }{ p } \right) + \left( 1 - \frac{q}{pB} \right) \cdot \left( 1 + \frac{p}{q} - \frac{ 1 }{ q } \right) \right) \cdot \Sigma_k \\
~~=~~ & \frac{ 1 }{ \ln B } \cdot \left( 1 - \frac{1}{B} \right) \cdot \left( 2 - \frac{1}{p} \right) \cdot \Sigma_k \\
~~=~~ & \frac{ 1-1/m }{ \ln B } \cdot \left( 2 - \frac{1}{p} \right) \\
~~\geq~~ & \frac{ 3(1-1/m) }{ 2\ln B }\ ,
\end{align*}
where the last inequality holds since $p \geq 2$. Otherwise, $p=1$, meaning that $\alpha = \frac{ B }{ q }$ for some integer $q$, which contradicts the hypothesis of Claim~\ref{clm:LB_V_J_irrational}, stating that $\alpha \notin \{ \frac{ B }{ q } : q \in \bbN \}$.

\subsection{Proof of Claim~\ref{clm:LB_2_1}} \label{app:proof_clm_LB_2_1}

Within the interval $[\frac{4}{3}, \frac{3}{2}]$, the right endpoint $\alpha = \frac{ 3 }{ 2 }$ takes us back to our original configuration, for which we already know that $V_J(2,1,\frac{3}{2}) = \frac{ 5 }{ 6 \ln 2}$. Additionally, the left endpoint $\alpha = \frac{ 4 }{ 3 }$ produces an identical objective value, since  $V_J(2,1,\frac{4}{3}) = V_H(2,1,\frac{4}{3}) = \frac{ 5 }{ 6 \ln 2}$. In what follows, we show that all interior points $\alpha \in (\frac{4}{3}, \frac{3}{2})$ do not improve on $\frac{ 5 }{ 6 \ln 2}$. To this end, by exploiting the notation introduced in Appendix~\ref{app:proof_clm_LB_V_J_irrational}, our analysis will evaluate the inclusion-exclusion summations $\Sigma_2^{ \alpha, \myprim} = \sum_{\emptyset \neq {\cal N} \subseteq {\cal S}_2^{ \alpha, \myprim}} \frac{ (-1)^{ |{\cal N}| + 1} }{ \mylcm({\cal N}) }$ and $\Sigma_2^{ \alpha, \mysecond} = \sum_{\emptyset \neq {\cal N} \subseteq {\cal S}_2^{ \alpha, \mysecond}} \frac{ (-1)^{ |{\cal N}| + 1} }{ \mylcm({\cal N}) }$ in two distinct cases, depending on whether $\alpha$ is rational or not. 

\paragraph{Case 1: $\alpha$ is irrational.} In this case, the ratio of any element in the secondary grid ${\cal G}_2^{\alpha} = \{ {\alpha} 2^p \cdot T_{\min}^* : p \in \bbZ \}$ to any element in the primary grid ${\cal G}_2 = \{ 2^p \cdot T_{\min}^* : p \in \bbZ \}$ is irrational, implying that the cross-grid $\mylcm$ intersections vanish in both $\Sigma_2^{ \alpha, \myprim}$ and $\Sigma_2^{ \alpha, \mysecond}$. As such, $\Sigma_2^{ \alpha, \myprim} = 1 + \frac{1}{\alpha}$ and $\Sigma_2^{ \alpha, \mysecond} = 1 + \frac{\alpha}{2}$. By substituting these terms into~\eqref{eqn:clm_LB_V_J_irrational_1},
\[ V_J(2,1,\alpha) ~~=~~ \frac{ 1 }{ \ln 2 } \cdot \left( \left( 1 - \frac{\alpha}{ 2 } \right) \cdot \left( 1 + \frac{1}{\alpha} \right) + \left( 1 - \frac{1}{\alpha} \right) \cdot \left( 1 + \frac{\alpha}{2} \right) \right) ~~=~~ \frac{ 1 }{ \ln 2 } ~~\geq~~ \frac{ 5 }{ 6\ln 2} \ . \]

\paragraph{Case 2: $\alpha$ is rational.} Let us write this offset as a reduced fraction $\alpha = \frac{p}{q}$, where $p$ and $q$ are coprime integers. As such, the primary and secondary base multipliers for the interleaved grid can be expressed as ${\cal S}_2^{ \alpha, \myprim} = \{1, \frac{p}{q}\}$ and ${\cal S}_2^{ \alpha, \mysecond} = \{1, \frac{2q}{p}\}$, implying that $\Sigma_2^{ \alpha, \myprim} = 1 + \frac{q}{p} - \frac{ 1 }{ \mylcm(1, p/q) } = 1 + \frac{q}{p} - \frac{ 1 }{ p }$ and $\Sigma_2^{ \alpha, \mysecond} = 1 + \frac{p}{2q} - \frac{ 1 }{ \mylcm(1, 2q/p) }$. Since the latter LCM-term depends on the parity of $p$, we proceed by examining two subcases.

\paragraph{Subcase 2A: $p$ is odd.} In this case, $\mylcm(1, 2q/p) = 2q$, and by substituting $\Sigma_2^{ \alpha, \myprim}$ and $\Sigma_2^{ \alpha, \mysecond}$ into~\eqref{eqn:clm_LB_V_J_irrational_1},
\begin{align*}
V_J \left( 2,1,\frac{p}{q} \right) ~~=~~ & \frac{ 1 }{ \ln 2 } \cdot \left( \left( 1 - \frac{p}{ 2q } \right) \cdot \left( 1 + \frac{q}{p} - \frac{ 1 }{ p} \right) + \left( 1 - \frac{q}{p} \right) \cdot \left( 1 + \frac{p}{2q} - \frac{ 1 }{ 2q} \right) \right) \\
~~=~~ & \frac{ 1 }{ \ln 2 } \cdot \left( 1 - \frac{1}{2p} \right) \\
~~\geq~~ & \frac{ 5 }{ 6\ln 2 } \ .
\end{align*}
Here, the last inequality follows by observing that $p \geq 3$. Indeed, this parameter is odd, and we cannot have $p=1$ or otherwise $\alpha = \frac{ 1 }{ q } \notin (\frac{4}{3}, \frac{3}{2})$.

\paragraph{Subcase 2B: $p$ is even.} Suppose that $p = 2r$. Then, since $p$ and $q$ are coprime, $r$ and $q$ must be coprime as well, meaning that $\mylcm(1, 2q/p) = \mylcm(1, q/r) = q$. By substituting $\Sigma_2^{ \alpha, \myprim}$ and $\Sigma_2^{ \alpha, \mysecond}$ into~\eqref{eqn:clm_LB_V_J_irrational_1},
\begin{align*}
V_J \left( 2,1,\frac{p}{q} \right) ~~=~~ & \frac{ 1 }{ \ln 2 } \cdot \left( \left( 1 - \frac{p}{ 2q } \right) \cdot \left( 1 + \frac{q}{p} - \frac{ 1 }{ p} \right) + \left( 1 - \frac{q}{p} \right) \cdot \left( 1 + \frac{p}{2q} - \frac{ 1 }{ q} \right) \right) \\
~~=~~ & \frac{ 1 }{ \ln 2 } \cdot \left( 1 - \frac{1}{2q} \right) \\
~~\geq~~ & \frac{ 5 }{ 6\ln 2 } \ ,
\end{align*}
where the last inequality holds since $q \geq 3$. To verify this claim, note that since $p$ and $q$ are coprime, and since $p$ is even by the subcase hypothesis, $q$ must be odd. In addition, we cannot have $q = 1$ or otherwise $\alpha = p \notin (\frac{4}{3}, \frac{3}{2})$.

\end{document}